\documentclass[11pt]{article}
\synctex=1 
\usepackage[margin=2.1cm]{geometry}

\usepackage{amsfonts,amssymb,amsmath,amsthm,dsfont,bm}
\usepackage[footnotesize]{caption}
\usepackage{graphicx}
\usepackage[linesnumbered,ruled,lined,boxed,commentsnumbered]{algorithm2e} 
\usepackage{mathtools} 
\usepackage{subfig}
\usepackage{hyperref}
\usepackage{url} 
\usepackage[dvipsnames]{xcolor}
\usepackage{cite}
\usepackage{pgfplots}
\usepackage{filecontents}
\usepackage{csvsimple}

\usepackage{cancel}

\title{Breaking the waves: asymmetric random periodic features\\
  for low-bitrate kernel machines}

\author{Vincent Schellekens$^*$ and Laurent Jacques\thanks{E-mail: {\em \{vincent.schellekens,~laurent.jacques\}@uclouvain.be}. ISPGroup, ELEN/ICTEAM, UCLouvain (UCL), B1348 Louvain-la-Neuve, Belgium. VS and LJ are funded by Belgian National Science Foundation (F.R.S.-FNRS).}}

\newcommand{\scp}[2]{\langle #1, #2 \rangle}

\newcommand{\Zbb}{\mathbb{Z}}
\newtheorem{theorem}{Theorem}
\newtheorem{definition}{Definition}
\newtheorem{proposition}{Proposition}
\newtheorem{corollary}{Corollary}
\newtheorem{lemma}{Lemma}
\newtheorem*{remark}{Remark}
\newtheorem{example}{Example}

\newcommand{\supp}{{\rm supp}\,}

\newcommand{\sign}{{\rm sign}\,}

\newcommand{\ud}{\mathrm{d}}

\renewcommand{\leq}{\leqslant}
\renewcommand{\geq}{\geqslant}

\newcommand{\iid}{%
  \ifmmode
  \mathrm{i.i.d.}%
  \else%
  i.i.d.\@\xspace%
  \fi%
}

\newcommand{\Id}{\bs I}

\newcommand{\bb}{\mathbb}

\newcommand{\ts}{\textstyle}

\newcommand{\bs}{\boldsymbol}
\newcommand{\cl}{\mathcal}
\newcommand{\ie}{\emph{i.e.}, }
\newcommand{\pd}{{p.d.}\xspace}
\newcommand{\eg}{\emph{e.g.}, }

\newcommand{\wlogg}{\emph{w.l.o.g.}\xspace}

\setlength{\parskip}{0.1cm}

\renewcommand{\Vec}[1]{\bs{#1}} 
\newcommand{\distiid}{\sim_{\iid}} 
\newcommand{\expec}[1]{\mathop{{}\mathbb{E}}_{#1}} 

\newcommand{\im}{\mathrm{i}\mkern1mu} 
\DeclarePairedDelimiterX{\norm}[1]{\lVert}{\rVert}{#1} 
\newcommand{\Integr}[4]{\int_{#1}^{#2}#3\,\ud #4} 


\newcommand{\wt}{\widetilde}
\newcommand{\wh}{\widehat}

\newcommand{\pf}{\mathrm{PF}}
\newcommand{\sik}{\kappa^{\scriptscriptstyle \Delta}} 
\newcommand{\dsik}{\dot{\kappa}^{\scriptscriptstyle \Delta}} 



\makeatletter
\newcommand{\RemoveAlgoNumber}{\renewcommand{\fnum@algocf}{\AlCapSty{\AlCapFnt\algorithmcfname}}}
\newcommand{\RevertAlgoNumber}{\algocf@resetfnum}
\makeatother

\begin{document}

\maketitle

\begin{abstract}
  Many signal processing and machine learning applications are built from evaluating a kernel on pairs of signals, \eg to assess the similarity of an incoming query to a database of known signals. This nonlinear evaluation can be simplified to a linear inner product of the random Fourier features of those signals: random projections followed by a periodic map, the complex exponential. It is known that a simple quantization of those features (corresponding to replacing the complex exponential by a different periodic map that takes binary values, which is appealing for their transmission and storage), distorts the approximated kernel, which may be undesirable in practice. Our take-home message is that when the features of only one of the two signals are quantized, the original kernel is recovered without distortion; its practical interest appears in several cases where the kernel evaluations are asymmetric by nature, such as a client-server scheme.
  
  Concretely, we introduce the general framework of \emph{asymmetric random periodic features}, where the two signals of interest are observed through random periodic features---random projections followed by a general periodic map, which is allowed to be different for both signals. We derive the influence of those periodic maps on the approximated kernel, and prove uniform probabilistic error bounds holding for all pair of signals picked in an infinite low-complexity set. Interestingly, our results allow the periodic maps to be discontinuous, thanks to a new mathematical tool, \ie the mean Lipschitz smoothness. We then apply this generic framework to semi-quantized kernel machines (where only one of the signals has quantized features and the other has classical random Fourier features), for which we show theoretically that the approximated kernel remains unchanged (with the associated error bound), and confirm the power of the approach with numerical simulations. 
\end{abstract}

\section{Introduction}
\label{sec:introduction}

Rather than to directly process high-dimensional signals, it is often more efficient to first summarize them to their main \emph{features}. This assumes that these capture essential information for the considered processing, such as the proximity of any pair of signals. Mathematically, the signal summarization is modeled by a feature map $\bs \varphi$ from the signal space $\Sigma$ to the feature (or embedding) space $\cl E$. This map $\Vec{\varphi}$ transforms the representation of signals while encoding some aspects of their geometry; loosely speaking, this can be written as $\cl D_{\cl E}(\Vec{\varphi}(\Vec{x}),\Vec{\varphi}(\Vec{y}))\approx \cl D_{\Sigma}(\Vec{x},\Vec{y})$ for any pair of signals $\Vec{x},\Vec{y} \in \Sigma$, where $\cl D_{\Sigma}$ is the preserved geometric quantity (such as an inner product or a distance), and $\cl D_{\cl E}$ is an evaluation procedure acting only on the signal features. This approach is useful whenever the features $\Vec{\varphi}(\Vec{x})$ are easier to process with respect to some critical computational resource (\eg memory usage, computing time)---often at the price of an approximation error (as suggested by the approximation symbol $\cl D_{\cl E} \approx \cl D_{\Sigma}$ above). The map $\Vec{\varphi}$ is most often than not a randomized function (drawn from a distribution). There are essentially three main ways to save computational resources with features: \textit{(i)}~leveraging \emph{dimensionality reduction} (\ie $\Vec{\varphi}(\Vec{x})$ is encoded by much less coefficients than the dimension of $\Vec{x}$), such as in compressive sensing techniques~\cite{foucart2017mathematical}, where typically the ``embedding'' $\Vec{\varphi}$ is linear and $\cl D_{\Sigma}$ and $\cl D_{\cl E}$ are the Euclidean distances; \textit{(ii)}~using mappings that \emph{linearize} the evaluation of an otherwise nonlinear quantity, such as random Fourier features (RFF)~\cite{Rahimi2008RFF}, where $\cl D_{\Sigma}$ is a kernel $\kappa$ and $\cl D_{\cl E}$ is simply the inner product; and \textit{(iii)}~\emph{quantizing features}, where $\Vec{\varphi}(\Vec{x})$ produces a quantized output that can be encoded with a highly reduced bitrate compared to the initial signal, allowing reducing the memory and/or transmission load, as well as paving the way for hardware-based procedures. This quantization objective is often combined with either \textit{(i)}, as in quantized compressive sensing~\cite{boufounos2015quantization,gunturk2013sobolev,dirksen2019quantized}, or \textit{(ii)}, as in one-bit universal embeddings~\cite{boufounos2015universal}. As made clear below, our contributions also target the combination of \textit{(ii)} with \textit{(iii)}.

In all those applications, it is almost always assumed that the available features for the two signals $\Vec{x},\Vec{y} \in \Sigma$ come from the \emph{same} feature map $\Vec{\varphi}$ (we say the features are \emph{symmetric}). However, we can legitimately wonder if removing this assumption, \ie accessing the signals through features $\Vec{\varphi}(\Vec{x})$ and $\Vec{\psi}(\Vec{y})$, where we have the freedom to set $\Vec{\varphi} \neq \Vec{\psi}$, can further reduce specific computational resources. The practical interest of this relaxation---that we call \emph{asymmetric features}---arises when the two signals come from different sources (\ie when the setting is intrinsically asymmetric): for example, those sources might have different resources (such as memory or power) at their disposal, and will therefore benefit differently from techniques \textit{(i)-(iii)}.

In this work, we are interested in asymmetric features for linearizing kernel estimations, as explained in \textit{(ii)}. Anticipating over the detailed description of Sec.~\ref{sec:background}, we work with \emph{random periodic features}, $\Vec{\varphi}(\Vec{x}) := f(\bs\Omega^T\Vec{x} + \Vec{\xi})$ and $\Vec{\psi}(\Vec{y})  :=  g(\bs\Omega^T\Vec{y} + \Vec{\xi})$ with $\bs\Omega$ a random projection matrix, $\Vec{\xi}$ a random dither, and $f, g$ two periodic functions. We thus generalize the context of random Fourier features \cite{Rahimi2008RFF}, where $f(\cdot) = g(\cdot) = \exp(\im \cdot)$ (the complex exponential), by ``\emph{breaking the waves}'' with possibly discontinuous, distinct functions $f$ and $g$ (as described in Sec.~\ref{sec:nonasymptotic}). We show how those features can be used to approximate shift-invariant kernels $\kappa$, \ie $\langle \Vec{\varphi}(\Vec{x}),\Vec{\psi}(\Vec{y}) \rangle \approx \kappa(\Vec{x},\Vec{y})$, in expectation over the random quantities $\bs\Omega$, $\Vec{\xi}$. Our motivating use-case is to combine this approach with harsh quantization of some features, objective \textit{(iii)}, as we explain in the next paragraph. However, all our developments are generic, and of interest for any machine learning algorithm that processes or takes decisions from the local geometry of data.

\paragraph{Semi-binary kernel machines as a motivating application:}

If one of the periodic functions incorporates the quantization of the feature vector (say, the one-bit universal quantization, or square wave function $q: \bb R \rightarrow \{0,1\}$~\cite{Boufounos2013efficientCodingQuantized}; see Fig.~\ref{fig:univquant}), then that feature vector can be stored or transmitted (or both) much more efficiently than the usual (infinite-precision) random Fourier features. Consider for example a machine learning context, where a kernel method~\cite{scholkopf2002learning} such as a Support Vector Machine (SVM) \cite{boser1992training} has been trained in advance on some dataset $X = \{\Vec{x}_i\}_{i=1}^n \subset \Sigma$. To actually use this model for prediction on a new signal $\Vec{x}' \in \Sigma$, the physical device that records it must either  communicate with a server where the inference is performed remotely (Fig.~\ref{fig:intro}a), or implement this model directly (Fig.~\ref{fig:intro}b); in either case, this is an expensive operation whenever this device is under tight computational resources constraints, and quantization of feature vectors is potentially very helpful.

\begin{figure}
  \centering
  \includegraphics[width=0.75\linewidth]{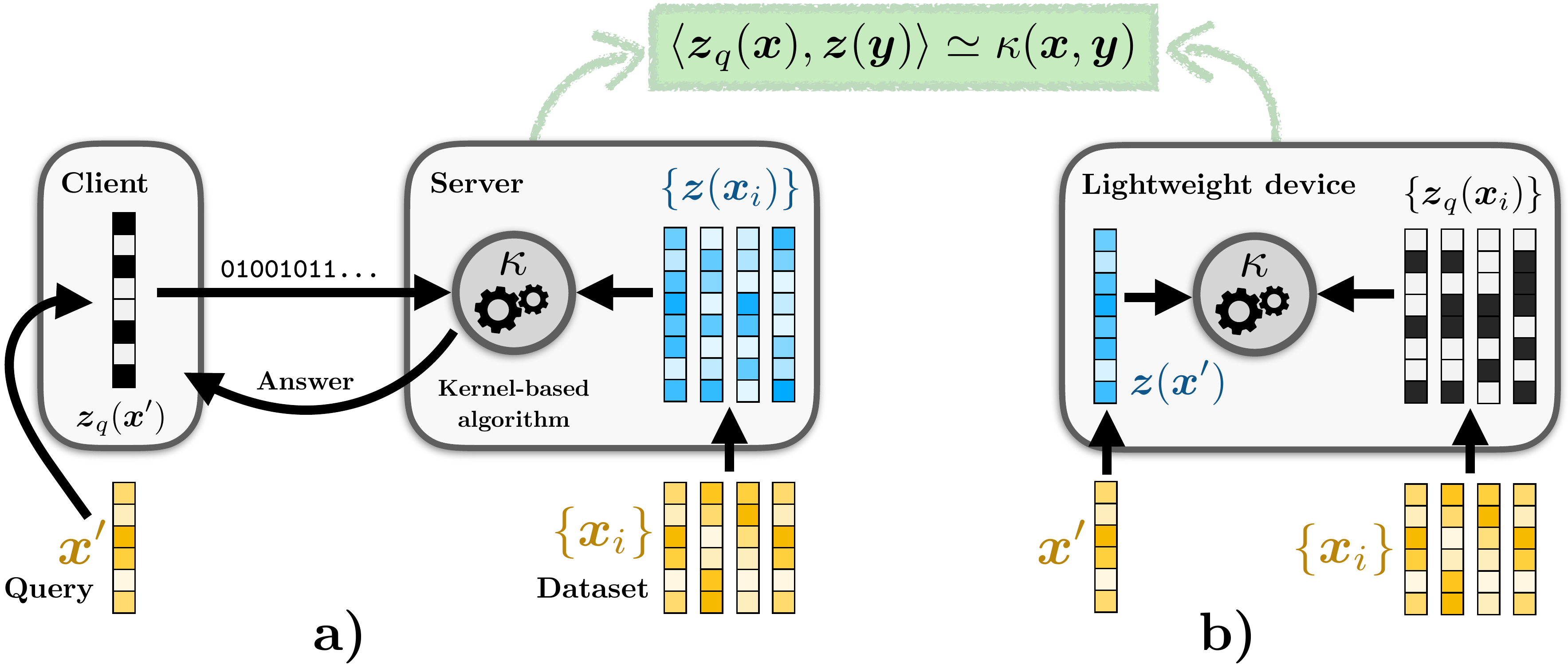}
  \caption{Two motivating applications of our results (from Sec.~\ref{sec:quantized}, in green): combining one-bit universal features with usual random Fourier features yields the same desired kernel $\kappa$. (a) A ``client'' device records a ``query'' signal $\Vec{x}'$, and transmits its quantized features $\Vec{z}_q(\Vec{x}')$, encoded efficiently as only $m$ bits, to a ``server'' device that can evaluate the kernel similarity with the rest of a dataset from their usual $n$ full-precision RFF $\{\Vec{z}(\Vec{x}_i)\}_{i=1}^n$. (b) A lightweight device implements a kernel method with very low memory requirements, only having to store $\{\Vec{z}_q(\Vec{x}_i) \}_{i=1}^n$ the $n$ one-bit feature vectors of the dataset instead of the full-precision ones, provided the usual RFF $\Vec{z}(\Vec{x}')$ are used for the incoming query vectors.}
  \label{fig:intro}
\end{figure}

In the first scenario (inference done remotely on a server), we might quantize the feature vector of the query signal, $\Vec{\varphi}(\Vec{x}') \in \{0,1\}^m$ (but not the feature vectors of the dataset $\Vec{\psi}(\Vec{x}_i)$). This allows to heavily reduce the bitrate when communicating this vector to the server, and even paves way for computing those features directly in hardware, \eg relying on voltage-controlled oscillators~\cite{yoon2008time}. In the second context, we could conversely binarize the feature vectors of the dataset so that $\Vec{\psi}(\Vec{x}_i) \in \{0,1\}^m$ for all $\Vec{x}_i \in X$, but not the incoming query vector $\Vec{\varphi}(\Vec{x}')$ (it is even possible to encode only a subset of the dataset features for models that only need to access some entries, such as SVM with the support vectors). The advantage here is that the memory needed to store the model is heavily reduced, with additional computational benefits coming from the embedded processing of binary values. This idea has received significant attention in the literature, \eg following~\cite{jegou2010product} for nearest-neighbor search.

For both of those examples, the main question that we seek to answer is to quantify the loss of accuracy (induced by quantization) as a function of the feature vector length $m$. More precisely, our goal is to obtain (probabilistic) guarantees on the decay of the kernel approximation error as a function of $m$, that hold uniformly for any pairwise comparisons of signals taken an infinite (but compact) set $\Sigma$. In this case, the main challenge lies in dealing with the discontinuous nature of the quantization operation---handling discontinuities is thus one of the key features of this paper.

\subsection{Related work}
\label{sec:related-work}

Before detailing the elements of our approach, we find useful to mention a few related works, showing how they inspired us, and stressing their connections and differences with our contributions. 

\paragraph{Quantization of (symmetric) random Fourier features:} The construction of the general random periodic features considered in this work is instantiated in Sec.~\ref{sec:quantized} to the case where the corresponding periodic map is the universal quantizer (or square wave function). This approach was introduced in~\cite{raginsky2009locality,Boufounos2013efficientCodingQuantized} as a binary map preserving local distances (\ie up to a given radius), the universal quantized embedding. Those features have subsequently been used for kernel methods in~\cite{boufounos2015universal}, which is similar to the framework we propose but where not one but both signal features being compared are quantized in a symmetric fashion, which distorts the kernel to be recovered. This line of work was further generalized is~\cite{boufounos2017representation}, where uniform guarantees are derived for generic periodic function (possibly discontinuous) instead of the one-bit universal quantizer, holding on infinite signal sets. This defines the random periodic features approach (see Sec.~\ref{sec:background} for details) that we also consider; we provide an in-depth description of how our results relate to (and complement) those from~\cite{boufounos2017representation} in Sec.~\ref{sec:correctPTB}.

Back to the particular problem of quantizing random Fourier features, another line of work~\cite{zhang2018lowPrecisionRFF} shows that a specific stochastic quantization hardly harms the generalization performances of RFF-based algorithms. The ultimate objective of this last work is, however, fairly different from ours: the authors seek to reduce the memory requirements \emph{during training} by performing a more sophisticated quantization, and then use the full-precision RFF for the subsequent inference stage; on the other hand, our objective is to provide a simple quantization scheme to reduce the resources \emph{during the inference stage}, without concerns for how the training was performed.

\paragraph{Asymmetric features and quantizations:}
The possibility to use asymmetric features has been explored for linear embeddings in~\cite{ryder2019asymmetric}, as an additional degree of freedom to minimize (in a data-dependent fashion) the average error of the distance estimation. In~\cite{dong2008asymmetric}, weighted universal embeddings are used for distance estimation, where the weights depend upon one of the two signals (which makes the scheme asymmetric) to decrease the error on this estimation. Closer to our context, in~\cite{gordo2013asymmetric}, it is experimentally shown for a broad set of feature maps (such as Locality Sensitive Hashing, universal embeddings, and several variants of PCA) that quantizing the features of the dataset but not of the query (as in scenario Fig.~\ref{fig:intro}b) significantly improves the performances compared to quantizing both features. Similarly, the authors of~\cite{li2019random,li2019sign} recently considered linear random projections (with the same matrix) of two signals that have been quantized with different quantization levels.

\paragraph{Compressive learning:} In a nutshell, compressive learning \cite{gribonval2017compressiveStatisticalLearning,keriven2016compressive} aims at estimating the parameters of a distribution $\cl P$, or the parameters of a parametric distribution approximating it, from the sketching of an entire dataset of $n$ signals $X = \{\bs x_i\}_{i=1}^n$ generated by $\cl P$, \ie such that $\bs x_i \sim_{\iid} \cl P$. Given a random projection matrix $\bs \Omega$, the sketch $\bs s_X := \frac{1}{n} \sum_{i=1}^n \Vec{z}(\bs x_i)$ is computed from the \emph{pooling} (averaging) of the random Fourier features $\Vec{z}: \bs u \to \exp(\im \bs \Omega^\top \bs u)$ (the exponential being computed componentwise onto vectors) of each dataset signal. For large value of $n$, this sketch estimates the characteristic function $\cl A(\cl P; \bs \Omega) := \bb E_{\bs x \sim \cl P} \exp(\im \Omega^\top \bs x)$ of $\cl P$ over the ``frequencies'' supported by the rows of $\bs\Omega$.  Therefore, under appropriate conditions, one can formulate an inverse problem aiming to estimate the parameters of $\cl P$ by matching the characteristic function of a probing distribution $\widehat{\cl P}$ (estimated over $\bs\Omega$) from $\bs s_X \approx \cl A(\cl P; \bs \Omega)$.  We considered in \cite{schellekens2018quantized} the possibility to replace the random Fourier features used to build $\bs s_X$ with a general (dithered) periodic function $f$, such as the universal quantizer, thus computing $\bs s_X' = \frac{1}{n} \sum_{i=1}^n \Vec{z}_f(\bs x_i) = \frac{1}{n} \sum_{i=1}^n f(\bs \Omega^\top \bs x_i + \bs \xi)$ with a random dither $\bs \xi$. While the dataset sketch is altered (\eg quantized with the universal quantizer), we showed that the estimation of the distribution parameters from the observed sketch is still accurate if we use the RFF (for the probing distribution), as if the sketch was not quantized, hence leading to an asymmetric scheme between the dataset sketching and the estimation procedure.

\subsection{Paper organization}
\label{sec:paper-organization}

We provide in Sec.~\ref{sec:background} several preliminary elements as well as important concepts of the relevant literature: random Fourier features and their (possibly quantized) extension to any periodic nonlinearity. We then start by analyzing how the kernel approached by asymmetric random periodic features behaves in expectation (in the asymptotic case), which is proved in Sec.~\ref{sec:asymptotic}. Our main results come in Sec.~\ref{sec:nonasymptotic}, where we prove uniform error bounds of the kernel approximation for infinite signal sets. In order to do so, we introduce a new tool, the mean Lipschitz smoothness property. Sec.~\ref{sec:correctPTB} relates our approach to the context of geometry-preserving embedding (or coding) developed in \cite{boufounos2017representation}, solving in the same time an error in the proof of one of their results. Next, we apply our general results of Sec.~\ref{sec:nonasymptotic} to the semi-quantized setting motivated above in Sec.~\ref{sec:quantized}, and illustrate with numerical experiments in Sec.~\ref{sec:experiments}, before concluding in Sec.~\ref{sec:conclusion}.

\subsection{Notations}
\label{sec:notations}

Vectors and matrices are denoted by bold symbols. The unit imaginary number is noted $\im = \sqrt{-1}$. The real part, the imaginary part, and the complex conjugation of $a \in \bb C$ read $\Re(a)$, $\Im(a)$, and $a^*$, respectively. We will often consider bounded $2\pi$-periodic functions $f,g :\bb R \to \bb C$ for which the 2-norm and the infinity norm read $\|f\|^2 = \frac{1}{2\pi} \int_0^{2\pi} |f(t)|^2 \ud t$ and  $\|f\|_{\infty} := \sup_{t \in [0,2\pi]} |f(t)|$, respectively, and the inner product of $f$ and $g$ is $\scp{f}{g} = \frac{1}{2\pi} \int_0^{2\pi} f(t) g ^*(t) \ud t$.
For brevity and clarity, we will sometimes refer to a function using the ``dot'' notation, \eg $\exp(\im \cdot)$ for the function $t \mapsto \exp(\im t) \in \bb C$ for $t \in \bb R$.

The $\ell_p$-norm of a vector $\bs u \in \bb R^d$ reads $\|\bs u\|_p = (\sum_i |u_i|^p)^{1/p}$ for $p \geq 1$, with $\|\bs u\|_\infty = \max_i |u_i|$, and $\|\bs u\|_0 = |\supp \bs u| = |\{i: u_i \neq 0\}|$. The unit $\ell_p-$ball ($p\geq 1$) in dimension $d$ is noted $\bb B^d_p := \{\Vec{u} \in \bb R^d \: | \: \|\Vec{u}\|_p \leq 1\}$, with the shorthand $\bb B^d = \bb B^d_2$. The cardinality of a finite set $\cl S$ is $|\cl S|$, the Minkowski sum of two sets $\cl A$ and $\cl B$ is $\cl A + \cl B = \{a + b: a \in \cl A, b \in \cl B\}$, the index set in $\bb R^d$ is $[d] := \{1,\,\cdots,d\}$ for $d \in \bb N$, the identity matrix in $\bb R^d$ is $\Id_d \in \bb R^{d \times d}$, and the Kronecker delta $\delta_{k,k'}$ is defined as $\delta_{k,k'} = 1$ if $k = k'$ and $\delta_{k,k'} = 0$ otherwise. By abuse of notation, evaluating a scalar function $f : \bb R \rightarrow \bb C$ on a vector $\Vec{u} \in \bb R^m$ means applying this function componentwise, \ie $f(\Vec{u}) \in \bb C^m$ with $(f(\Vec{u}))_j = f(u_j)$.

We use the convention where the ($d$-dimensional) Fourier transform of a function $f: \bb R^d \to \bb C$ reads $\hat f(\bs \omega) = (\cl F f)(\bs \omega) := \frac{1}{(2 \pi)^d} \int_{\bb R^d} e^{- \im \bs u^{\top} \bs \omega} f(\bs u) \ud \bs u$, with inverse  $(\cl F^{-1} \hat f)(\bs u) := \int_{\bb R^d} e^{\im \bs u^{\top} \bs \omega} \hat{f}(\bs \omega) \ud \bs \omega$. The same convention is used for the Fourier transform of finite measures on $\bb R^d$. The notation $\sim \cl P$ denotes that a random variable, vector or function is distributed according to the distribution $\cl P$. The uniform distribution on a set $\cl A$ is noted $\cl U(\cl A)$, and ``\iid'' means ``identically and independently'' distributed. In all our developments, except if specified differently, $C,C',\ldots, c,c',\ldots > 0$ denote \emph{universal} constants whose value may change from one instance to the other.

\section{Preliminaries}
\label{sec:background}

We introduce here several fundamental concepts supporting our approach. We first precise the kind of signal space we consider, as well as how signals are compared through a kernel, before to briefly explain the principles sustaining the definition of random Fourier features (RFF). Next, we generalize RFF to any random periodic features for a family of bounded periodic functions.

\subsection{Signals and kernels}
\label{sec:signals-kernel}

In this work, we focus on signals belonging to a bounded signal space $\Sigma \subset \bb R^d$ having finite Kolmogorov $\eta$-entropy $\cl H_\eta(\Sigma)$ for any radius $\eta > 0$ \cite{kolmogorov1961}. This entropy,  defined as $\cl H_\eta(\Sigma) := \log \cl C_{\eta}(\Sigma)$, is related to the covering number $\cl C_{\eta}(\Sigma)$ of $\Sigma$, the cardinality of the smallest finite subset of $\Sigma$ that \emph{covers} it with balls of radius $\eta$. Using the Minkowski sum, this means that
$$
\ts \cl C_{\eta}(\Sigma) := \min \{ |\cl S| : \cl S \subset \Sigma \subset \cl S + \eta \bb B_2^d \},
$$
which is finite for any compact set $\Sigma$.

The Kolmogorov entropy measures the intrinsic dimension of $\Sigma$ in $\bb R^d$. In particular, $\cl H_\eta(\cl V \cap \bb B^d_2) \leq C d'\log(1+1/\eta)$ for any subspace $\cl V \subset \bb R^d$ of dimension $d'\leq d$~\cite{pisier1999volume}, and the set of $s$-sparse vectors $\Sigma_s := \{\Vec{x} \in \bb R^d, \|\Vec{x}\|_0 \leq s \}$ restricted to the unit ball has entropy bounded by $\cl H_{\eta}(\Sigma_s \cap \bb B^d_2) \leq C \cdot s \log(d/s) \log(1+1/\eta)$, see for example~\cite{baraniuk2008simple}. Other bounds exist for, \eg the set of bounded group sparse signals \cite{ayaz2016uniform}, bounded low-rank matrices \cite{candes2011tight}, or for specific low-dimensional manifolds \cite{eftekhari2015new}. 

At the heart of our study is the comparison of two signals through a \emph{kernel}, \ie a function over pairs of signals $\kappa : \Sigma \times \Sigma \rightarrow \bb C$ (in the machine learning literature, kernels are usually real-valued). Typically, invoking the so-called ``kernel trick'' \cite{boser1992training}, $\kappa$ represents the inner product between the input signals $\Vec{x},\Vec{x}'$ when they are mapped in some \textit{implicit} feature space by an appropriate map $\phi : \Sigma \rightarrow \bb H$: $\kappa(\Vec{x},\Vec{x}') = \langle \phi(\Vec{x}),\phi(\Vec{x}') \rangle_{\bb H}$ for some Hilbert space $\bb H$. By definition of the inner product, the kernel $\kappa$ must then necessarily be conjugate symmetric ($\kappa(\Vec{x},\Vec{x}') = \kappa^*(\Vec{x}',\Vec{x})$), and positive definite (\pd), \ie for any number $n$, $\sum_{i,j = 1}^n \, c_ic^*_j\kappa(\Vec{x}_i,\Vec{x}_j) \geq 0$ for all $\Vec{x}_1, \dots, \Vec{x}_n \in \Sigma$ and $c_1, \dots, c_n \in \bb C$.

\subsection{Random Fourier features}
\label{sec:rand-four-feat}

Random Fourier features (RFF) are implicitly built on \emph{Bochner's theorem}~\cite{Rudin1962bochnerBook}. This theorem states that a shift-invariant continuous kernel $\kappa(\Vec{x},\Vec{y}) = \sik(\Vec{x}-\Vec{y})$ (for some $\sik: \Sigma - \Sigma \to \bb C$) is positive definite if and only if it is the (inverse) Fourier transform of a nonnegative finite measure $\Lambda$, \ie
\begin{equation}
  \label{eq:bochner}
  \kappa \text{ positive definite} \quad \Leftrightarrow \quad \ts \sik(\Vec{u}) = (\cl F^{-1} \Lambda)(\Vec{u}) = \int_{\bb R^d} e^{\im \Vec{\omega}^\top\Vec{u}} \mathrm{d}\Lambda(\Vec{\omega}).
\end{equation}
In particular, assuming \wlogg the normalization $\kappa(\Vec{x},\Vec{x}) = \sik(\Vec{0}) = 1$, $\Lambda$ is a probability distribution over $\bb R^d$, and the kernel can be written $\sik(\Vec{u}) = \expec{\Vec{\omega} \sim \Lambda} e^{\im \Vec{\omega}^\top\Vec{u}}$. The key idea of random Fourier features~\cite{Rahimi2008RFF} is thus to construct low-dimensional features $\Vec{z}(\Vec{x}), \Vec{z}(\Vec{y})$ whose inner product approximates the kernel $\kappa(\Vec{x},\Vec{y})$ by Monte Carlo sampling of this expectation.

\begin{definition}[Random Fourier features]
  \label{def:RFF}
  Let $\kappa(\Vec{x},\Vec{y}) = \sik(\Vec{x}-\Vec{y})$ be a shift-invariant \pd kernel, normalized such that $\sik(\Vec{0}) = 1$, with Fourier transform $\Lambda = \cl F \sik$. Given a target dimension $m$, the associated ``complex'' random Fourier features are
  \begin{equation}
    \label{eq:RFF}
    \ts \Vec{z}(\Vec{x}) := \frac{1}{\sqrt{m}}\exp\left(\im ( \bs\Omega^\top \Vec{x} + \Vec{\xi})\right) \quad \in \bb C^m, 
  \end{equation}
  with random projections (or ``frequencies'') $\bs\Omega := (\Vec{\omega}_1,\,\cdots, \Vec{\omega}_m) \in \bb R^{d \times m}$ generated as $\bs\Omega \sim \Lambda^m$, \ie with $\Vec{\omega}_j \distiid \Lambda$ for $j \in [m]$, and a random dither $\Vec{\xi} \in \bb R^m$ generated as $\bs \xi \sim \cl U^m([0,2\pi))$, \ie with $\xi_j \distiid \cl U([0,2\pi))$ for $j \in [m]$. We also define the ``real'' random Fourier features $\Vec{z}_{\cos}(\Vec{x})$ as $\Re[\Vec{z}(\Vec{x})]$, the real part of those features:
  \begin{equation}
  \label{eq:realRFF}
  \ts \Vec{z}_{\cos}(\Vec{x}) := \frac{1}{\sqrt{m}}\cos\left( \bs\Omega^\top \Vec{x} + \Vec{\xi}\right) \quad \in \bb R^m.
  \end{equation}
\end{definition}

\begin{remark}
  The dither $\Vec{\xi}$ was initially introduced in~\cite{Rahimi2008RFF} when only the real RFF $\Vec{z}_{\cos}(\Vec{x})$ were used; in the (more widely used) complex case $\Vec{\xi}$ is not necessary (see~\cite{sutherland2015error} for an in-depth comparison of the ``real'' versus ``complex'' random Fourier features). We still included it in this definition for the sake of consistency with Def.~\ref{def:generalPeriodicFeatures} below.
\end{remark}

By direct application of Bochner's theorem, the inner product of RFF indeed approaches (in expectation over the draw of the frequencies $\bs\Omega$) the target kernel: $\expec{}\langle \Vec{z}(\Vec{x}), \Vec{z}(\Vec{y}) \rangle = \kappa(\Vec{x},\Vec{y})$. Moreover, for a finite feature dimension $m$, the error of the kernel approximation $\wh{\kappa}(\Vec{x},\Vec{y}) := \langle \Vec{z}(\Vec{x}), \Vec{z}(\Vec{y}) \rangle$ can be uniformly bounded (\ie bound the absolute error $|\wh{\kappa}(\Vec{x},\Vec{y})-\kappa(\Vec{x},\Vec{y})|$ for all values $\Vec{x},\Vec{y}$ in $\Sigma$), with high probability on the draw of $\bs\Omega$ (we work with different normalization choices, so the result we present here differs slightly from the initial bound~\cite[Claim 1]{Rahimi2008RFF}). Finer bounds can be found, among others, in~\cite{sutherland2015error,sriperumbudur2015optimal}.

\begin{proposition}[Uniform kernel approximation error for RFF]
  \label{prop:RFF}
  Let $\Sigma$ be a compact set, and $\Vec{z}(\Vec{x})$ be the RFF defined above. Assume that there exists an associated constant $C_{\Lambda}$, such that 
  \begin{equation}
    \label{eq:CLambda}
    \ts \expec{\Vec{\omega} \sim \Lambda} |\Vec{\omega}^\top\Vec{a}| \leq C_{\Lambda} \|\Vec{a}\|_2,\quad \forall \Vec{a} \in \bb R^d.
  \end{equation} 
  Provided that, for $\epsilon > 0$,
  \begin{equation*}
    \ts m \geq  C \epsilon^{-2} \cl H_{c \epsilon/C_{\Lambda}}(\Sigma),
  \end{equation*}
  the kernel approximation $\wh{\kappa}(\Vec{x},\Vec{y}) = \langle \Vec{z}(\Vec{x}), \Vec{z}(\Vec{y}) \rangle$ has error uniformly bounded by
  \begin{equation*}
    \ts \big| \wh{\kappa}(\Vec{x},\Vec{y}) - \kappa(\Vec{x},\Vec{y}) \big| \leq \epsilon, \quad \forall \, \Vec{x},\Vec{y} \in \Sigma,
  \end{equation*}
  with probability exceeding $1 - C' e^{ - c' m \epsilon^2 }$.
\end{proposition}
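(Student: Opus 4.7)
I would follow the classical covering-plus-concentration strategy of~\cite{Rahimi2008RFF}, specialized to the first-moment assumption~\eqref{eq:CLambda}. Exploiting the cancellation of the dither in the complex pairing $\Vec{z}(\Vec{x})_j\,\Vec{z}(\Vec{y})_j^{*}$, one has $\wh{\kappa}(\Vec{x},\Vec{y}) = \wh{\sik}(\Vec{x}-\Vec{y})$ with $\wh{\sik}(\Vec{u}) := \tfrac{1}{m}\sum_{j=1}^{m} e^{\im\,\Vec{\omega}_j^{\top}\Vec{u}}$, and Bochner's theorem~\eqref{eq:bochner} gives $\E\wh{\sik}(\Vec{u}) = \sik(\Vec{u})$. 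It then suffices to bound $|\wh{\sik}(\Vec{u})-\sik(\Vec{u})|$ uniformly over $\Vec{u}\in\Sigma-\Sigma$.

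\textbf{Covering and pointwise concentration.} I would build an $\eta$-net $\cl N\subset\Sigma$ of cardinality $\cl C_\eta(\Sigma) = e^{\cl H_\eta(\Sigma)}$, so that $\cl N - \cl N$ yields a $2\eta$-cover of $\Sigma-\Sigma$ of cardinality at most $e^{2\cl H_\eta(\Sigma)}$. At each net difference $\Vec{u}_k = \Vec{x}_k - \Vec{y}_k$, the estimator $\wh{\sik}(\Vec{u}_k)$ is the empirical mean of $m$ i.i.d.\ complex random variables of modulus $1$, so Hoeffding's inequality (applied to the real and imaginary parts separately) gives $|\wh{\sik}(\Vec{u}_k) - \sik(\Vec{u}_k)| \leq \epsilon/2$ with probability at least $1 - 4 e^{-c m\epsilon^2}$. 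A union bound over the at most $|\cl N|^2 = e^{2\cl H_\eta(\Sigma)}$ net pairs then gives this simultaneously with probability at least $1 - 4\exp\!\big(2\cl H_\eta(\Sigma) - c m\epsilon^2\big)$.

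\textbf{Lipschitz extension and balancing.} For arbitrary $\Vec{x},\Vec{y}\in\Sigma$, picking closest net points $\Vec{x}_k,\Vec{y}_k$ and writing $\Vec{\delta} := (\Vec{x}-\Vec{x}_k)-(\Vec{y}-\Vec{y}_k)$ (so $\|\Vec{\delta}\|_2\leq 2\eta$), the elementary estimate $|e^{\im a}-e^{\im b}| \leq |a-b|$ combined with~\eqref{eq:CLambda} deterministically yields
\[
|\sik(\Vec{x}-\Vec{y}) - \sik(\Vec{x}_k-\Vec{y}_k)| \leq \E_{\Vec{\omega}\sim\Lambda}|\Vec{\omega}^{\top}\Vec{\delta}| \leq 2 C_\Lambda\,\eta,
\]
while the same inequality gives the pathwise bound $|\wh{\sik}(\Vec{x}-\Vec{y}) - \wh{\sik}(\Vec{x}_k-\Vec{y}_k)| \leq \tfrac{1}{m}\sum_j|\Vec{\omega}_j^{\top}\Vec{\delta}|$, whose expectation is also bounded by $2 C_\Lambda\,\eta$. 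Choosing $\eta \asymp \epsilon/C_\Lambda$ makes both continuity gaps at most $\epsilon/2$; combined with the pointwise Hoeffding term $\epsilon/2$ this yields a total error $\leq \epsilon$, and the prescribed scaling $m \geq C\epsilon^{-2}\cl H_{c\epsilon/C_\Lambda}(\Sigma)$ ensures that the net-union exponent $2\cl H_\eta(\Sigma)$ is dominated by $c m\epsilon^2$, leaving the claimed probability $1 - C'e^{-c' m\epsilon^2}$.

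\textbf{Main obstacle.} The subtle step is promoting the in-expectation Lipschitz estimate on $\wh{\sik}$ to a high-probability \emph{uniform} bound over all admissible perturbations $\Vec{\delta}$. I would handle this either via a concentration argument on $\sup_{\|\Vec{a}\|_2=1}\tfrac{1}{m}\sum_j |\Vec{\omega}_j^{\top}\Vec{a}|$ (using an extra covering of the unit sphere together with a sub-exponential tail derived from~\eqref{eq:CLambda} for each direction), or, more economically, via a Markov-type bound on the random Lipschitz constant combined with absorption of the resulting constants into the universal ones. The assumption~\eqref{eq:CLambda} is tailored precisely to avoid invoking a second moment like $\E\|\Vec{\omega}\|_2^2$ (which would generically introduce a dimension factor), so respecting that design constraint is what makes this uniform Lipschitz step nontrivial.
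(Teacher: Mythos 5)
Your overall covering-plus-concentration skeleton is the same one the paper uses (its stated proof of this proposition is simply to specialize Prop.~\ref{prop:kernelApproxUniform} to $f=g=\exp(\im\,\cdot)$, which is $1$-Lipschitz and for which $\kappa_{f,f}=\kappa$), and your reduction to $\wh{\sik}$ on $\Sigma-\Sigma$, the Hoeffding step on the net, and the continuity bound on the \emph{expected} kernel via $|e^{\im a}-e^{\im b}|\leq|a-b|$ and \eqref{eq:CLambda} are all fine. The problem is exactly the step you flag as the ``main obstacle'': neither of your two proposed fixes works under the hypothesis \eqref{eq:CLambda} alone. That hypothesis is only a first-moment bound, so no sub-exponential (or any nontrivial) tail for $|\Vec{\omega}^\top\Vec{a}|$ can be derived from it; the empirical average $\tfrac{1}{m}\sum_j|\Vec{\omega}_j^\top\Vec{a}|$ is an average of possibly heavy-tailed, unbounded variables and need not concentrate at rate $e^{-cm\epsilon^2}$, and taking a supremum over directions on the sphere moreover inflates its mean toward $\expec{}\|\Vec{\omega}\|_2$, reintroducing the dimension factor you are trying to avoid. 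The Markov-type alternative fails for a different reason: Markov gives a failure probability that does not decay with $m$ at all, so it cannot be ``absorbed into the universal constants'' and is incompatible with the claimed bound $1-C'e^{-c'm\epsilon^2}$ (this is essentially why the original Rahimi--Recht statement carries non-universal prefactors depending on $\sigma_\Lambda$, $\mathrm{diam}(\Sigma)$ and a dimension-dependent exponent).

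The missing idea, which is how the paper handles it in the proof of Prop.~\ref{prop:kernelApproxUniform} (events $\cl E_1,\cl E_2$), is to never let the unbounded quantity $|\Vec{\omega}_j^\top\Vec{\delta}|$ be the object that must concentrate. For each net center $\Vec{x}\in\Sigma_\eta$ one works with the per-feature local oscillations $H_j(\eta;\Vec{x})=\sup_{\|\Vec{r}\|_2\leq\eta}\big|e^{\im(\Vec{\omega}_j^\top(\Vec{x}+\Vec{r})+\xi_j)}-e^{\im(\Vec{\omega}_j^\top\Vec{x}+\xi_j)}\big|$, which are i.i.d.\ and \emph{bounded by $2$} regardless of the tails of $\Lambda$; Hoeffding therefore applies to $\tfrac1m\sum_j H_j$ with the exponential rate $e^{-cm\epsilon^2}$, while the mean is controlled by $\expec{}H_j\leq\expec{\Vec{\omega}}\sup_{\|\Vec{r}\|\le\eta}|\Vec{\omega}^\top\Vec{r}|$-type estimates reduced, via the Lipschitz bound $|e^{\im a}-e^{\im b}|\le|a-b|$, to \eqref{eq:CLambda}, giving $O(C_\Lambda\eta)$. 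A union bound is then taken over the net centers only (cardinality $e^{\cl H_\eta(\Sigma)}$), not over perturbation directions, and the resulting high-probability empirical modulus-of-continuity bound replaces your uniform control of $\tfrac1m\sum_j|\Vec{\omega}_j^\top\Vec{\delta}|$. With that substitution (and your $\eta\asymp\epsilon/C_\Lambda$ balancing, which matches the paper's choice), the argument closes; as written, the proposal has a genuine gap at precisely this step.
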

\begin{proof}
  This version of the RFF approximation error is obtained as a particular case of our Prop.~\ref{prop:kernelApproxUniform}; see~\cite{Rahimi2008RFF} for the initial result.
\end{proof}

The constant $C_{\Lambda}$ defined in~\eqref{eq:CLambda} characterizes the smoothness of the kernel (if the kernel is smoother, it exhibits less high-frequency content, and $C_{\Lambda}$ will be lower). In most of the RFF literature, this constant is bounded by the Cauchy-Schwarz inequality as $C_{\Lambda} = \expec{\Vec{\omega} \sim \Lambda} \|\Vec{\omega}\|_2$. Then, one can (as done in~\cite{Rahimi2008RFF}) further bound $\expec{\Vec{\omega} \sim \Lambda} \|\Vec{\omega}\|_2 \leq \sigma_{\Lambda}$ where $\sigma_{\Lambda}^2$ is the second moment of $\Lambda$, equivalent to the kernel curvature at the origin, \ie $C_{\Lambda}^2 \leq \sigma_{\Lambda}^2 := \expec{\Vec{\omega} \sim \Lambda}\|\Vec{\omega}\|_2^2 = \nabla^2 \sik \rvert_{\Vec{u} = \Vec{0}}$, with $\nabla^2$ the Laplacian operator. However, for specific distributions, using the Cauchy-Schwarz inequality in high dimension leads to a loose bound of $C_\Lambda$. For example, if the covariance matrix of $\Vec{\omega}$ is upper bounded by $\wt{\sigma}^2_{\Lambda} \Id_d$ for some $\wt{\sigma}^2_{\Lambda} > 0$ (if, \eg  $\cl F^{-1} \Lambda$ is the Gaussian RBF (``radial basis function'') kernel with radius $1/\wt{\sigma}^2_\Lambda$, $\Lambda$ is isotropic \cite{Vershynin12}, or if each component of $\bs \omega$ are \iid with variance bounded by $\wt{\sigma}^2_{\Lambda}$) then
$$
\ts \big(\expec{\Vec{\omega} \sim \Lambda} |\Vec{\omega}^\top\Vec{a}| \big)^2 \leq \expec{\Vec{\omega} \sim \Lambda} |\Vec{\omega}^\top\Vec{a}|^2 = \Vec{a}^\top \big( \expec{\Vec{\omega} \sim \Lambda} \Vec{\omega} \Vec{\omega}^\top \big) \Vec{a} \leq \Vec{a}^\top \cdot \wt{\sigma}^2_\Lambda \Id_d \cdot \Vec{a} = \wt{\sigma}^2_\Lambda \|\Vec{a}\|_2^2.
$$
In this case we obtain $C_{\Lambda} = \wt{\sigma}_\Lambda$, while Cauchy-Schwarz gives $\sqrt{d} \cdot \wt{\sigma}_\Lambda$, hence overestimating the constant by a factor $\sqrt{d}$.

\begin{example}
    \label{ex:UoS-example}
    Consider the simple case where the signals of interest have an $\ell_2$-norm smaller than 1 and lie in a union of $S$ subspaces of $\bb R^d$, each with dimension $s$. This signal space model encompasses, for instance, $\Sigma = \bb B_2^d$ (that is, where $S=1$ and $s=d$), the set of bounded $s$-sparse signals in $\bb R^d$ for which $S = {d \choose s} \leq (\frac{e d}{s})^s$ and each subspace (one per fixed sparse signal support) has dimension $s$, or more advanced models with structured sparsity \cite{BCD10,ayaz2016uniform}. For such a model, the Kolmogorov entropy is bounded by $C s \cdot \log\big(\frac{1}{\eta}\big) \leq \cl H_\eta(\Sigma) \leq C' s \cdot \log\big(1 + \frac{2}{\eta}\big) + \log S$ (see, \eg \cite[Lemma 10]{JC17timefordithering}). Assume that we target the usual Gaussian kernel with unit bandwidth, hence $C_{\Lambda} = 1$. In this case, the RFF kernel approximation error is uniformly bounded over $\Sigma$, with high probability, provided that the number of features satisfies $m \geq C \epsilon^{-2} \big( s \log(\frac{1}{c \epsilon}) + \log S\big)$. For instance, for bounded $s$-sparse signals we need $m \geq C \epsilon^{-2}  s  \log(\frac{e d}{c s\, \epsilon})$.    
\end{example}

\subsection{Random periodic features}
\label{sec:rand-peri-feat}

A crucial generalization to RFF has been proposed in~\cite{boufounos2017representation}, where the complex exponential is replaced by a \emph{generic periodic function} $f$. We refer to this approach as \emph{random periodic features} (RPF). Without loss of generality, we make the following normalization assumptions throughout this work: $f$ has period given by $2\pi$, is centered (zero mean), and takes (absolute) values bounded by one. We note this compactly as
\begin{equation*}
  f \in \pf, \quad \text{with} \quad \pf := \ts \{ f : \bb R \rightarrow \bb C \: | \: f \text{ is $2\pi$-periodic}, \: \int_{0}^{2\pi} f(t) \mathrm{d}t = 0, \: \|f\|_{\infty} \leq 1 \}.
\end{equation*}
Functions of PF can be expressed as a Fourier series of the following form
\begin{equation}
\label{eq:decomp-f-F-series}
  \ts f(t) = \sum_{k \in \bb Z} F_k e^{\im k t}, \quad \text{where} \quad F_k := \frac{1}{2\pi} \int_{0}^{2\pi} f(t) e^{-\im k t} \mathrm{d}t.
\end{equation}
Note that $f \in \pf$ implies $F_0 = 0$ (because $f$ is centered) and $|F_k| \leq 1$ (because $f$ is bounded).

\begin{definition}[Random periodic features]
  \label{def:generalPeriodicFeatures}
  Let $f$ be a generic periodic function, normalized such that $f \in \pf$, and $\Lambda$ a probability distribution on $\bb R^d$. Given a target dimension $m$, the associated \emph{random periodic features} are
  \begin{equation}
    \label{eq:def_periodic_features}
    \ts \Vec{z}_f(\Vec{x}) := \frac{1}{\sqrt{m}} f( \bs\Omega^\top \Vec{x} + \Vec{\xi}) \quad \in \bb C^m,
  \end{equation}
  with a $d \times m$ random projection matrix $\bs\Omega := (\Vec{\omega}_1,\,\cdots, \Vec{\omega}_m) \sim \Lambda^m$, and a random dither $\Vec{\xi} \sim \cl U^m([0,2\pi))$.
\end{definition}

\begin{remark}
  As the complex exponentiation satisfies $\exp(\im \cdot) \in \pf$, this definition includes the classical random Fourier features, with $\Vec{z}(\Vec{x}) = \Vec{z}_{\exp(\im \cdot)}(\Vec{x})$. The real RFF $\Vec{z}_{\cos}(\Vec{x})$ are also a particular case of this definition.       
\end{remark}

The geometry induced by such generic features can be characterized the inner product $\wh{\kappa}_{f,f}(\bs x, \bs y) := \langle \Vec{z}_f(\Vec{x}), \Vec{z}_f(\Vec{y}) \rangle$. As explained by the following result (adapted from~\cite[Theorem 4.4]{boufounos2017representation}), this product is associated with a modified kernel $\kappa_{f,f}(\bs x,\bs y)$ (the rationale for these notations is clarified in the next section).

\begin{proposition}[Kernel from symmetric RPF]
  \label{prop:BoufounosKernel}
	The inner product of random periodic features~\eqref{eq:def_periodic_features} approaches, on average, a kernel $\kappa_{f,f}(\Vec{x},\Vec{y}) := \expec{} \langle \Vec{z}_{f}(\Vec{x}), \Vec{z}_f(\Vec{y}) \rangle$ that is shift-invariant and given by
  \begin{equation}
    \label{eq:symmetric_kernel_distortion}
    \ts \kappa_{f,f}(\Vec{x},\Vec{y}) =  \sum_{k \in \bb Z} |F_k|^2 \sik(k(\Vec{x} - \Vec{y})) =: \sik_{f,f}(\Vec{x}-\Vec{y}),
  \end{equation}
  where $\sik(\Vec{u}) = (\cl F^{-1} \Lambda)(\Vec{u})$ is the shift-invariant kernel associated with the distribution of $\bs\Omega$ in the RPF.
\end{proposition}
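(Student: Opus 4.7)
The plan is to compute the expectation directly by expanding both factors of the inner product in their Fourier series and exploiting independence of the draws.

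First I would write
\[
\langle \Vec{z}_f(\Vec{x}), \Vec{z}_f(\Vec{y}) \rangle = \frac{1}{m}\sum_{j=1}^m f(\Vec{\omega}_j^\top \Vec{x} + \xi_j)\, f^*(\Vec{\omega}_j^\top \Vec{y} + \xi_j),
\]
and note that since the pairs $(\Vec{\omega}_j, \xi_j)$ are i.i.d.\ copies, linearity of expectation collapses the average over $j$, so it suffices to analyze
\[
\kappa_{f,f}(\Vec{x},\Vec{y}) = \expec{\Vec{\omega}\sim\Lambda,\,\xi\sim\cl U([0,2\pi))} f(\Vec{\omega}^\top \Vec{x} + \xi)\, f^*(\Vec{\omega}^\top \Vec{y} + \xi).
\]

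Next I would plug in the Fourier series expansion from \eqref{eq:decomp-f-F-series} for $f$ and the conjugate series for $f^*$, giving
\[
f(\Vec{\omega}^\top\Vec{x}+\xi)\,f^*(\Vec{\omega}^\top\Vec{y}+\xi) = \sum_{k,k'\in\bb Z} F_k F_{k'}^*\, e^{\im(k\Vec{\omega}^\top\Vec{x} - k'\Vec{\omega}^\top\Vec{y})}\, e^{\im(k-k')\xi}.
\]
Taking the expectation over $\xi$ first and using the orthogonality identity $\expec{\xi\sim\cl U([0,2\pi))} e^{\im(k-k')\xi} = \delta_{k,k'}$ kills the off-diagonal terms, leaving
\[
\kappa_{f,f}(\Vec{x},\Vec{y}) = \sum_{k\in\bb Z}|F_k|^2\, \expec{\Vec{\omega}\sim\Lambda}\, e^{\im k\Vec{\omega}^\top(\Vec{x}-\Vec{y})}.
\]
Then I would recognize the inner expectation as the inverse Fourier transform of $\Lambda$ evaluated at $k(\Vec{x}-\Vec{y})$, i.e.\ $\sik(k(\Vec{x}-\Vec{y}))$ by \eqref{eq:bochner}. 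This immediately yields \eqref{eq:symmetric_kernel_distortion} and shows the result depends only on $\Vec{x}-\Vec{y}$, hence shift-invariance.

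The only delicate step is the exchange of the double infinite sum with the expectations. Since $f \in \pf$ with $\|f\|_\infty \leq 1$, Parseval gives $\sum_k |F_k|^2 = \|f\|^2 \leq 1$, so $\sum_{k,k'}|F_k F_{k'}^*| \leq (\sum_k|F_k|)^2$ which is not always finite for discontinuous $f$; to sidestep this I would instead argue via truncating the Fourier series to $|k|,|k'|\leq N$, where Fubini applies trivially (finite sum, bounded integrand), apply the dither orthogonality to collapse to the diagonal, and then pass $N\to\infty$ using dominated convergence with dominating function $\sum_{|k|\leq N}|F_k|^2 \leq 1$ (or, equivalently, $L^2$-convergence of the partial Fourier sums of $f$ composed with the continuous linear combination and an application of Cauchy--Schwarz to the expectation). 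This truncation argument is the main technical subtlety; the rest is bookkeeping.
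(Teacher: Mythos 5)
Your proof is correct and follows essentially the same route as the paper, which obtains this proposition as the special case $g=f$ of its general asymmetric result (Prop.~\ref{prop:expected_kernel}): Fourier-series expansion of $f$, orthogonality of $e^{\im(k-k')\xi}$ under the uniform dither to kill the cross terms, and Bochner's theorem to identify $\expec{\Vec{\omega}\sim\Lambda} e^{\im k\Vec{\omega}^\top(\Vec{x}-\Vec{y})}=\sik(k(\Vec{x}-\Vec{y}))$. Your extra truncation step to justify the sum--expectation interchange (the paper swaps them silently) is a welcome refinement; just note that for the left-hand limit the operative argument is the parenthetical one you give ($L^2$-convergence of the partial sums under the dither plus Cauchy--Schwarz), rather than domination by $\sum_{|k|\leq N}|F_k|^2$, which only controls the diagonal series on the right.
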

\begin{proof}
  This version is obtained as a particular case of our Prop.~\ref{prop:expected_kernel}; see~\cite{boufounos2017representation} for the initial result.
\end{proof}

The modified kernel $\kappa_{f,f}$ is thus a \emph{scale mixture} of the initial kernel $\kappa$ (that is approached by the ``classical'' RFF), where the weight of scale $k$ is given by $|F_k|^2$. In the non-asymptotic case, the authors of~\cite{boufounos2017representation} show that, for all pairs of vectors taken in a finite set $\Sigma$ of size $N$, $\wh{\kappa}_{f,f}(\bs x, \bs y)$ quickly concentrates around $\kappa_{f,f}(\bs x, \bs y)$ when $m$ is large compared to $\log N$; the deviation error scaling as $O(\sqrt{\log N/m})$ when $m$ increases. Our result in Prop.~\ref{prop:kernelApproxUniform} provides a uniform approximation bound valid for infinite sets.

Random periodic features were introduced as a general theoretical framework to analyze the so-called universal quantization embeddings~\cite{Boufounos2013efficientCodingQuantized}; those binary embeddings encode the local distances (\ie the distances below a given threshold) on an efficiently small number of bits. This embedding relies on the ``one-bit universal quantization'' given by $\cl Q_{\Delta} : \bb R \rightarrow \{0,1\} : t \mapsto \cl Q_{\Delta}(t) = 1$ if $(2k-1) \leq t/\Delta \leq 2k$  for any $k\in \bb Z$ and $0$ otherwise. It can be interpreted as the least significant bit of a usual, plain scalar quantizer with stepsize $\Delta$, and visualized as a square wave: see Fig.~\ref{fig:univquant}, left. Here, we will for convenience use $q$ instead, its normalized equivalent in $\pf$,
\begin{equation}
  \label{eq:universalquantizer}
  \ts	q(t) := \sign \circ \cos (t) = \sum_{k \in \bb Z} Q_k e^{\im k t}, \quad \text{with coefficients} \ \, Q_k = \begin{cases}
    \frac{2}{k \pi}(-1)^{(k-1)/2}  & \text{if } k \text{ odd,} \\
    0        & \text{if } k \text{ even.}
  \end{cases} 
\end{equation}

\begin{figure}
  \centering
  \subfloat{
    \includegraphics[scale=0.45]{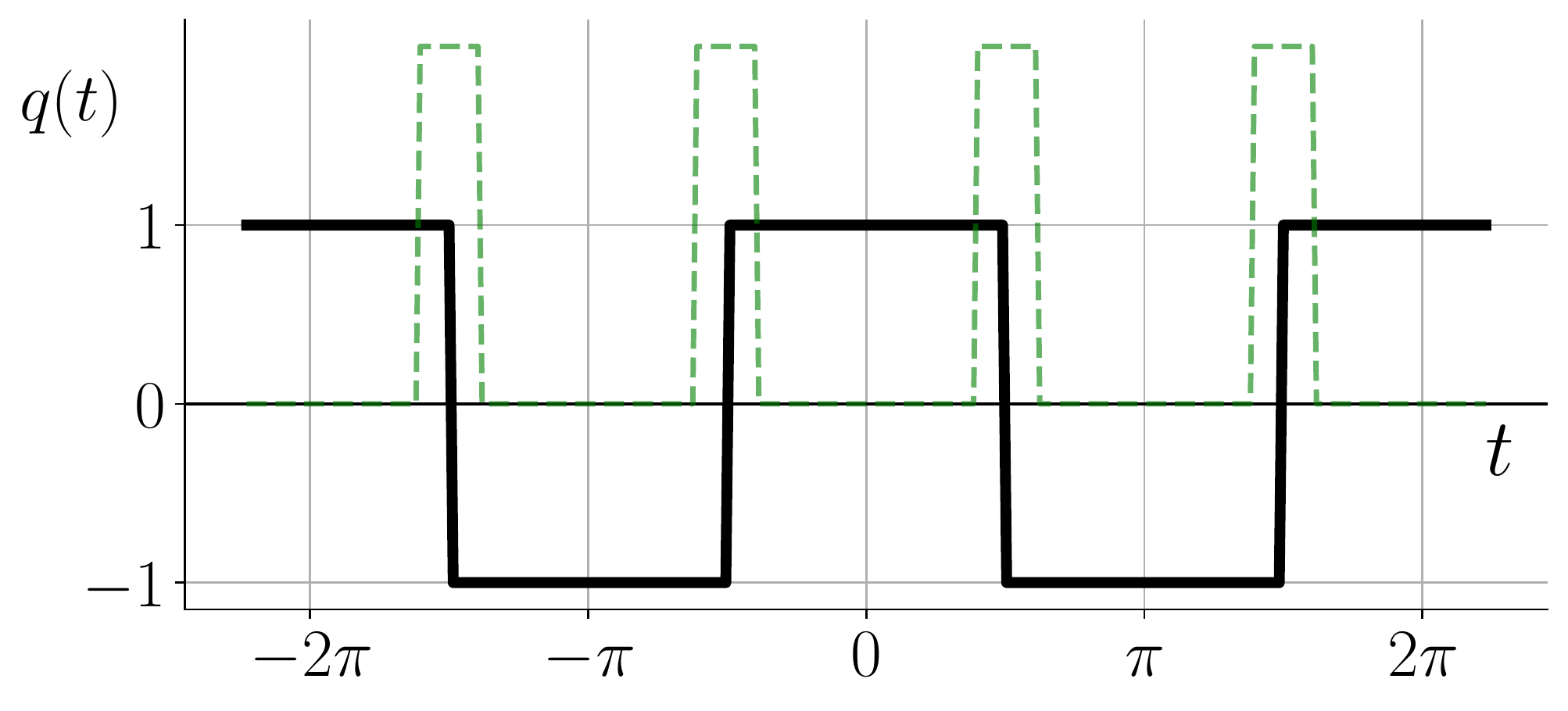}}
  \hspace{23px}
  \subfloat{\raisebox{-1mm}	{\includegraphics[scale=0.47]{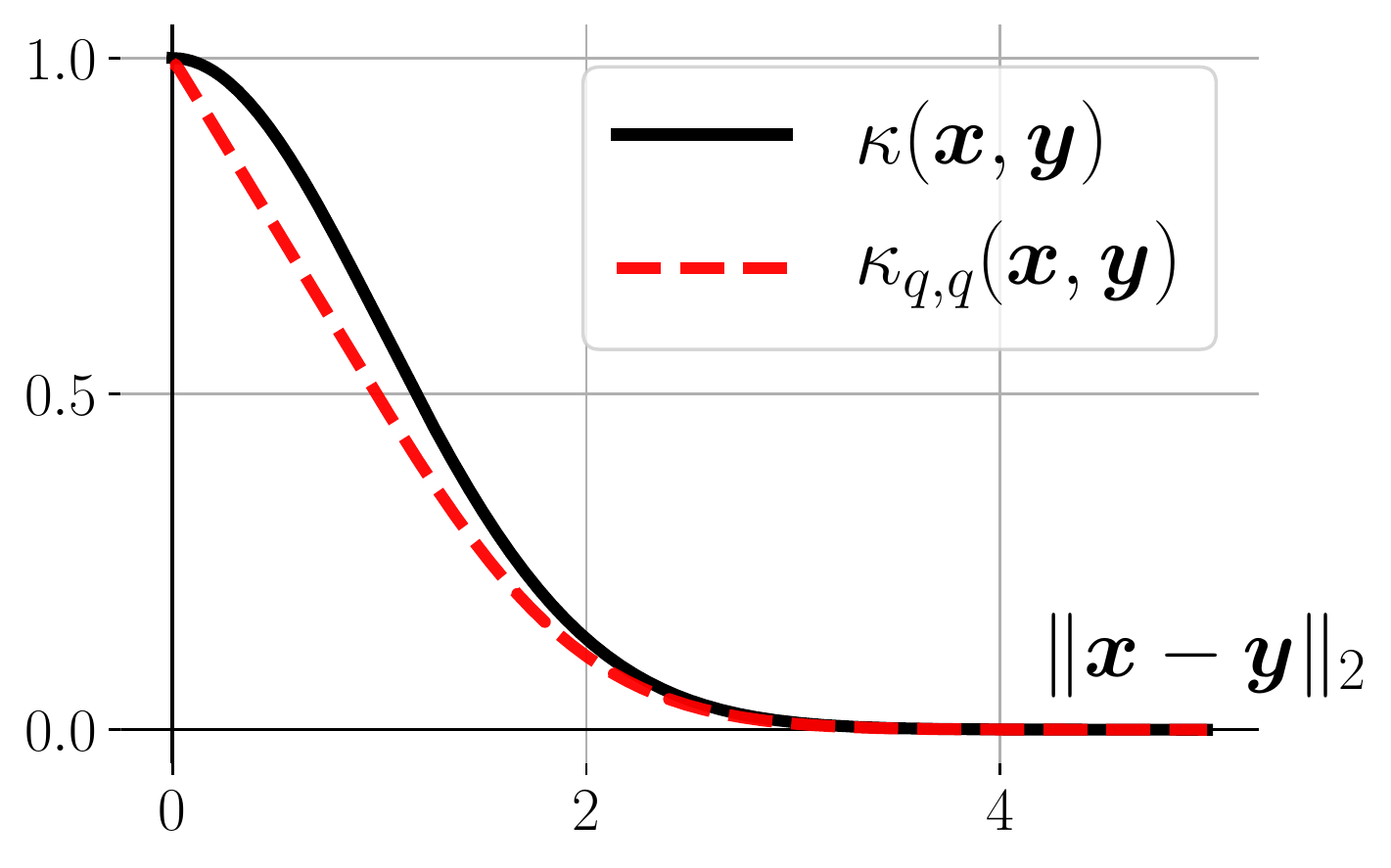}}}
  \caption{(Left) The solid black curve represents the universal quantization function $q(t)$ (with $q\in \pf$) defined in~\eqref{eq:universalquantizer}. Up to a shift and rescaling, this function corresponds to the \emph{least significant bit} of a standard uniform scalar quantizer. In dashed green, we display the related integrand $I_{\delta}(t)$, with $\delta = 0.35$. This quantity refers to the proof of the mean smoothness (Def.~\ref{def:meanSmoothness}) of $q$ in Prop.~\ref{prop:meanSmoothUniversalQuantization} (Sec.~\ref{sec:quantized}). (Right) When drawing $\bs \Omega$ from a Gaussian distribution $\Lambda = \cl N(0,\bs I_d)$, the associated RFF recover the Gaussian kernel $\kappa$ (in black), but the RPF with universal quantization approximate a ``distorted'' kernel $\kappa_{q,q}$ (dashed red) whose almost linear behavior close to the origin is explained by \eqref{eq:linear-kernel-qq-at-origin}.}
  \label{fig:univquant}
\end{figure}

Using the universal quantization as periodic nonlinearity is appealing because $\Vec{z}_q(\Vec{x}) \in \{-1,+1\}^m$, which can thus be encoded/transmitted by only $m$ bits. However, as predicted by~\eqref{eq:symmetric_kernel_distortion}, the approximated kernel is modified, as illustrated for the Gaussian kernel Fig.~\ref{fig:univquant}, right. Moreover, proving uniform kernel approximation bounds (as in Prop.~\ref{prop:RFF}) for infinite sets $\Sigma$ is specially challenging when the nonlinearity $f$ presents discontinuities (which is the case when $f = q$, for example). In~\cite{boufounos2017representation}, the authors introduced a formalism (the $T$-part Lipschitz functions) to deal with this problem and to obtain uniform approximation bounds on infinite signal sets for the universal embeddings. As we explain in Sec.~\ref{sec:correctPTB}, the proof relying on this approach is however wrong, which motivates us to introduce another tool, the mean Lipschitz smoothness, to deal with discontinuous maps.

\section{Expected kernel (asymptotic case)}
\label{sec:asymptotic}

Following the considerations of the Introduction, let us now consider the \emph{asymmetric features} setting where a pair of signals of interest, $\Vec{x},\Vec{y} \in \bb R^d$, are available only through their random periodic features, $\Vec{z}_{f}(\Vec{x})$ and $\Vec{z}_{g}(\Vec{y})$, as defined in~\eqref{eq:def_periodic_features}. Those features are allowed to result from different periodic maps $f, g \in \pf$, but the preceding projection $\bs\Omega$ and dithering $\Vec{\xi}$ are kept identical.

In this section, we characterize the properties of the \emph{expected kernel} yielded by the expectation, over the draw of $\bs \Omega$ and $\bs \xi$, of the following ``asymmetric'' inner product:
\begin{equation}
  \wh{\kappa}_{f,g}(\Vec{x},\Vec{y}) := \langle \Vec{z}_{f}(\Vec{x}), \Vec{z}_{g}(\Vec{y}) \rangle.
\end{equation}

\noindent This asymmetric RPF kernel is defined from
\begin{equation}
  \label{eq:def_expected_kernel}
  \begin{split}
    \kappa_{f,g}(\Vec{x},\Vec{y}) &:= \bb E_{\bs\Omega,\, \Vec{\xi}} \: \langle\Vec{z}_f(\Vec{x}),\Vec{z}_g(\Vec{y})\rangle\\
    &\: = \ts \frac{1}{m} \sum_{j = 1}^m \expec{\Vec{\omega}_j,\xi_j} f(\Vec{\omega}_j^\top\Vec{x} + \xi_j)\: g^*(\Vec{\omega}_j^\top\Vec{y} + \xi_j)\\
    &\: = \ts \expec{\Vec{\omega} \sim \Lambda,\,\xi \sim \cl U([0,2\pi))} f(\Vec{\omega}^\top\Vec{x} + \xi)\: g^*(\Vec{\omega}^\top\Vec{y} + \xi).
  \end{split}
\end{equation}
In the two bottom lines, we used the fact that $\Vec{\omega}_j$ (and $\xi_j$) are independently and identically distributed, for all $j$, with $\Vec{\omega}_j$ and $\xi_j$ mutually independent. Remark that by the law of large numbers, $\kappa_{f,g}$ thus corresponds to the kernel that the asymmetric inner product $\wh{\kappa}_{f,g}$ approximates when we let the feature space dimension $m$ grow to infinity.

\begin{proposition}[Expected kernel for asymmetric periodic random features]
  \label{prop:expected_kernel}
  Let $\Vec{z}_f$ and $\Vec{z}_g$ be random periodic features, associated with functions $f,g \in \pf$, frequencies $\Vec{\omega}_j \distiid \Lambda = \cl F \sik$ and $\xi_j \distiid \cl U([0,2\pi))$. For any pair $\Vec{x},\Vec{y} \in \bb R^d$, the expected kernel $\kappa_{f,g}(\Vec{x},\Vec{y}) = \bb E_{\bs\Omega,\, \Vec{\xi}} \: \langle\Vec{z}_f(\Vec{x}),\Vec{z}_g(\Vec{y})\rangle$ satisfies
  \begin{equation}
    \label{eq:expected_kernel_expression}
    \ts \kappa_{f,g}(\Vec{x},\Vec{y}) = \sum_{k \in \Zbb} \, F_kG_k^* \, \sik(k(\Vec{x}-\Vec{y})) =: \sik_{f,g}(\Vec{x}-\Vec{y}).
  \end{equation}
  Although here expanded as an infinite series, this kernel is bounded $|\kappa_{f,g}| \leq 1$ since $f,g \in \pf$.
\end{proposition}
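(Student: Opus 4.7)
The plan is to start from the last line of \eqref{eq:def_expected_kernel} and expand the periodic maps $f$ and $g$ in their Fourier series \eqref{eq:decomp-f-F-series}. Writing
\[
f(\bs\omega^\top \Vec{x}+\xi)\,g^*(\bs\omega^\top \Vec{y}+\xi)
= \sum_{k,k'\in\bb Z} F_k\,G_{k'}^*\,e^{\im(k-k')\xi}\,e^{\im(k\bs\omega^\top \Vec{x}-k'\bs\omega^\top \Vec{y})},
\]
I would first take the expectation with respect to the dither $\xi\sim\cl U([0,2\pi))$. Since $\tfrac{1}{2\pi}\int_0^{2\pi} e^{\im(k-k')\xi}\,\ud\xi = \delta_{k,k'}$, the double sum collapses to a single sum indexed by $k$, leaving
\[
\expec{\xi}\,f(\bs\omega^\top \Vec{x}+\xi)\,g^*(\bs\omega^\top \Vec{y}+\xi)
= \sum_{k\in\bb Z} F_k\,G_k^*\,e^{\im k\,\bs\omega^\top(\Vec{x}-\Vec{y})}.
\]
Then I would take the expectation over $\bs\omega\sim\Lambda$ and invoke Bochner's identity \eqref{eq:bochner}, namely $\sik(\Vec{u})=\expec{\bs\omega\sim\Lambda}e^{\im\bs\omega^\top\Vec{u}}$, applied to the vector $k(\Vec{x}-\Vec{y})$, giving the announced expression $\sum_{k\in\bb Z} F_k G_k^*\,\sik(k(\Vec{x}-\Vec{y}))$.

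The one subtlety (which is where the ``main obstacle'' lies) is to justify exchanging the sums with the two expectations. For this I would check absolute summability: since $f,g\in\pf$ are bounded by $1$, Parseval's identity on $[0,2\pi)$ yields $\sum_k |F_k|^2 \leq \|f\|^2 \leq \|f\|_\infty^2\leq 1$ and similarly $\sum_k |G_k|^2\leq 1$, and by Cauchy--Schwarz
\[
\ts \sum_{k\in\bb Z}|F_k G_k^*| \leq \big(\sum_k|F_k|^2\big)^{1/2}\big(\sum_k|G_k|^2\big)^{1/2}\leq 1.
\]
Combined with $|\sik(k(\Vec{x}-\Vec{y}))|\leq \sik(\Vec{0})=1$ and $|e^{\im(k-k')\xi}|=|e^{\im k \bs\omega^\top \Vec{u}}|=1$, this provides a uniform integrable majorant, so Fubini's theorem legitimates all the interchanges performed above.

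Finally, the bound $|\kappa_{f,g}|\leq 1$ can be read off immediately either from the series expression (using the Cauchy--Schwarz estimate just obtained together with $|\sik(\cdot)|\leq 1$), or, more directly, from \eqref{eq:def_expected_kernel} and the observation that $|f(\bs\omega^\top\Vec{x}+\xi)\,g^*(\bs\omega^\top\Vec{y}+\xi)|\leq \|f\|_\infty\|g\|_\infty\leq 1$ pointwise, so its expectation inherits the same bound. This concludes the proof sketch, with shift-invariance being manifest from the fact that only the difference $\Vec{x}-\Vec{y}$ appears in the final expression.
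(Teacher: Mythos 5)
Your computation is the same as the paper's: expand $f$ and $g$ in Fourier series, let the dither average kill the cross-terms via $\frac{1}{2\pi}\int_0^{2\pi}e^{\im(k-k')\xi}\,\ud\xi=\delta_{k,k'}$, and apply Bochner's identity to the remaining $\expec{\bs\omega\sim\Lambda}e^{\im k\,\bs\omega^\top(\Vec{x}-\Vec{y})}$; the boundedness $|\kappa_{f,g}|\leq 1$ from $\|f\|_\infty\|g\|_\infty\leq 1$ is also how the paper reads it off. So in substance the proposal matches the paper's proof (the paper merely takes $\expec{\bs\omega}$ before $\expec{\xi}$, which is immaterial).

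One caveat on your added rigor: the majorant you invoke for Fubini does not actually dominate the \emph{double} series. Before the collapse you need $\sum_{k,k'}|F_k G_{k'}^*|=\big(\sum_k|F_k|\big)\big(\sum_{k'}|G_{k'}|\big)$, and the $\ell_1$ norm of the Fourier coefficients of a bounded function in $\pf$ can be infinite --- e.g.\ for the square wave $q$ of \eqref{eq:universalquantizer}, $\sum_k|Q_k|$ diverges. Your Cauchy--Schwarz bound $\sum_k|F_kG_k^*|\leq 1$ only controls the already-collapsed diagonal sum. The clean fix is to bypass the double sum: for fixed $\bs\omega$, note that $\expec{\xi}f(\bs\omega^\top\Vec{x}+\xi)\,g^*(\bs\omega^\top\Vec{y}+\xi)$ is the $L^2([0,2\pi))$ inner product of the shifts $f(\bs\omega^\top\Vec{x}+\cdot)$ and $g(\bs\omega^\top\Vec{y}+\cdot)$, so Parseval (valid for $L^2$ functions, no absolute summability needed) gives directly the single series $\sum_k F_kG_k^*e^{\im k\,\bs\omega^\top(\Vec{x}-\Vec{y})}$; the subsequent interchange with $\expec{\bs\omega}$ is then justified by dominated convergence using precisely your bound $\sum_k|F_kG_k^*|\leq 1$. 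With that adjustment the argument is complete (the paper itself performs the interchange without comment).
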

\begin{proof}
  Starting from the last line of~\eqref{eq:def_expected_kernel}, and decomposing $f$ and $g$ as their Fourier series,
  \begin{align}
    \kappa_{f,g}(\Vec{x},\Vec{y}) &= \ts \expec{\Vec{\omega},\xi} \: \sum_{k \in \bb Z}\sum_{k' \in \bb Z} F_kG^*_{k'} \: e^{\im k ( \Vec{\omega}^\top\Vec{x} + \xi)}e^{-\im k' ( \Vec{\omega}^\top\Vec{y} + \xi)} \nonumber\\
                                  &= \ts \sum_{k,k'} F_kG^*_{k'}  \,\expec{\Vec{\omega} \sim \Lambda} e^{\im \Vec{\omega}^\top(k\Vec{x}-k'\Vec{y})}  \,\expec{\xi \sim \cl U([0,2\pi))} e^{\im (k-k')\xi}\nonumber\\
                                  &= \ts \sum_{k,k'} F_k G_{k'}^* \, \sik( k \Vec{x} - k' \Vec{y}) \, \delta_{k,k'}\label{eq:tmp1}\\
                                  &= \ts \sum_{k}  F_k G_{k}^* \, \sik( k (\Vec{x} - \Vec{y})), \nonumber
  \end{align}
  where in the third line we used Bochner's theorem~\eqref{eq:bochner} and the orthogonality of complex exponentials on one period: $\frac{1}{2\pi}\int_{0}^{2\pi}e^{\im kt}e^{-\im k't}\mathrm{d}t = \delta_{k,k'}$.
\end{proof}

\begin{example}
	As will be further developed in Sec.~\ref{sec:quantized}, when $f(\cdot) = \cos(\cdot)$ and $g(\cdot) = q(\cdot)$ the universal quantization defined in~\eqref{eq:universalquantizer}, we observe that the expected kernel is (up to a proportionality constant) exactly the ``initial'' kernel approximated by the RFF, \ie $\kappa_{\cos,q}(\Vec{x},\Vec{y}) = \frac{2}{\pi} \kappa(\Vec{x},\Vec{y})$.
\end{example}

The dither $\Vec{\xi}$ plays here a crucial role: it cancels out (in expectation) the ``cross-terms'' in~\eqref{eq:tmp1}, each related to $F_k G_{k'}^* \, \sik( k \Vec{x} - k' \Vec{y})  = F_kG_{k'}^*\,\kappa(k\Vec{x},k'\Vec{y})$, that have different scales $k \neq k'$ for~$\Vec{x}$ and~$\Vec{y}$. As a consequence, the expected kernel is---as any kernel should be---conjugate symmetric, \ie $\kappa_{f,g}(\Vec{x},\Vec{y}) = \kappa^*_{f,g}(\Vec{y},\Vec{x})$, despite the asymmetry of its empirical approximation,  \ie $\wh{\kappa}_{f,g}(\Vec{x},\Vec{y}) \neq \wh{\kappa}^*_{f,g}(\Vec{y},\Vec{x})$. The dithering can thus be thought of as a means to symmetrize, through expectation, the kernel associated with the asymmetric features inner product.

For the same reason, the dithering ensures that the expected kernel remains shift-invariant; Prop.~\ref{prop:expected_kernel} provides $\kappa_{f,g}(\Vec{x},\Vec{y}) = \sik_{f,g}(\Vec{x}-\Vec{y})$, where $\sik$ in~\eqref{eq:expected_kernel_expression} is the kernel related to the frequency sampling pattern $\Lambda$. The expected kernel is thus a \emph{scale mixture}, a linear combination of copies of $\sik$, scaled (actually contracted) by an integer factor $k$ (which is non-zero, since $F_0 = G_0 = 0$), and weighted by coefficients $F_kG^*_k$. In general, we expect this scale mixture $\sik_{f,g}$ to be narrower than the initial kernel $\sik$ (or more spread out in the frequency domain).

In general, however, the positive definiteness of $\kappa$ does not imply that $\kappa_{f,g}$ is \pd, since taking, for instance, $g = -f$ (\ie $F_kG_k^* = -|F_k|^2 < 0$) induces that $\kappa_{f,g}(\Vec{x},\Vec{x}) = - \sum_k |F_k|^2 \sik(\Vec{0}) < 0$ for all $\bs x \in \bb R^d$. Whether $\kappa_{f,g}$ is a positive definite kernel or not depends on the phase synchronization between the Fourier coefficients of $f$ and $g$. A sufficient condition for $\kappa_{f,g}$ to be \pd is to ensure that $F_kG_k^* \in \bb R^+$ for all $k$, as verified by taking $f = g$ in Sec.~\ref{sec:correctPTB}, or the combination $f=q$, $g(\cdot) = \cos(\cdot)$ in Sec.~\ref{sec:quantized} and Sec.~\ref{sec:experiments}.

\begin{remark}
  \label{rem:other-normalisation}
In light of \eqref{eq:expected_kernel_expression}, we could decide to normalize our approach differently. Assuming that $f,g \in \pf$ are not orthogonal, \ie $\scp{f}{g} \neq 0$, we can define, for $\bs x,\bs y \in \Sigma$, the normalized kernels
\begin{equation}
  \label{eq:other-normalisation}
  \ts \wt{\kappa}_{f,g}(\Vec{x},\Vec{y}) := \frac{1}{\scp{f}{g}} \langle \Vec{z}_{f}(\Vec{x}), \Vec{z}_{g}(\Vec{y}) \rangle,\quad \dot{\kappa}_{f,g}(\bs x,\bs y) := \frac{1}{\scp{f}{g}}  \kappa_{f,g}(\bs x,\bs y),\quad \dsik_{f,g}(\bs u) := \frac{1}{\scp{f}{g}}  \sik_{f,g}(\bs u).   
\end{equation}
Since $\sik(\bs 0) = 1$ and, from \eqref{eq:expected_kernel_expression}, $\scp{f}{g} = \sum_k F_k G^*_k = \sik_{f,g}(\bs 0)$, \eqref{eq:other-normalisation} ensures that, for any $\bs x \in \Sigma$, ${\bb E\, \wt{\kappa}_{f,g}(\bs x,\bs x)} = \dot{\kappa}_{f,g}(\bs x,\bs x) = \dsik_{f,g}(\bs 0) = 1 = \sik(\bs 0)$. Without guaranteeing that $\dot{\kappa}_{f,g}$ is \pd, this normalization prevents the counterexample $f=-g$ to lead to a kernel with negative value on the origin. For clarity, we do not base our following developments on $\dot{\kappa}_{f,g}(\bs x,\bs y)$ but we will refer to this useful quantity in Sec.~\ref{sec:quantized} when, for $f=q$ and $g(\cdot)=\cos(\cdot)$, we will need to compare $\dsik_{f,g}$ to the RFF kernel $\sik$. 
\end{remark}

Let us now provide an alternative expression of the expected kernel $\kappa_{f,g}(\Vec{x},\Vec{y}) = \sik_{f,g}(\Vec{x}-\Vec{y})$, that will prove to be useful in the next section. 
\begin{lemma}[]
  \label{lem:expected_kernel_with_convolution}
  Define the \emph{correlation} $h$ between $f$ and $g$,
  \begin{equation}
    h(t) := \ts (f * \bar{g})(t) = \frac{1}{2\pi}\Integr{0}{2\pi}{f(\tau)g^*(\tau-t)}{\tau},
  \end{equation}
  where $\bar{g}(t) := g^*(-t)$ denotes the conjugate reverse of $g$, and $*$ the convolution operator on $[0,2\pi]$. The expected (shift-invariant) kernel $\sik_{f,g}$ can be expressed by
  \begin{equation}
    \sik_{f,g}(\Vec{u}) = \ts \expec{\Vec{\omega} \sim \Lambda} h(\Vec{\omega}^\top\Vec{u}) = \expec{\Vec{\omega} \sim \Lambda} (f * \bar{g})(\Vec{\omega}^\top\Vec{u}).
  \end{equation}
\end{lemma}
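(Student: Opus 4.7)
My plan is to compute $\sik_{f,g}(\bs u)$ directly from the integral definition \eqref{eq:def_expected_kernel}, by evaluating the dither expectation first. Conditional on $\bs\omega$, I will use the fact that $\xi \sim \cl U([0,2\pi))$ combined with the $2\pi$-periodicity of $f$ and $g$ implies that the shifted variable $\tau := \xi + \bs\omega^\top\bs x$ (reduced modulo $2\pi$) is also uniform on $[0,2\pi)$. Changing variables from $\xi$ to $\tau$ converts the inner integrand $f(\bs\omega^\top\bs x + \xi)\, g^*(\bs\omega^\top\bs y + \xi)$ into $f(\tau)\, g^*(\tau - \bs\omega^\top\bs u)$ with $\bs u := \bs x - \bs y$, and hence
$$\expec{\xi \sim \cl U([0,2\pi))} f(\bs\omega^\top\bs x + \xi)\, g^*(\bs\omega^\top\bs y + \xi) \;=\; \ts\frac{1}{2\pi}\int_0^{2\pi} f(\tau)\, g^*(\tau - \bs\omega^\top\bs u)\, \ud\tau \;=\; h(\bs\omega^\top \bs u),$$
by the very definition of $h = f * \bar g$ provided in the statement.

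Taking the remaining expectation over $\bs\omega \sim \Lambda$ and invoking the shift-invariance established by Prop.~\ref{prop:expected_kernel} (so that $\kappa_{f,g}(\bs x,\bs y) = \sik_{f,g}(\bs u)$) then delivers the announced identity $\sik_{f,g}(\bs u) = \expec{\bs\omega \sim \Lambda} h(\bs\omega^\top \bs u)$.

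As a consistency check, the result should agree with \eqref{eq:expected_kernel_expression}: expanding $h$ as a Fourier series one finds $H_k = F_k G_k^*$, because the Fourier coefficients of $\bar g(t) = g^*(-t)$ are $G_k^*$ and the normalized convolution on $[0,2\pi]$ multiplies Fourier coefficients. Then $\expec{\bs\omega} h(\bs\omega^\top \bs u) = \sum_k F_k G_k^*\, \expec{\bs\omega} e^{\im k \bs\omega^\top \bs u} = \sum_k F_k G_k^*\, \sik(k \bs u)$ by Bochner's theorem, recovering Prop.~\ref{prop:expected_kernel}.

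There is essentially no hard step here: the lemma is a clean rewriting of the computation behind Prop.~\ref{prop:expected_kernel} that isolates the dither integral into a single-variable correlation, avoiding the Fourier series expansion altogether. The only point requiring care is the shift-invariance of $\cl U([0,2\pi))$ on the circle, which follows immediately from the $2\pi$-periodicity of $f$ and $g$ and therefore adds no real difficulty.
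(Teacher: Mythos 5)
Your proof is correct, and its main line of attack differs from the paper's. The paper proves the lemma in one line: by the convolution theorem the Fourier coefficients of $h$ are $H_k = F_kG_k^*$, and this is plugged back into the Fourier-series computation already carried out in the proof of Prop.~\ref{prop:expected_kernel} (this is exactly what you include as your ``consistency check''). Your primary argument instead evaluates the dither expectation directly by a change of variables $\tau = \xi + \bs\omega^\top\bs x \pmod{2\pi}$, using the translation invariance of $\cl U([0,2\pi))$ on the circle together with the $2\pi$-periodicity of $f$ and $g$, which turns the inner expectation into the circular correlation $h(\bs\omega^\top\bs u)$ with $\bs u = \bs x - \bs y$. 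This route is arguably cleaner: since $\|f\|_\infty, \|g\|_\infty \leq 1$, the single integral is trivially well defined and no termwise exchange of an infinite Fourier series with the expectation is needed; moreover it re-derives the shift-invariance $\kappa_{f,g}(\bs x,\bs y)=\sik_{f,g}(\bs x-\bs y)$ as a by-product rather than importing it, so your appeal to Prop.~\ref{prop:expected_kernel} is only needed to identify $\sik_{f,g}$ with the series \eqref{eq:expected_kernel_expression} (which your consistency check handles anyway). What the paper's approach buys is brevity, since the Fourier-series machinery is already set up in the proof of Prop.~\ref{prop:expected_kernel}; what yours buys is a self-contained, convergence-issue-free derivation.
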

\begin{proof}
  By the convolution theorem, the Fourier series coefficients of $h$ are given by $H_k = F_kG_k^*$. The result follows from plugging this fact into the proof of Prop.~\ref{prop:expected_kernel}.
\end{proof}

\noindent Lemma~\ref{lem:expected_kernel_with_convolution} can be interpreted as an expansion similar to the one Bochner's theorem provides: whereas the initial kernel $\sik(\Vec{u}) = \Integr{}{}{e^{\im \Vec{\omega}^\top\Vec{u}}}{\Lambda(\Vec{\omega})}$ can be expressed in a basis that is a family complex exponentials $e^{\im \Vec{\omega}^\top\Vec{u}}$ with ``coordinates'' given by $\Lambda$, $\sik_{f,g}(\Vec{u}) = \Integr{}{}{h(\Vec{\omega}^\top\Vec{u})}{\Lambda(\Vec{\omega})}$ can be expressed in a basis that is the family of functions $\{ \bs u \to h(\Vec{\omega}^\top\Vec{u}): \bs \omega \in \bb R^d\}$, another type of $2\pi-$periodic functions (replacing the complex exponential $\exp(\im \cdot)$ with $h(\cdot)$). 

As a side note, when $\Lambda \sim \cl N(0, \Id_d)$ and $q$ is the square wave representing the universal quantization, Lemma~\ref{lem:expected_kernel_with_convolution} allows us to easily explain the linear slope of $\sik_{q,q}$ at the origin (see Fig.~\ref{fig:univquant}, left). Indeed, in this case $h$ is the autocorrelation of $q$, the triangular wave
$$
\ts h(t) = (q \ast \bar q)(t) = \max(1 - \frac{|t'|}{\pi}, \frac{|t'|}{\pi} - 1),\quad \text{with}\ t' := t \!\!\!\mod 2\pi.
$$
Therefore, defining $\tilde h(t) := 1 - {|t|}/{\pi}$, since $h(t) = \tilde h(t)$ for $|t| < \pi$, and $0 \leq h(t) - \tilde h(t) \leq 2 (|t|-\pi)/\pi$ for $|t| \geq \pi$, we find 
$$
\ts \sik_{q,q}(\bs u) = \bb E_{\bs \omega \sim \Lambda} h(|\bs \omega^\top \bs u|) = \bb E_{\bs \omega \sim \Lambda} \tilde h(|\bs \omega^\top \bs u|) + R(\bs u, \Lambda), 
$$
with $R(\bs u, \Lambda) := \bb E_{\bs \omega \sim \Lambda} [h(|\bs \omega^\top \bs u|) - \tilde h(|\bs \omega^\top \bs u|)] \geq 0$.
Since $\bs \omega^\top \bs u \sim \|\bs u\| g$ with $g \sim \cl N(0,1)$, and $\bb E|g| = \sqrt{2/\pi}$, we find
\begin{align*}
  R(\bs u, \Lambda)&\ts \leq \frac{2}{\sqrt{2\pi}}\,\int_{\pi/\|\bs u\|}^{+\infty} \frac{2}{\pi} (r \|\bs u\|  - \pi) e^{-\frac{r^2}{2}}\ud r = \frac{c}{\|\bs u\|}\,\int_{\pi}^{+\infty} (s - \pi) e^{-\frac{s^2}{2 \|\bs u\|^2}}\ud s\\
                   &\ts = \frac{c}{\|\bs u\|}\,\int_{0}^{+\infty} s e^{-\frac{s^2 + \pi^2 + 2\pi s}{2 \|\bs u\|^2}}\ud s \leq \frac{c}{2} \,e^{-\frac{\pi^2}{2 \|\bs u\|^2}}\,\bb E[\|\bs u\| |g|] = c' \|\bs u\| e^{-\frac{\pi^2}{2 \|\bs u\|^2}}.
\end{align*}
Moreover, $\bb E_{\bs \omega \sim \Lambda} h(|\bs \omega^\top \bs u|) = 1 - \frac{\|\bs u\|}{\pi} \bb E_{\bs \omega \sim \Lambda} |g| = 1 - \frac{\sqrt 2}{\pi^{3/2}} \|\bs u\|$, which finally proves that
\begin{equation}
  \label{eq:linear-kernel-qq-at-origin}
  \ts \big | \sik_{q,q}(\bs u) - (1 - \frac{\sqrt 2}{\pi^{3/2}} \|\bs u\|)\big| = O\big(\|\bs u\| e^{-\frac{\pi^2}{2\|\bs u\|^2}}\big).  
\end{equation}
This shows that for $\|\bs u\|\ll \pi$,  $\sik_{q,q}(\bs u) \asymp 1 - \frac{\sqrt 2}{\pi^{3/2}} \|\bs u\|$.

\section{Approximation error analysis (non-asymptotic case)}
\label{sec:nonasymptotic}

In the practical setting where the vectors $\Vec{z}_{f}(\Vec{x})$ and $\Vec{z}_{g}(\Vec{y})$ are to be quickly processed or stored in memory, their size $m$ must be as small as possible. On the other hand, setting $m$ too small hurts the empirical estimation $\wh{\kappa}_{f,g}(\Vec{x},\Vec{y})$ of the expected kernel $\kappa_{f,g}(\Vec{x},\Vec{y}) = \bb E \, \wh{\kappa}_{f,g}(\Vec{x},\Vec{y})$. To understand this trade-off, we are thus interested in a probabilistic bound for the (absolute) kernel approximation error $|\wh{\kappa}_{f,g} - \kappa_{f,g}|$, as a function of the RPF dimension $m$. We give here an answer to this question under generic assumptions, and show how to apply it in a concrete situation---for asymmetric kernel estimation with one-bit quantized RFF---in Sec.~\ref{sec:quantized}.

\subsection{Non-uniform approximation error}
\label{sec:non-unif-appr}

Ultimately, we want to obtain a (probabilistic) bound for the kernel approximation error that holds uniformly over all $\Vec{x},\Vec{y} \in \Sigma$. First bounding the error for one \emph{fixed} pair $(\Vec{x},\Vec{y})$ is often used as an easier intermediary step. This is provided by the following proposition.
\begin{proposition}[Non-uniform kernel approximation error from asymmetric periodic random features]
  \label{prop:kernelApproxFixedPair}
  For two functions $f,g \in \pf$, let $\Vec{z}_f, \Vec{z}_g$ be random periodic features associated with frequencies $\bs\Omega$ and a dither $\Vec{\xi}$.  For any \emph{fixed} pair $(\Vec{x},\Vec{y}) \in \bb R^d \times \bb R^d$, the inner product $\wh{\kappa}_{f,g}(\Vec{x},\Vec{y}) = \langle \Vec{z}_{f}(\Vec{x}), \Vec{z}_{g}(\Vec{y}) \rangle$ concentrates, in probability over the draw of $\bs\Omega \sim \Lambda^m, \Vec{\xi} \sim \cl U^m([0,2\pi))$, around $\kappa_{f,g}(\Vec{x},\Vec{y}) = \bb E_{\bs\Omega,\, \Vec{\xi}} \: \langle\Vec{z}_f(\Vec{x}),\Vec{z}_g(\Vec{y})\rangle$ as
  \begin{equation}
    \mathbb{P} \left[\left| \wh{\kappa}_{f,g}(\Vec{x},\Vec{y}) - \kappa_{f,g}(\Vec{x},\Vec{y}) \right| \leq \epsilon\right] \geq 1 - 2e^{-m\epsilon^2/2}.
  \end{equation}
\end{proposition}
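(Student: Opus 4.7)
The natural approach is to write the estimator as an empirical mean of i.i.d.\ bounded summands and invoke Hoeffding-type concentration. Concretely, unrolling the definitions,
\begin{equation*}
  \wh{\kappa}_{f,g}(\Vec{x},\Vec{y}) \;=\; \frac{1}{m}\sum_{j=1}^{m} Z_j, \qquad Z_j \;:=\; f(\bs{\omega}_j^{\top}\Vec{x}+\xi_j)\, g^{*}(\bs{\omega}_j^{\top}\Vec{y}+\xi_j),
\end{equation*}
where the pairs $(\bs{\omega}_j,\xi_j)$ are \iid, so the $Z_j$ are \iid\ too. I would first verify the two ingredients needed to feed a Hoeffding inequality: (i)~the correct mean, and (ii)~boundedness. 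For (i), by Proposition~\ref{prop:expected_kernel} (equivalently by the second line of \eqref{eq:def_expected_kernel}), each $Z_j$ satisfies $\bb E Z_j = \kappa_{f,g}(\Vec{x},\Vec{y})$. For (ii), since $f,g\in \pf$ are normalized so that $\|f\|_\infty,\|g\|_\infty \leq 1$, each summand obeys $|Z_j|\leq 1$ almost surely, uniformly in $(\Vec{x},\Vec{y})$.

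Once these two facts are in place, the tail bound is a one-line application of Hoeffding's inequality for \iid\ bounded variables. In the real-valued case, $Z_j \in [-1,1]$ and Hoeffding immediately delivers $\Prob[|\tfrac{1}{m}\sum Z_j - \bb E Z_j|\geq \epsilon]\leq 2\exp(-2m\epsilon^2/(2)^2) = 2\exp(-m\epsilon^2/2)$, exactly as claimed. In the complex case, the same bound follows from a Hoeffding inequality for complex-valued variables bounded by $1$ in modulus, obtained either as a direct corollary of the standard statement applied separately to $\Re(Z_j)$ and $\Im(Z_j)$ and then recombined, or by an application of McDiarmid's inequality to the function $(Z_1,\dots,Z_m)\mapsto |\tfrac{1}{m}\sum_j Z_j - \kappa_{f,g}(\Vec{x},\Vec{y})|$, whose bounded differences constants are $2/m$.

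I do not expect any genuine obstacle in this proof: the proposition is a pointwise (non-uniform) warm-up, and everything reduces to checking the \iid\ structure and the trivial uniform bound $|Z_j|\leq 1$ that comes for free from $f,g\in \pf$. The real difficulty, which the rest of Section~\ref{sec:nonasymptotic} is devoted to, lies downstream: promoting this pointwise estimate to a bound that is \emph{uniform} over all pairs in the infinite signal set $\Sigma$. That step will require (a)~covering arguments based on the Kolmogorov entropy $\cl H_\eta(\Sigma)$ introduced in Section~\ref{sec:background}, and (b)~a substitute for Lipschitz continuity adapted to the possibly discontinuous $f,g$, namely the mean Lipschitz smoothness announced in the introduction. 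None of this is needed here; a clean statement of Hoeffding for bounded (complex) \iid\ variables, applied to the $Z_j$ above, suffices.
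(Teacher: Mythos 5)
Your proposal matches the paper's own proof essentially verbatim: both write $\wh{\kappa}_{f,g}(\Vec{x},\Vec{y}) = \frac{1}{m}\sum_j Z_j$ with $Z_j = f(\Vec{\omega}_j^\top\Vec{x}+\xi_j)\,g^*(\Vec{\omega}_j^\top\Vec{y}+\xi_j)$, note that the $Z_j$ are \iid with mean $\kappa_{f,g}(\Vec{x},\Vec{y})$ and $|Z_j|\leq \|f\|_\infty\|g\|_\infty \leq 1$, and conclude by Hoeffding's inequality. Your additional remark on handling complex-valued summands (via real/imaginary parts or McDiarmid) is a reasonable refinement of a point the paper glosses over, but it does not change the approach.
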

\begin{proof}
  We rewrite $\langle \Vec{z}_{f}(\Vec{x}),\Vec{z}_{g}(\Vec{y}) \rangle = \frac{1}{m}\sum_j Z_j$, with the random variables $Z_j := f(\Vec{\omega}_j^\top\Vec{x} + \xi_j)g^*(\Vec{\omega}_j^\top\Vec{y} + \xi_j)$. The $Z_j$ variables are \iid, have mean $\kappa_{f,g}(\Vec{x},\Vec{y})$ by definition of the expected kernel~\eqref{eq:def_expected_kernel}, and are bounded by $|Z_j| \leq \|f\|_{\infty}\|g\|_{\infty} \leq 1$ (because $f,g \in \pf$). The result follows by Hoeffding's inequality.
\end{proof}

\subsection{Uniform approximation error}
\label{sec:unif-appr-error}

We now want to extend the error bound in Prop.~\ref{prop:kernelApproxFixedPair} to hold not only for one fixed pair $(\Vec{x},\Vec{y})$ but simultaneously over all pairs $(\Vec{x},\Vec{y}) \in \Sigma \times \Sigma$; this is called a \emph{uniform bound}. The classical argument invoked in this type of proofs (\eg~\cite{Rahimi2008RFF,baraniuk2008simple,puy2017recipes2}) goes as follows. If $\Sigma$ is a finite set (of finite cardinality $|\Sigma|$), the uniform bound is obtained by applying a union bound over $|\Sigma|^2$ instances of Prop.~\ref{prop:kernelApproxFixedPair} (one for each pair in $\Sigma \times \Sigma$). In the case where $\Sigma \subset \bb R^d$ is an infinite but compact set, the strategy is to bound the approximation error on a finite set $\Sigma_{\eta}$ that covers $\Sigma$ by balls of some radius $\eta > 0$, then to extend this bound by some notion of continuity (smoothness) over the $\eta-$balls. We then obtain a bound which holds over $\Sigma_{\eta} + \eta \bb B^d_2 \supseteq \Sigma$, which concludes the proof.

In our setting, the last step of this proof technique would ideally use Lipschitz continuity; we say that a function $f : \bb R \rightarrow \bb C$ is \emph{Lipschitz continuous} with constant $L_f$ if, for all $t,t' \in \bb R$, $|f(t) - f(t')| \leq L_f |t-t'|$, which is equivalent to  
\begin{equation}
  \label{eq:Lipschitz}
  \forall t \in \bb R, \forall \delta > 0, \quad \sup_{r \in [-\delta, \delta]} \{ |f(t + r) - f(t)| \} \leq L_f \cdot \delta.
\end{equation}
However, this strategy fails when any of the maps $f$ or $g$ is not Lipschitz continuous (\eg when they present discontinuities, such as the ``square wave'' universal quantization map $q$ from~\eqref{eq:universalquantizer}). To be able to include such maps in our analysis, we must define a more permissive notion of smoothness, just as the $T$-part Lipschitz property defined in~\cite{boufounos2017representation} (but without the limitations explained in Sec.~\ref{sec:correctPTB}). In this work, we rather introduce the concept of \emph{mean Lipschitz smoothness property} for periodic function in $\pf$. Intuitively, a periodic function is smooth in the mean Lipschitz sense if its largest local deviation is small \emph{on average}. 

\begin{definition}[Mean Lipschitz property]
  \label{def:meanSmoothness}
  Let $f : \bb R \rightarrow \bb C$ be a generic periodic function (here \wlogg assumed of period $2\pi$). We say it is \emph{mean Lipschitz smooth} with mean Lipschitz constant $L^{\mu}_f$ if for all radii $\delta \in (0,\pi]$, the average maximum deviation of $f$ in $[-\delta,\delta]$ is bounded by $L^{\mu}_f \cdot \delta$:
  \begin{equation}
    \label{eq:meanSmooth}
    \expec{t \sim \cl U([0,2\pi))} \sup_{r \in [-\delta, \delta]} \{ |f(t + r) - f(t)| \} = \tfrac{1}{2\pi} \int_{0}^{2\pi} \sup_{r \in [-\delta, \delta]} \{ |f(t + r) - f(t)| \} \: \mathrm{d}t \leq L^{\mu}_f \cdot \delta.
  \end{equation}
\end{definition}

The mean Lipschitz property (that we will refer to as ``mean smoothness'' to avoid confusion with the usual Lipschitz continuity when necessary) can truly be understood as the Lipschitz continuity after an averaging.
It is reminiscent of the mean modulus of continuity from~\cite{szasz1937fourier,wik1972criteria} but where the order of the supremum and averaging operations are reversed (which is less restrictive).
If $f$ is Lipschitz continuous with Lipschitz constant $L_f$, then it has necessarily also the mean smoothness property with constant $L^{\mu}_f \leq L_f$. However, it is possible that $L^{\mu}_f \ll L_f$ (if the large slopes of $f$ are concentrated on a small portion of $[0,2\pi]$), and discontinuous function can have a finite $L^{\mu}_f$ constant---for example, the square wave $q$ representing the universal quantization is mean smooth with constant $L_q^{\mu} = \frac{4}{\pi}$ (see Prop.~\ref{prop:meanSmoothUniversalQuantization}), although it is not a Lipschitz continuous function. Leaving the detailed proof to Sec.~\ref{sec:quantized}, the trick is to observe that the integrand $I_{\delta}(t) = \sup_{r \in [-\delta, \delta]} \{ |f(t + r) - f(t)| \}$ is supported on an interval whose length is proportional to~$\delta$, as shown Fig.~\ref{fig:univquant}, left. Moreover, the convolution of any $\pf$ function with a mean smooth $\pf$ function yields a Lipschitz continuous one.
\begin{lemma}
  \label{lem:mean_smooth_convolution}
  Given two functions $f,g \in \pf$, among which $f$ is mean smooth with constant $L^{\mu}_f$, their convolution $(f * g)$ is Lipschitz continuous with constant $L_{f * g} \leq L^{\mu}_f$.
\end{lemma}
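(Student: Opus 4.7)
The plan is to convert the Lipschitz estimate on $(f*g)$ into an integral average of local variations of $f$, and then apply the mean smoothness hypothesis directly. Fix $t \in \bb R$ and $\delta \in (0, \pi]$. Using the convolution on $[0,2\pi]$ with the $\frac{1}{2\pi}$ normalization, and exploiting its symmetry (i.e.\ putting the increment on $f$ rather than $g$ by a change of variables), I would write
\begin{equation*}
  (f*g)(t+\delta) - (f*g)(t) = \tfrac{1}{2\pi}\int_0^{2\pi} g(\tau)\big[f(t+\delta - \tau) - f(t - \tau)\big]\,\ud \tau.
\end{equation*}

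Next, take absolute values, bound $|g|$ by $\|g\|_\infty \leq 1$ (since $g \in \pf$), and perform the change of variable $u = t - \tau$; by $2\pi$-periodicity of $f$ this yields
\begin{equation*}
  \big|(f*g)(t+\delta) - (f*g)(t)\big| \leq \tfrac{1}{2\pi}\int_0^{2\pi} |f(u+\delta) - f(u)|\,\ud u.
\end{equation*}
Now the trivial pointwise bound $|f(u+\delta) - f(u)| \leq \sup_{r \in [-\delta,\delta]} |f(u+r) - f(u)|$ combined with Def.~\ref{def:meanSmoothness} gives
\begin{equation*}
  \big|(f*g)(t+\delta) - (f*g)(t)\big| \leq \expec{u \sim \cl U([0,2\pi))} \sup_{r \in [-\delta,\delta]} |f(u+r) - f(u)| \leq L^{\mu}_f \cdot \delta,
\end{equation*}
which is the Lipschitz bound with constant $L^{\mu}_f$ for increments $\delta \leq \pi$.

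Finally, I would extend the bound to arbitrary $t, t' \in \bb R$: writing $t - t' = 2\pi k + s$ with $|s| \leq \pi$, the $2\pi$-periodicity of $(f*g)$ gives $|(f*g)(t) - (f*g)(t')| = |(f*g)(t' + s) - (f*g)(t')| \leq L^{\mu}_f |s| \leq L^{\mu}_f |t - t'|$, since $|s| \leq |t - t'|$ whenever $k \neq 0$. No step is truly hard; the only subtle point is to remember to push the increment onto $f$ (so that the mean smoothness of $f$, rather than any smoothness of $g$, can be used) and to invoke periodicity to lift the $|t-t'| \leq \pi$ restriction inherited from Def.~\ref{def:meanSmoothness}.
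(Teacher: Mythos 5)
Your proof is correct and follows essentially the same route as the paper's: push the increment onto $f$ inside the convolution integral, bound $|g|$ by $\|g\|_\infty \leq 1$, change variables using periodicity, and invoke the mean Lipschitz property. The only difference is that you explicitly handle the extension from increments $|t-t'|\leq\pi$ to arbitrary increments via $2\pi$-periodicity, a small point the paper leaves implicit.
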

\begin{proof}
  Re-writing~\eqref{eq:Lipschitz} for $(f * g)(t) = \frac{1}{2\pi}\Integr{0}{2\pi}{f(t - \tau) g(\tau)}{\tau}$ gives, since $\|g\|_{\infty} \leq 1$,
  \begin{equation*}
    \begin{split}
      \ts \sup_{r\in [-\delta, \delta]} \{ |(f*g)(t + r) - (f*g)(t)| \} &= \ts \sup_{|r|\leq \delta} \left|\frac{1}{2\pi} \Integr{0}{2\pi}{\left[f(t + r - \tau) - f(t - \tau) \right]g(\tau)}{\tau} \right|\\
      &\leq \ts \frac{1}{2\pi} \sup_{|r|\leq \delta}  \Integr{0}{2\pi}{\left|f(t + r - \tau) - f(t - \tau)\right|  }{\tau}\\
      &\leq \ts \frac{1}{2\pi} \Integr{0}{2\pi}{ \sup_{|r|\leq \delta} \left|f(\tau' + r) - f(\tau')\right|  }{\tau'} \leq \ts L^{\mu}_f \delta.
    \end{split}
  \end{equation*}
\end{proof}

\noindent In other words, the convolution of two $\pf$ functions, among whom one of them is mean Lipschitz, is ``smoother'' than its factors, a property that comes from the convolution itself (which is to be put in correspondence with the fact that for $f$ differentiable and $g$ discontinuous, $f * g$ is differentiable).
In particular, if \textit{both} $f$ and $g$ are mean smooth with constants $L^{\mu}_f$ and $L^{\mu}_g$ respectively, their \emph{correlation} $h = f * \bar{g}$ is Lipschitz with $L_h \leq \min(L^{\mu}_f,L^{\mu}_g)$. 
Coming back to our setting, this fact allows us (using Lemma~\ref{lem:expected_kernel_with_convolution}) to characterize the Lipschitz continuity of the expected kernel $\sik_{f,g}$. With that, we have all the tools to prove our main result, a uniform bound on the kernel approximation error obtained with possibly discontinuous (but mean smooth) maps.

\begin{proposition}[Uniform kernel approximation error from asymmetric periodic random features]
  \label{prop:kernelApproxUniform}
  Let $\Sigma$ be a compact set and $f,g \in \pf$ periodic functions with finite mean smoothness constants $L^{\mu}_f$ and $L^{\mu}_g$, respectively, and let $C_{\Lambda} < \infty$ such that $\expec{\Vec{\omega} \sim \Lambda} |\Vec{\omega}^\top\Vec{a}| \leq C_{\Lambda} \|\Vec{a}\|_2$ for all $\Vec{a}$ (the kernel smoothness constant).
  
  \noindent For all error level $\epsilon > 0$, provided the feature dimension is larger than
  \begin{equation}
    \label{eq:unif-kern-approx-sample-complex}
    \ts m \geq 128 \cdot \frac{1}{\epsilon^2} \cdot \cl H_{\epsilon/c}(\Sigma),
  \end{equation}
  with the constant $c = 4 C_{\Lambda}(L^{\mu}_f + L^{\mu}_g + 2\min(L^{\mu}_f,L^{\mu}_g))$, the following kernel approximation bounds holds uniformly:
  \begin{equation}
    \ts \big| \wh{\kappa}_{f,g}(\Vec{x},\Vec{y}) - \kappa_{f,g}(\Vec{x},\Vec{y}) \big| \leq \epsilon,\quad \forall \bs x, \bs y \in \Sigma,
  \end{equation}
  with probability exceeding $1-3 \exp( - \frac{m \epsilon^2}{64})$.
\end{proposition}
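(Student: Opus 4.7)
I would follow the classical covering--concentration--smoothness argument. Fix a covering radius $\eta>0$ to be tuned later, let $\Sigma_\eta\subset\Sigma$ be a minimal $\eta$-net with $|\Sigma_\eta|=\cl C_\eta(\Sigma)=\exp(\cl H_\eta(\Sigma))$, and for any $(\bs x,\bs y)\in\Sigma^2$ let $\bs x_0,\bs y_0\in\Sigma_\eta$ be nearest net points. A triangle inequality
\[
|\wh\kappa_{f,g}(\bs x,\bs y)-\kappa_{f,g}(\bs x,\bs y)|\le \underbrace{|\wh\kappa_{f,g}(\bs x_0,\bs y_0)-\kappa_{f,g}(\bs x_0,\bs y_0)|}_{(\text{T1})}+\underbrace{|\wh\kappa_{f,g}(\bs x,\bs y)-\wh\kappa_{f,g}(\bs x_0,\bs y_0)|}_{(\text{T2})}+\underbrace{|\kappa_{f,g}(\bs x,\bs y)-\kappa_{f,g}(\bs x_0,\bs y_0)|}_{(\text{T3})}
\]
splits the job into three pieces. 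Piece (T1) is handled by Prop.~\ref{prop:kernelApproxFixedPair} combined with a union bound over the $|\Sigma_\eta|^2$ net pairs, forcing $m\epsilon^2$ to dominate $\cl H_\eta(\Sigma)$.

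\textbf{Continuity of $\kappa_{f,g}$ (T3).} Lemma~\ref{lem:expected_kernel_with_convolution} rewrites the expected kernel as $\sik_{f,g}(\bs u)=\E_{\bs\omega\sim\Lambda}h(\bs\omega^\top\bs u)$ with the correlation $h=f*\bar g$, and Lemma~\ref{lem:mean_smooth_convolution} ensures $h$ is (ordinary) Lipschitz with constant $L_h\le\min(L^\mu_f,L^\mu_g)$. Combined with the $C_\Lambda$ assumption this gives $(\text{T3})\le L_h\,\E_{\bs\omega}|\bs\omega^\top((\bs x-\bs x_0)-(\bs y-\bs y_0))|\le 2\min(L^\mu_f,L^\mu_g)C_\Lambda\,\eta$, producing the $2\min(L^\mu_f,L^\mu_g)$ contribution inside the constant $c$.

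\textbf{Continuity of $\wh\kappa_{f,g}$ (T2) -- the main step.} Using $\|f\|_\infty,\|g\|_\infty\le 1$, I bound $(\text{T2})\le\tfrac1m\sum_j[A_j(\bs x_0)+B_j(\bs y_0)]$ with $A_j(\bs x_0):=\sup_{\|\bs r\|\le\eta}|f(\bs\omega_j^\top\bs x_0+\xi_j+\bs\omega_j^\top\bs r)-f(\bs\omega_j^\top\bs x_0+\xi_j)|\in[0,2]$ and similarly $B_j$. Crucially, $\xi_j\sim\cl U([0,2\pi))$ makes the phase $\bs\omega_j^\top\bs x_0+\xi_j$ uniform modulo $2\pi$, so, conditioning on $\bs\omega_j$, the inner supremum reduces to one over a scalar interval of half-length $\eta\|\bs\omega_j\|$; Def.~\ref{def:meanSmoothness} then yields $\E_{\xi_j}[A_j\mid\bs\omega_j]\le L^\mu_f\,\min(\eta\|\bs\omega_j\|,\pi)$. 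Averaging over $\bs\omega_j$ and funneling the one-dimensional variation through the $C_\Lambda$ assumption (applied along the direction attaining the inner supremum) gives $\E[A_j]\lesssim L^\mu_f C_\Lambda\eta$. Since $A_j\le 2$, Hoeffding controls $\tfrac1m\sum_j A_j(\bs x_0)$ around this mean at rate $\sqrt{\cl H_\eta(\Sigma)/m}$, and a union bound over $\bs x_0\in\Sigma_\eta$ makes this uniform; the symmetric treatment of $B_j$ furnishes the $(L^\mu_f+L^\mu_g)C_\Lambda\eta$ piece of $c$.

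\textbf{Combining and main difficulty.} Imposing $(\text{T1})+(\text{T2})+(\text{T3})\le\epsilon$ sets $\eta\asymp\epsilon/c$ with $c=4C_\Lambda(L^\mu_f+L^\mu_g+2\min(L^\mu_f,L^\mu_g))$, and after absorbing the three failure probabilities into a single $3\exp(-m\epsilon^2/64)$ this yields the sample complexity \eqref{eq:unif-kern-approx-sample-complex}. The delicate point is (T2): ordinary Lipschitz continuity is unavailable when $f,g$ are discontinuous, and its replacement is Def.~\ref{def:meanSmoothness}, which only bounds \emph{averaged} local oscillations. The uniform dither $\bs\xi$ is what licenses the use of the mean Lipschitz constant after conditioning on $\bs\omega$. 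The subtlest bookkeeping is to transfer the scalar-range bound $L^\mu_f\eta\|\bs\omega\|$ into a clean $L^\mu_f C_\Lambda\eta$ expectation (rather than the coarser $\E\|\bs\omega\|$), exploiting the linear projection $\bs\omega^\top\bs r$ for which the $C_\Lambda$ hypothesis is tight.
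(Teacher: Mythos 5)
Your proposal is correct and follows essentially the same route as the paper's proof: your (T1)/(T2)/(T3) split corresponds to the paper's events $\cl E_3$/$\cl E_1,\cl E_2$/$\delta_4$ (with your (T2) merging the paper's two empirical-continuity terms $\delta_1+\delta_2$), and the key ingredients match — Hoeffding plus a union bound over net pairs for (T1), the dither-plus-mean-Lipschitz bound on the expected local oscillation followed by Hoeffding and a union bound over net centers for (T2), and the Lipschitz continuity of the correlation $h=f*\bar g$ (Lemmas~\ref{lem:expected_kernel_with_convolution} and~\ref{lem:mean_smooth_convolution}) together with $C_\Lambda$ for (T3), ending with the same choice $\eta\asymp\epsilon/c$ and the same constants. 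Even the subtle bookkeeping you flag (passing from the scalar oscillation range $\eta\|\bs\omega\|$ to the $C_\Lambda$-based bound $L^\mu_f C_\Lambda\eta$ via the projection $\bs\omega^\top\bs r$) is handled in the paper in exactly the same way.
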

\begin{proof}
  With $\Sigma_{\eta}$ a finite optimal $\eta-$covering of $\Sigma$, any $\Vec{x}' \in \Sigma$ (resp. $\Vec{y}'$) can be written $\Vec{x}' = \Vec{x} + \Vec{r_x}$ (resp. $\Vec{y}' = \Vec{y} + \Vec{r_y}$) for centers $\Vec{x},\Vec{y} \in \Sigma_{\eta}$, and $\Vec{r_x},\Vec{r_y} \in \eta \bb B^d_2$. The proof proceeds by defining three events $\cl E_1$, $\cl E_2$, $\cl E_3$ from which Prop.~\ref{prop:kernelApproxUniform} follows, and then by bounding the failure probability of their joint occurrence. First, for any covering center $\Vec{x} \in \Sigma_{\eta}$, we can expect that the set of $m$ functions $h^f_{j} : \eta \bb B^d_2 \rightarrow \bb C$ defined for $j \in [m]$ as $ h^f_{j}(\Vec{r}; \Vec{x}) := f(\Vec{\omega}_j^\top\Vec{x} + \Vec{\omega}_j^\top\Vec{r} + \xi_j)$  contains, on average, few ``variations'' over the $\eta-$ball. More precisely, defining the largest variation of $h^f_{j}(\Vec{r}; \Vec{x})$ over the $\eta-$ball as
  \begin{equation*}
    H^f_{j}(\eta;\Vec{x}) := \sup_{\Vec{r} \in \eta \bb B^d_2} |h^f_{j}(\Vec{r}; \Vec{x}) - h^f_{j}(\Vec{0}; \Vec{x})| = \sup_{\Vec{r} \in \eta \bb B^d_2} |f(\Vec{\omega}_j^\top\Vec{x} + \Vec{\omega}_j^\top\Vec{r} + \xi_j) - f(\Vec{\omega}_j^\top\Vec{x} + \xi_j)|,
  \end{equation*}
  we first assume that, given $\epsilon_{1}>0$, the event $\cl E_1$ holds, with
  \begin{equation*}
    \ts \cl E_1 : \sup_{\Vec{x} \in \Sigma_{\eta}} \frac{1}{m}\sum_{j = 1}^m H^f_{j}(\eta;\Vec{x}) \leq L^{\mu}_f \eta C_{\Lambda} + \epsilon_{1}.
  \end{equation*}
  Similarly for $g$, we define the event $\cl E_2$ such that
  \begin{equation*}
    \ts \cl E_2 : \sup_{\Vec{y} \in \Sigma_{\eta}} \frac{1}{m}\sum_{j = 1}^m H^g_{j}(\eta;\Vec{y}) \leq L^{\mu}_g \eta C_{\Lambda} + \epsilon_{2}.
  \end{equation*}
  Next, we suppose that the kernel approximation has error bounded by $\epsilon_{3}$ for all the covering centers $\Vec{x},\Vec{y} \in \Sigma_{\eta}$, \ie
  \begin{equation*}
    \ts \cl E_3 : \sup_{\Vec{x},\Vec{y} \in \Sigma_{\eta}} \left| \wh{\kappa}_{f,g}(\Vec{x},\Vec{y}) - \kappa_{f,g}(\Vec{x},\Vec{y}) \right| \leq \epsilon_{3}.
  \end{equation*}
  Under those events, we establish a deterministic bound for all $\Vec{x}',\Vec{y}'$ using a chain of triangle inequalities:
  \begin{equation*}
    |\wh{\kappa}_{f,g}(\Vec{x}',\Vec{y}') - \kappa_{f,g}(\Vec{x}',\Vec{y}')| = |\wh{\kappa}_{f,g}(\Vec{x} + \Vec{r_x},\Vec{y} + \Vec{r_y}) - \kappa_{f,g}(\Vec{x} + \Vec{r_x},\Vec{y} + \Vec{r_y})| \leq \delta_1 + \delta_2 + \delta_3 + \delta_4,
  \end{equation*}
  where the error terms are defined as
  \begin{equation*}
    \begin{array}{r@{\ }l}
      \delta_1 &:= |\wh{\kappa}_{f,g}(\Vec{x} + \Vec{r_x},\Vec{y} + \Vec{r_y}) - \wh{\kappa}_{f,g}(\Vec{x},\Vec{y} + \Vec{r_y})|,\\
      \delta_2 &:= |\wh{\kappa}_{f,g}(\Vec{x},\Vec{y} + \Vec{r_y})-\wh{\kappa}_{f,g}(\Vec{x},\Vec{y})|,
    \end{array}
    \begin{array}{r@{\ }l}
      \delta_3 &:= |\wh{\kappa}_{f,g}(\Vec{x},\Vec{y}) - \kappa_{f,g}(\Vec{x},\Vec{y})|,\\
      \delta_4 &:= |\kappa_{f,g}(\Vec{x} ,\Vec{y} ) - \kappa_{f,g}(\Vec{x}+ \Vec{r_x},\Vec{y}+ \Vec{r_y})|.
    \end{array}
  \end{equation*}
  First, we observe that, thanks to $\cl E_1$: 
  \begin{equation*}
    \begin{split}
      \delta_1 &= \ts \frac{1}{m} \left|\sum_{j = 1}^m \left[ f(\Vec{\omega}_j^\top(\Vec{x} + \Vec{r_x}) + \xi_j ) - f(\Vec{\omega}_j^\top\Vec{x} + \xi_j )  \right] g^* (\Vec{\omega}_j^\top\Vec{y}' + \xi_j)  \right| \\
      &\leq \ts \|g\|_{\infty} \cdot \frac{1}{m} \sum_{j = 1}^m  \left| f(\Vec{\omega}_j^\top\Vec{x} + \Vec{\omega}_j^\top\Vec{r_x} + \xi_j) - f(\Vec{\omega}_j^\top\Vec{x} + \xi_j)  \right| \leq \eta L^{\mu}_f C_{\Lambda} + \epsilon_{1}.
    \end{split}
  \end{equation*}
  Similarly, using $\cl E_2$, we get for $\delta_2$:
  \begin{equation*}
    \delta_2 = \ts \frac{1}{m} \left| \sum_{j = 1}^m f(\Vec{\omega}_j^\top\Vec{x} + \xi_j) \left[ g^*(\Vec{\omega}_j^\top\Vec{y} + \Vec{\omega}_j^\top\Vec{r_y} + \xi_j) - g^*(\Vec{\omega}_j^\top\Vec{y} + \xi_j)  \right] \right| \leq \eta L^{\mu}_g C_{\Lambda} + \epsilon_{2}.
  \end{equation*}
  Regarding $\delta_3$, we directly get from $\cl E_3$:
  \begin{equation*}
    \delta_3 \leq \ts \sup_{\Vec{x},\Vec{y} \in \Sigma_{\eta}} | \wh{\kappa}_{f,g}(\Vec{x},\Vec{y}) - \kappa_{f,g}(\Vec{x},\Vec{y}) | \leq \epsilon_{3}.
  \end{equation*}
  Finally, for $\delta_4$, denoting $\Vec{u} := \Vec{x}-\Vec{y}$, $\Vec{r_u} := \Vec{r_x}-\Vec{r_y} \in 2\eta \bb B^d_2$ and using Lemma~\ref{lem:expected_kernel_with_convolution} (also recall that $h = f * \bar{g}$ is Lipschitz continuous with $L_h \leq \min(L^{\mu}_f, L^{\mu}_g)$ by Lemma~\ref{lem:mean_smooth_convolution}), as well as the definition of $C_{\Lambda}$ in~\eqref{eq:CLambda},
  \begin{equation*}
    \begin{split}
      \delta_4 &= \ts \left| \sik_{f,g}(\Vec{u})- \sik_{f,g}(\Vec{u}+\Vec{r_u}) \right| = \left|\expec{\Vec{\omega} \sim \Lambda} h(\Vec{\omega}^\top\Vec{u}) - h(\Vec{\omega}^\top\Vec{u}+\Vec{\omega}^\top\Vec{r_u})\right|\\
      &\leq \ts \expec{\Vec{\omega}} \left| h(\Vec{\omega}^\top\Vec{u}) - h(\Vec{\omega}^\top\Vec{u}+\Vec{\omega}^\top\Vec{r_u}) \right| \leq \expec{\Vec{\omega}} L_h \cdot \left| \Vec{\omega}^\top\Vec{r_u}  \right| \leq \ts L_h C_{\Lambda} \|  \Vec{r_u}  \|_2  \leq 2 \eta C_{\Lambda} \cdot \min(L^{\mu}_f, L^{\mu}_g).
    \end{split}
  \end{equation*}
  
  \noindent Putting everything back together, under $\cl E_1$, $\cl E_2$, and $\cl E_3$, for any $\Vec{x}', \Vec{y}' \in \Sigma$:
  \begin{equation}
    \label{eq:deterministicGeneral}
    \left|\wh{\kappa}_{f,g}(\Vec{x}',\Vec{y}') - \kappa_{f,g}(\Vec{x}',\Vec{y}') \right| \leq \epsilon_{1} + \epsilon_{2} + \epsilon_{3} + \eta C_{\Lambda} \left(L^{\mu}_f + L^{\mu}_g + 2 \cdot \min(L^{\mu}_f, L^{\mu}_g)\right).
  \end{equation}
  
  \noindent It remains to bound the failure probability for each event. For event $\cl E_1$, we have to bound the probability
  \begin{equation*}
    \ts \bb P \left[ \bar{\cl E_1} \right] = \bb P \left[ \exists{\Vec{x} \in \Sigma_{\eta}} \text{ s.t. } \frac{1}{m}\sum_{j = 1}^m H^f_{j}(\eta;\Vec{x}) \geq L^{\mu}_f \eta C_{\Lambda} + \epsilon_{1} \right].
  \end{equation*}
  We first focus on one single center $\Vec{x} \in \Sigma_{\eta}$. Each associated $H^f_{j}(\eta;\Vec{x}) = \sup_{\Vec{r} \in \eta \bb B^d_2} |f(\Vec{\omega}_j^\top\Vec{x} + \Vec{\omega}_j^\top\Vec{r} + \xi_j) - f(\Vec{\omega}_j^\top\Vec{x} + \xi_j)|$ is a random variable identically and independently distributed (where the randomness is due to the draw of $\Vec{\omega}_j$ and $\xi_j$). The expectation $\expec{}H^f_{j}$ of those variables is bounded by (we use the mean Lipschitz smoothness~\eqref{eq:meanSmooth} with $t = \xi_j$, $r = \Vec{\omega}_j^\top\Vec{r}$ and $\delta = |\Vec{\omega}_j^\top\Vec{r}|$):
  $$ \expec{}H^f_{j} = \ts \expec{\Vec{\omega}_j} \expec{\xi_j} \sup_{\Vec{r} \in \eta \bb B^d_2} |f(\Vec{\omega}_j^\top\Vec{x} + \Vec{\omega}_j^\top\Vec{r} + \xi_j) - f(\Vec{\omega}_j^\top\Vec{x} + \xi_j)| \leq L^{\mu}_f \expec{\Vec{\omega}_j} |\Vec{\omega}_j^\top\Vec{r}| = L^{\mu}_f \eta C_{\Lambda}.$$
  
  \noindent Now we describe how the sum $\frac{1}{m}\sum_{j = 1}^m H^f_{j}(\eta;\Vec{x})$ concentrates around its mean $\expec{}H^f_{j}$ with Hoeffding's inequality (note that $0 \leq H^f_{j}(\eta;\Vec{x}) \leq 2\|f\|_{\infty} \leq 2$), and use $\expec{}H^f_{j} \leq L^{\mu}_f \eta C_{\Lambda}$ to get a probabilistic bound
  $$\ts \bb P \left[ \frac{1}{m} \sum_j H^f_{j}(\eta;\Vec{x}) \geq L^{\mu}_f \eta C_{\Lambda} + \epsilon_{1} \right] \leq \bb P \left[ \frac{1}{m} \sum_j H^f_{j}(\eta;\Vec{x}) - \expec{}H^f_{j} \geq \epsilon_{1} \right] \leq \exp\left(-\frac{m\epsilon_{1}^2}{2}\right).$$
  
  \noindent We take a union bound of this result over the $\cl C_{\eta}(\Sigma) = |\Sigma_{\eta}|$ centers $\Vec{x} \in \Sigma_{\eta}$ to obtain
  \begin{equation*}
    \bb P \left[ \bar{\cl E_1} \right] \leq \cl C_{\eta}(\Sigma) \exp\left(-m\epsilon_{1}^2/2\right) = \exp\left( \cl H_{\eta}(\Sigma) -m\epsilon_{1}^2/2\right).
  \end{equation*}
  Moreover, if $m \geq 4 \cl H_{\eta}(\Sigma) \epsilon_{1}^{-2}$, then we get $\mathbb{P}\left[ \bar{\cl E_1} \right] \leq e^{- m \epsilon_{1}^2/4 }$. An identical development for $\cl E_2$ yields $\mathbb{P}\left[ \bar{\cl E_2} \right] \leq e^{- m \epsilon_{2}^2/4 }$ if $m \geq 4 \cl H_{\eta}(\Sigma) \epsilon_{2}^{-2}$. For $\cl E_3$, an union bound of Prop.~\ref{prop:kernelApproxFixedPair} on all pairs in $\Sigma_{\eta} \times \Sigma_{\eta}$ gives
  \begin{equation*}
    \ts \mathbb{P}\left[ \bar{\cl E_3} \right]  \leq 2\binom{\cl C_{\eta}(\Sigma)}{2}  e^{ - m \epsilon_{3}^2/2} \leq \cl C_{\eta}(\Sigma)^2 e^{ - m \epsilon_{3}^2/2} = e^{ 2 \cl H_{\eta}(\Sigma)  - m \epsilon_{3}^2/2}.
  \end{equation*}
  Moreover, if $m \geq 8 \cl H_{\eta}(\Sigma) \epsilon_{3}^{-2}$, we get $\mathbb{P}\left[ \bar{\cl E_3} \right] \leq e^{- m \epsilon_{3}^2/4}$. By union bound, and provided
  $$m \geq 4 \cl H_{\eta}(\Sigma) \cdot \max\left(\epsilon^{-2}_{1}, \epsilon^{-2}_2,2\epsilon^{-2}_3\right),$$
  the probability of failure of the deterministic bound above is lower than 
  \begin{equation*}
    \bb P \left[ \bar{\cl E_1} \cup \bar{\cl E_2} \cup \bar{\cl E_3} \right] \leq \bb P \left[ \bar{\cl E_1} \right] + \bb P \left[ \bar{\cl E_2} \right] + \bb P \left[ \bar{\cl E_3} \right] = e^{- m \epsilon_{1}^2/4 } + e^{- m \epsilon_{2}^2/4} + e^{-m \epsilon_{3}^2/4} .
  \end{equation*}
  
  \noindent Finally, the desired result (less generic but more meaningful) is found by imposing equal contributions $\epsilon/4$ by each error term in~\eqref{eq:deterministicGeneral}, \ie $\epsilon_{1} = \epsilon_{2} = \epsilon_{3} = \epsilon/4$ and $\eta = \epsilon/(4 C_{\Lambda}[L^{\mu}_f + L^{\mu}_g + 2 \min(L^{\mu}_f, L^{\mu}_g)]) $.
\end{proof}

Prop.~\ref{prop:kernelApproxUniform} shows that we can control (\eg by increasing $m$) the kernel approximation error uniformly, provided we control the smoothness of the ``initial'' kernel $\sik = \cl F^{-1}\Lambda$ (through $C_\Lambda$) and the mean smoothness of the maps $f$ and $g$. Improvements are possible, for example, by more carefully setting the values of $\{\epsilon_1, \epsilon_2, \epsilon_3\}$  and $\eta$. If $\Sigma$ is \emph{structured} (\eg if it consists of sparse vectors or low-rank matrices) and $\Lambda$ is Gaussian, the value of $C_\Lambda$ in $\expec{\Vec{\omega}}|\Vec{\omega}^\top\Vec{r}| \leq \eta C_{\Lambda}$ for $\bs r \in (\Sigma-\Sigma) \cap \eta \bb B_2^d$ (which controls the bounds on $\delta_1$, $\delta_2$ and $\delta_4$) can be related to the Gaussian mean width of $\Sigma-\Sigma$ \cite{chandrasekaran2012convex}.

  \begin{example}
    Consider once again the case of a Gaussian kernel with unit bandwidth (\ie $C_\Lambda = 1$), with a signal space $\Sigma$ made of bounded signals (inside the unit Euclidean ball $\bb B_2^d$) lying in a union of $S$ subspaces of $\bb R^d$ with dimension $s$. In this case, according to the entropy of this signal model (see Ex.~\ref{ex:UoS-example}), the kernel approximation error $| \wh{\kappa}_{f,g}(\Vec{x},\Vec{y}) - \kappa_{f,g}(\Vec{x},\Vec{y}) |$ is uniformly bounded over $\Sigma$, with high probability, provided that the number of features satisfies $ m \geq  C\, \epsilon^{-2} [s \log\big(  \frac{4}{\epsilon}  (L^{\mu}_f + L^{\mu}_g + 2\min(L^{\mu}_f,L^{\mu}_g)) \big) + \log S]$. For instance, for bounded $s$-sparse signals, we need $m \geq  C\, s \cdot \epsilon^{-2} \log\big( \frac{4 e d}{c s \epsilon} (L^{\mu}_f + L^{\mu}_g + 2\min(L^{\mu}_f,L^{\mu}_g)) \big)$.
\end{example}

We conclude this section by showing that Prop.~\ref{prop:kernelApproxUniform} allows characterizing the proximity of two approximated kernels $\wh{\kappa}_{f,g}$ and $\wh{\kappa}_{f',g}$ (given three functions $f,f',g \in \pf$) when they are related by identical expectations $\bb E \wh{\kappa}_{f,g} = \bb E \wh{\kappa}_{f',g} = \kappa_0$. While this result could be achieved by a simple use of the triangular inequality---from $|\wh{\kappa}_{f,g} - \wh{\kappa}_{f',g}|  \leq |\wh{\kappa}_{f,g} - \kappa_0| + |\wh{\kappa}_{f,g} - \kappa_0|$ and using the same proposition to bound the last two terms---the following corollary provides a more direct bound, possibly tighter. 

\begin{corollary}[Proximity of approximated RPF kernels]
  \label{cor:proximity-approx-kernels}
Given $\epsilon > 0$, a compact set $\Sigma$, two $2\pi$-periodic functions $f,f'$ such that their difference $f-f' \in \pf$, as well as a third periodic function $g \in \pf$, such that there exist finite mean smoothness constants $L^{\mu}_{f-f'}$ and $L^{\mu}_g$, and $\Lambda$ such that $C_{\Lambda} < \infty$, if $\kappa_{f,g}(\cdot,\cdot)=\kappa_{f',g}(\cdot,\cdot)$ and if the feature dimension is larger than
  \begin{equation}
    \label{eq:unif-kern-approx-sample-complex-cor}
    \ts m \geq 128 \cdot \frac{1}{\epsilon^2} \cdot \cl H_{\epsilon/c}(\Sigma),
  \end{equation}
  with constant $c = 4 C_{\Lambda}(L^{\mu}_{f-f'} + L^{\mu}_g + 2\min(L^{\mu}_{f-f'}, L^{\mu}_g))$, then 
  \begin{equation}
    \ts \big| \wh{\kappa}_{f,g}(\Vec{x},\Vec{y}) - \wh{\kappa}_{f',g}(\Vec{x},\Vec{y}) \big| \leq \epsilon, \quad \forall \Vec{x},\Vec{y} \in \Sigma,
  \end{equation}
  with probability exceeding $1 - 3 \exp( - \frac{m \epsilon^2}{64})$.
\end{corollary}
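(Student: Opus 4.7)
The plan is to reduce the statement directly to Proposition~\ref{prop:kernelApproxUniform} by exploiting the linearity of the random periodic features $\Vec{z}_f$, of the empirical kernel $\wh{\kappa}_{f,g}$, and of the expected kernel $\kappa_{f,g}$ with respect to the first periodic function.

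First, I would observe that, componentwise, $\Vec{z}_{f}(\Vec{x}) - \Vec{z}_{f'}(\Vec{x}) = \tfrac{1}{\sqrt{m}}(f-f')(\bs\Omega^\top\Vec{x} + \Vec{\xi}) = \Vec{z}_{f-f'}(\Vec{x})$, since $\bs\Omega$ and $\Vec{\xi}$ are shared across the two periodic maps in the asymmetric RPF setup of Sec.~\ref{sec:asymptotic}. Combined with the linearity of the inner product, this yields the pointwise identity
\begin{equation*}
  \wh{\kappa}_{f,g}(\Vec{x},\Vec{y}) - \wh{\kappa}_{f',g}(\Vec{x},\Vec{y}) = \langle \Vec{z}_{f-f'}(\Vec{x}), \Vec{z}_g(\Vec{y}) \rangle = \wh{\kappa}_{f-f',g}(\Vec{x},\Vec{y}).
\end{equation*}

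Second, I would apply the same linearity to the expected kernel, using the explicit expansion \eqref{eq:expected_kernel_expression}: since the Fourier coefficients of $f-f'$ are $F_k-F'_k$, one has $\kappa_{f-f',g} = \kappa_{f,g} - \kappa_{f',g}$, and the hypothesis $\kappa_{f,g}=\kappa_{f',g}$ then forces $\kappa_{f-f',g}(\Vec{x},\Vec{y}) = 0$ for all pairs $\Vec{x},\Vec{y} \in \Sigma$. Consequently,
\begin{equation*}
  \wh{\kappa}_{f,g}(\Vec{x},\Vec{y}) - \wh{\kappa}_{f',g}(\Vec{x},\Vec{y}) = \wh{\kappa}_{f-f',g}(\Vec{x},\Vec{y}) - \kappa_{f-f',g}(\Vec{x},\Vec{y}),
\end{equation*}
so uniformly bounding the left-hand side over $\Sigma \times \Sigma$ is exactly the same task as uniformly bounding the RPF kernel approximation error associated with the pair $(f-f', g)$.

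Third, I would invoke Proposition~\ref{prop:kernelApproxUniform} directly on this pair. The hypothesis $f - f' \in \pf$ ensures admissibility of the first map (centered, $2\pi$-periodic, bounded by $1$ in sup norm); its mean Lipschitz constant $L^{\mu}_{f-f'}$ and the constant $L^{\mu}_g$ are assumed finite; and the kernel smoothness constant $C_\Lambda$ is unchanged because the sampling distribution $\Lambda$ has not been modified. The sample-complexity requirement \eqref{eq:unif-kern-approx-sample-complex-cor} with $c = 4 C_\Lambda (L^{\mu}_{f-f'} + L^{\mu}_g + 2\min(L^{\mu}_{f-f'}, L^{\mu}_g))$ is then exactly the one Proposition~\ref{prop:kernelApproxUniform} demands for the pair $(f-f', g)$, and the failure probability $3\exp(-m\epsilon^2/64)$ transfers verbatim.

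I do not anticipate any substantive obstacle: the corollary is essentially a bookkeeping consequence of the linearity of RPF in the periodic map, combined with the vanishing of the target kernel under the assumption $\kappa_{f,g} = \kappa_{f',g}$. The only items to verify carefully are that $f - f'$ indeed satisfies the $\pf$-normalisation required to enter Proposition~\ref{prop:kernelApproxUniform} (provided by the hypothesis), and that the constants and probability expressions match; neither involves new estimates.
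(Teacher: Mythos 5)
Your proposal is correct and follows essentially the same route as the paper's own proof: exploit linearity to write $\wh{\kappa}_{f,g} - \wh{\kappa}_{f',g} = \wh{\kappa}_{f-f',g}$, note that the hypothesis $\kappa_{f,g}=\kappa_{f',g}$ makes the expected kernel of the pair $(f-f',g)$ vanish, and then apply Prop.~\ref{prop:kernelApproxUniform} to that pair, with the constants and failure probability transferring unchanged. No gaps to report.
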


\begin{proof}
We simply observe that, by linearity of the kernels with respect to their supporting functions, for any $\bs x, \bs y \in \Sigma$, $\wh{\kappa}_{f,g}(\bs x, \bs y) - \wh{\kappa}_{f',g}(\bs x, \bs y) = \wh{\kappa}_{\tilde f,g}(\bs x, \bs y)$ with $\tilde f := f - f'$. The proof then follows by applying Prop.~\ref{prop:kernelApproxUniform} to the RPFs supported by $\tilde f, g \in \pf$, with the vanishing kernel $\bb E\, \wh{\kappa}_{\tilde f,g}(\bs x, \bs y) = \kappa_{f,g}(\bs x, \bs y) - \kappa_{f',g}(\bs x, \bs y) = 0$.    
\end{proof}

In Sec.~\ref{sec:appl-semi-quant}, we will use this corollary in combination with Prop.~\ref{prop:kernelApproxUniform} to compare the performance of a machine learning algorithm (the kernel support vector machine, SVM) on a given classification task when learning and inference are using identical approximated kernels (\ie when the learning is performed using the RFF) or only kernels that are asymptotically equal (when the learning stage uses the expected kernel).

\section{Asymmetric geometry-preserving embedding}
\label{sec:correctPTB}

Our approach can be related to the context of geometry-preserving embedding (or coding) developed in~\cite{boufounos2017representation}. This allows us to provide another version of one of their central results, \cite[Thm~3.2]{boufounos2017representation}, whose proof is incorrect (as described below). While the alternative result we propose looks slightly different, it fulfills the same goal: a non-asymptotic guarantee for the geometry-preserving capabilities of the embedding~\eqref{eq:def_periodic_features} with discontinuous $f$, which holds on infinite signal sets. This section can be seen as a first (theoretical) application of Prop.~\ref{prop:kernelApproxUniform}.

\subsection{Geometry-preserving embedding: the initial approach}
\label{sec:earl-geom-pres}

In~\cite{boufounos2017representation} the authors study when a mapping $\bs \varphi: \Sigma \to \bb C^m$ (such as $\Vec{z}_f$ defined in~\eqref{eq:def_periodic_features} for $f \in \pf$) defines an embedding of $\Sigma$ into $\bb C^m$ approximately preserving the proximity of vectors in $\Sigma$. This proximity is measured by the (local) preservation a distance associated with a $\ell_\sharp$-norm ${\|\cdot\|_\sharp}$ (\eg the $\ell_1$ or the $\ell_2$-norm). Adapting their setting to our conventions\footnote{Hereafter, departing from the general approach of~\cite{boufounos2017representation}, we always consider the simplified case where $\bb C^m$ is equipped with the Euclidean distance, with a squaring of the corresponding distance in \eqref{eq:gen-embed}.}, given some $\epsilon, \delta > 0$, and an invertible function, or \emph{distance map}, $\gamma: \bb R_+ \to \bb R_+$, they study the conditions ensuring that $\bs \varphi$ is a $(\gamma,\delta,\epsilon)$-embedding of $\Sigma$ (endowed with the $\ell_\sharp$-norm) into $\bb C^m$; or mathematically, such that $\bs \varphi$ respects
\begin{equation}
  \label{eq:gen-embed}
  \ts (1-\delta)\, \gamma( \|\Vec{x} - \Vec{y}\|_\sharp) - \epsilon \leq \|\bs \varphi(\Vec{x}) - \bs \varphi(\Vec{y}) \|^2 \leq (1+\delta)\, \gamma( \|\Vec{x} - \Vec{y}\|_\sharp) + \epsilon,  
\end{equation}
for all $\bs x, \bs y \in \Sigma$.

In \eqref{eq:gen-embed}, $\gamma$ maps distances in $\Sigma$ to (squared) distances in $\bb C^m$, and $\delta$ and $\epsilon$ quantify the multiplicative and the additive error, respectively, of the embedding associated with the map $\gamma$. For instance, if $\bs \varphi$ is linear with $\bs \varphi(\bs x) = \bs A \bs x$, there exists many random constructions of the $m \times d$ matrix $\bs A$ with appropriate scaling (\eg random Gaussian matrix or random partial Fourier matrix \cite{foucart2017mathematical}) for which~\eqref{eq:gen-embed} holds with high probability with $\ell_\sharp \equiv \ell_2$, $\epsilon=0$, and $\gamma(t)=t^2$ for $\Sigma = \Sigma_k$ and $m = O(\delta^{-2} k \log (n/k))$. Similarly, in the context of one-bit compressive sensing where $\bs \varphi(\bs x) = (cm)^{-1/2}\, \sign(\bs A \bs x)$ (for some suitable $c>0$), $\|\bs \varphi(\Vec{x}) - \bs \varphi(\Vec{y}) \|^2 $ represents the (scaled) Hamming distance between the two binary vectors $\bs \varphi(\bs x)$ and $\bs \varphi(\bs y)$, and \eqref{eq:gen-embed} is verified with high probability over $\Sigma_k \cap \bb B^d$ with $m=O(\epsilon^{-2} k \log (n/k))$, $\ell_\sharp \equiv \ell_2$, $\delta = 0$, and $\gamma(t)=t$ \cite{Jacques_2013}. The work \cite{boufounos2017representation} extends this analysis to general nonlinear feature maps $\bs \varphi(\cdot) = \bs z_f(\cdot)$ for some periodic function $f$ (such as the universal quantizer $q$). In such a context, the authors show that \eqref{eq:gen-embed} holds with a map $\gamma$ that often displays two regimes: a linear regime for small distances in $\Sigma$ ($\bs x \approx \bs y$), and a saturation regime where $\gamma$ quickly flattens after a certain distance. 

As explained in~\cite[Sec.~4.5]{boufounos2017representation}, this approach is connected to the approximation of a kernel $\kappa: \Sigma \times \Sigma \to \bb R_+$ from the inner product of the images of two vectors, namely, for which $\langle \Vec{z}_f(\Vec{x}), \Vec{z}_f(\Vec{y}) \rangle \approx \kappa(\Vec{x},\Vec{y})$. Assuming $\|\Vec{z}_f\| = 1$ for simplicity, which is the case for complex exponential and universal quantization features, we find  $\ts \|\Vec{z}_f(\Vec{x}) - \Vec{z}_f(\Vec{y}) \|^2 =  2\, (1- \langle \Vec{z}_f(\Vec{x}),\Vec{z}_f(\Vec{y}) \rangle)$. Therefore, if $\bs z_f$ is a $(\gamma,0,2\epsilon)$-embedding of $\Sigma$ into $\bb C^m$, then
\begin{equation}
  \label{eq:equiv-kernel-geom-preserv}
  \ts \kappa(\Vec{x},\Vec{y}) - \epsilon \leq \langle \Vec{z}_f(\Vec{x}), \Vec{z}_f(\Vec{y}) \rangle \leq \kappa(\Vec{x},\Vec{y}) + \epsilon,  
\end{equation}
for all $\bs x, \bs y \in \Sigma$, provided we define the kernel 
\begin{equation}
  \label{eq:equiv-geom-embed-kern}
  \ts \kappa(\Vec{x},\Vec{y}) := 1 - \frac{1}{2} \gamma (\|\Vec{x} - \Vec{y}\|_\sharp).
\end{equation}
The smoothness of $\kappa$ is thus directly connected to the one of $\gamma$; for instance, if $\gamma$ is Lipschitz continuous with constant $L_\gamma \geq 0$ of $\bb R_+$, then, from the invertibility of $\gamma$ over $\bb R^+$, $\kappa$ is Lipschitz continuous with constant $L_\gamma/2$ with respect to any of its argument. Note that, from Lemma~\ref{lem:expected_kernel_with_convolution} and Lemma~\ref{lem:mean_smooth_convolution}, we also know that if $f$ is mean smooth with constant $L^\mu_f$, then $\kappa(\bs x, \bs y) = \sik_{f,f}(\bs x - \bs y)$ is Lipschitz continuous with constant $L_\kappa \leq C_\Lambda L^\mu_f$ with respect to any of its argument (as proved from the bound on $\delta_4$ in the proof of Prop.~\ref{prop:kernelApproxUniform}). This shows that, despite their different origin, the smoothness of $\gamma$ (in the approach~\cite{boufounos2017representation}) and the one of $f$ (in ours) control the one of $\kappa$.

Compared to our approach,~\cite {boufounos2017representation} imposes the periodic function $f$ to be ``Lipschitz continuous by part'' (rather than being mean smooth), as defined hereafter in a setting adapted to our needs.
\begin{definition}[{$T$-part Lipschitz continuity~\cite[Def.~2.1]{boufounos2017representation}}]
  \label{def:T-part-cont}  
  A function $f : \Sigma \rightarrow \bb C$ is $T$-part Lipschitz continuous over $\cl S \subset \Sigma$ with constant $\bar L_f \geq 0$, if there exists a finite partition $\{\cl S_t\}_{t=1}^T$ of $\cl S$ into $T$ disjoint sets (\ie $\bigcup_{t = 1}^T \cl S_t = \cl S$) such that  
  \begin{equation}
    \label{eq:gen-Lip}
    \ts \forall t \in [T],\,\forall \bs x, \bs y \in \cl S_t, \quad |f(\Vec{x}) - f(\Vec{y})| \leq \bar L_f \cdot \|\Vec{x}-\Vec{y}\|_\sharp.
  \end{equation}
  Moreover, $f$ is {\em exactly} $T$-part Lipschitz continuous over $\cl S$ with constant $\bar L_f \geq 0$, which we write $f \in \overline{\rm Lip}(\cl S, T, \bar L_f)$, if it is $T$-part Lipschitz continuous with that constant but is not $(T-1)$-part Lipschitz continuous with the same constant.
\end{definition}
\noindent Note that~\eqref{eq:gen-Lip} both generalizes~\eqref{eq:Lipschitz} to functions from $\Sigma \subset \bb R^d$ to $\bb C$, and localizes~\eqref{eq:Lipschitz} on $\cl S$.    

Following the convention of our paper, the authors of~\cite{boufounos2017representation} then prove the following result. We simplify it to the case $\delta = 0$ and where each component of $\bs z_f$ has at most $T$ parts of continuity (despite its randomness). 

\begin{theorem}[{Adapted from~\cite[Thm~3.2]{boufounos2017representation}}]
  Given $0<\epsilon <1$, an $L_\gamma$-Lipschitz continuous distance map $\gamma: \bb R_+ \to \bb R_+$, and a signal set $\Sigma$ with finite covering number $\cl C_\eta(\Sigma)$ for any radius $\eta >0$, let us assume that, for any \emph{fixed} pair of vectors $\bs x, \bs y \in \Sigma$, the mapping $\bs z_f$ defined in~\eqref{eq:def_periodic_features} satisfies the embedding relation~\eqref{eq:gen-embed} (for $\delta = 0$) with probability exceeding $1 - C e^{-c m  \epsilon^2}$.
  
  Let us suppose that there exists a constant $\bar L_f\geq 0$ such that, for any $\bs x \in \Sigma$, integer $t \geq 1$, radius $\eta > 0$, and given $\cl S_{\bs x}(\eta) := \{\bs u \in \Sigma: \|\bs u - \bs x\|_\sharp \leq \eta\}$ (a neighborhood of $\bs x$ of radius $\eta$),
  \begin{equation}
    \label{eq:bound-proba-T-part}
    \ts \bb P\big[ (\bs z_f(\cdot))_k \in \overline{\rm Lip}(\cl S_{\bs x}(\eta), T, \bar L_f)\big]\ \leq\ p_t(\eta),  
  \end{equation}
  with $p_t$ independent of $\bs x$, and $p_t(\eta) = 0$ if $t > T$ for some integer $T \geq 0$. 

  In this context, defining $\rho(\eta,T) := \sum_{t=2}^{T} p_t(\eta) \log t$ and $\nu := \frac{1}{4} (L_\gamma+ \bar L_f)^{-1}$, provided that $\epsilon^2 \geq C \rho(2\nu \epsilon^2,T)$ and 
  \begin{equation}
    \label{eq:geompres-sample-complexity}
    \ts m \geq C \epsilon^{-2} \big(\cl H_{\nu \epsilon^2}(\Sigma) + \log T\big),  
  \end{equation}
  the mapping $\bs z_f$ is a $(\gamma, 0, 2\epsilon)$-embedding with probability exceeding $1-C e^{- c \epsilon^2 m}$, \ie
    $\bs z_f$ respects
    $$
\ts \gamma( \|\Vec{x} - \Vec{y}\|_\sharp) - 2\epsilon \leq \|\bs z_f(\Vec{x}) - \bs z_f(\Vec{y}) \|^2 \leq \gamma( \|\Vec{x} - \Vec{y}\|_\sharp) + 2\epsilon, \quad \forall \bs x, \bs y \in \Sigma.
$$
  \label{thm:emb_discontinuous}
\end{theorem}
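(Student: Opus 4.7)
The plan is an $\eta$-net + union-bound + local-extension argument, adapted to the fact that each component of $\bs z_f$ is only piecewise Lipschitz on neighbourhoods (rather than globally Lipschitz as in Prop.~\ref{prop:kernelApproxUniform}). I would first pick an optimal $\eta$-net $\Sigma_\eta$ of $\Sigma$ for the $\ell_\sharp$-norm, with $|\Sigma_\eta| = \cl C_\eta(\Sigma)$, and apply the fixed-pair hypothesis together with a union bound over $\Sigma_\eta\times\Sigma_\eta$ to obtain an event $\cl E_0$ of probability at least $1 - C e^{2\cl H_\eta(\Sigma) - cm\epsilon^2}$ on which~\eqref{eq:gen-embed} holds (with $\delta = 0$ and additive error $\epsilon$) for every pair of net points. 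For arbitrary $\bs x',\bs y'\in\Sigma$ with nearest centres $\bs x,\bs y\in\Sigma_\eta$, the extension to $\Sigma$ follows from the algebraic identity
\begin{equation*}
  \|\bs z_f(\bs x')-\bs z_f(\bs y')\|^2 - \|\bs z_f(\bs x)-\bs z_f(\bs y)\|^2 \,=\, 2\Re\langle \bs z_f(\bs x)-\bs z_f(\bs y),\,\bs e\rangle + \|\bs e\|^2,
\end{equation*}
with $\bs e := [\bs z_f(\bs x')-\bs z_f(\bs x)] - [\bs z_f(\bs y')-\bs z_f(\bs y)]$, combined with the metric bound $|\gamma(\|\bs x'-\bs y'\|_\sharp) - \gamma(\|\bs x-\bs y\|_\sharp)| \leq 2L_\gamma\eta$. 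Under $\cl E_0$, the factor $\|\bs z_f(\bs x)-\bs z_f(\bs y)\|$ is of order $O(1)$, so everything reduces to bounding, uniformly over $\bs x\in\Sigma_\eta$,
\begin{equation*}
  D(\bs x,\eta) \,:=\, \sup_{\bs x'\in\cl S_{\bs x}(\eta)} \tfrac{1}{m}\sum_{j=1}^m \big|f(\bs\omega_j^\top\bs x'+\xi_j) - f(\bs\omega_j^\top\bs x+\xi_j)\big|^2.
\end{equation*}

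The core step is the control of $D(\bs x,\eta)$. Conditionally on $(\bs\omega_j,\xi_j)$, the $j$-th component of $\bs z_f$ belongs to some class $\overline{\rm Lip}(\cl S_{\bs x}(\eta),T_j,\bar L_f)$ with $T_j\in[T]$ random, marginally distributed according to~\eqref{eq:bound-proba-T-part}. When $\bs x'$ and $\bs x$ sit in the same Lipschitz piece of component $j$, the $j$-th contribution to $D(\bs x,\eta)$ is bounded by $\bar L_f^2\eta^2$; when they lie in different pieces, it is only bounded by $4$ (using $\|f\|_\infty\leq 1$). The idea is to refine the ambient $\eta$-cover of each neighbourhood $\cl S_{\bs x}(\eta)$ into sub-caps contained in a single Lipschitz piece of component $j$: doing so introduces an auxiliary net whose expected log-cardinality is exactly $m\,\rho(\eta,T)$, which is where the $\log t$ weighting in $\rho$ enters. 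A Hoeffding concentration for $\tfrac{1}{m}\sum_{j}\mathds{1}\{T_j\geq 2\}$ and a union bound over the $\cl C_\eta(\Sigma)$ centres and the at most $T$ piece-assignments of each component then yield an event $\cl E_1$ of failure probability at most $C e^{\cl H_\eta(\Sigma) + \log T - cm\epsilon^2}$, on which $\sup_{\bs x\in\Sigma_\eta} D(\bs x,\eta) \leq \bar L_f^2\eta^2 + C(\rho(\eta,T) + \epsilon^2)$.

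On $\cl E_0 \cap \cl E_1$, plugging everything into the decomposition above produces a deterministic bound $\big|\|\bs z_f(\bs x')-\bs z_f(\bs y')\|^2 - \gamma(\|\bs x'-\bs y'\|_\sharp)\big| \leq \epsilon + C(L_\gamma+\bar L_f)\eta + C\sqrt{\rho(\eta,T)}$. Setting $\eta = 2\nu\epsilon^2$ with $\nu = \tfrac14(L_\gamma+\bar L_f)^{-1}$ absorbs the $(L_\gamma+\bar L_f)\eta$ term into $\epsilon/2$, the standing hypothesis $\epsilon^2 \geq C\rho(2\nu\epsilon^2,T)$ absorbs the $\sqrt{\rho(\eta,T)}$ contribution into $\epsilon/2$ as well, and balancing the failure probabilities of $\cl E_0$ and $\cl E_1$ with this choice of $\eta$ gives the sample complexity~\eqref{eq:geompres-sample-complexity}, the extra $\log T$ coming directly from the $T$-fold union bound inside $\cl E_1$.

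The main obstacle will be the uniform control of $D(\bs x,\eta)$: one has to extract both the $\log T$ covering overhead and the $\log t$-weighted discontinuity cost $\rho(\eta,T)$ from the single marginal ingredient~\eqref{eq:bound-proba-T-part}, without picking up any polynomial loss in $T$. This is exactly the step whose handling is incorrect in~\cite{boufounos2017representation}, and replacing it by the sketched chaining-type argument over the possible piece counts $T_j$ (or, alternatively, by applying the mean-Lipschitz machinery of Prop.~\ref{prop:kernelApproxUniform} directly to $f$) is what restores the result on infinite signal sets.
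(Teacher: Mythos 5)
You should first be aware that the paper does \emph{not} prove Theorem~\ref{thm:emb_discontinuous}: it states it as an adaptation of \cite[Thm~3.2]{boufounos2017representation}, then spends Sec.~\ref{sec:correctPTB} explaining that the only known proof (in \cite[App.~B]{boufounos2017representation}) is flawed --- the cells, and hence the cell centres used in the union bound, are delimited by the discontinuities of the components of $\bs z_f$ and are therefore functions of $\bs \Omega$ and $\bs \xi$, so the fixed-pair probability estimate cannot be applied to them --- and the authors explicitly say it is unclear whether the result can be salvaged under the $T$-part Lipschitz hypothesis, leaving that for future work and proving instead the \emph{different} statement Cor.~\ref{cor:geom-pres-our-approach}, which replaces the $T$-part assumption by mean Lipschitz smoothness. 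So there is no proof in the paper to match; your attempt has to stand on its own, and your closing suggestion to ``apply the mean-Lipschitz machinery of Prop.~\ref{prop:kernelApproxUniform} directly to $f$'' would not prove the theorem as stated either, since mean smoothness is not among its hypotheses (this is precisely why the paper presents its fix as a separate corollary rather than as a repaired proof).

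Judged on its own, your outer skeleton is the right way to dodge the flaw: the net $\Sigma_\eta$ is deterministic, the fixed-pair hypothesis plus union bound is applied to net pairs only, and the extension to $\Sigma$ is reduced to controlling the oscillation $D(\bs x,\eta)$, whose summands are independent across $j$ so that concentration over the draw of $(\bs\omega_j,\xi_j)$ is legitimate for each fixed centre. The genuine gap is in the ``core step'' itself, which is exactly the step the paper identifies as broken in \cite{boufounos2017representation}. Your auxiliary net of ``sub-caps contained in a single Lipschitz piece of component $j$'' is a random object (its boundaries are the discontinuities of $\bs z_f$), so any union bound over its elements or over ``piece-assignments'' that invokes a statement valid for \emph{fixed} points reintroduces precisely the circularity you are supposed to repair; and this construction is also where you claim the $\log t$ weighting of $\rho$ and the $\log T$ in \eqref{eq:geompres-sample-complexity} come from, so that part of the argument is unsubstantiated as written. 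Conversely, if the auxiliary net is not used probabilistically it is superfluous: pushing the supremum inside the sum bounds the $j$-th term by $\bar L_f^2\eta^2$ when the component is $1$-part Lipschitz on $\cl S_{\bs x}(\eta)$ and by $4$ otherwise, and one then only needs concentration of the independent indicators of ``at least two pieces'', whose mean is at most $\sum_{t\geq 2}p_t(\eta)\leq \rho(\eta,T)/\log 2$; note that plain additive Hoeffding at deviation $\epsilon^2$ gives a failure probability $e^{-c m\epsilon^4}$, so to reach the advertised $e^{-c m\epsilon^2}$ you must exploit the smallness of that mean (Bernstein or multiplicative Chernoff) --- but then neither $\log T$ nor the $\log t$ weights play the role you assign them. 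In short, the plan is salvageable along the simpler route you half-describe, but the decisive step is currently stated in a way that is either redundant or repeats the original paper's error, and the derivation of the sample complexity \eqref{eq:geompres-sample-complexity} and of the failure probability is not yet a proof.
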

\noindent The statement of this theorem is an easy adaptation of~\cite[Thm~3.2]{boufounos2017representation} where we set $\delta = 0$, $w(\epsilon, \delta) = c \epsilon^2$, $c_0 = c\epsilon$, $T_{\max} = T$ (so that $P_F = 0$), and $\alpha=\epsilon^2 \leq \epsilon \leq 1$.  
\medskip

Note that the first assumption of this theorem (regarding the fact that~\eqref{eq:gen-embed} holds with high probability for any fixed pair of signals) is proven in a separate result, namely~\cite[Thm~4.1]{boufounos2017representation}. This theorem is similar to our Prop.~\ref{prop:kernelApproxFixedPair} (up to an easy extension of this proposition to a finite set of pairs by union bound). Since the flaw developed below is independent of that separate result, we abstract this specific assumption away in this work.
As explained in~\cite[App. E]{boufounos2017representation}, the conditions of this theorem can thus be met for instance in the case where $f$ is the universal quantizer. One can then show that $T=2$, and defining $p_2(\eta) := D \eta$, with $D > 0$ function of $d$ and $\Lambda$, is appropriate for the bound~\eqref{eq:bound-proba-T-part}. Therefore, $\rho( 2 \nu \epsilon^2,T) \leq 2 D \nu \epsilon^2 \leq \epsilon^2/C$ for an appropriate $C > 0$.   

The statement of this theorem bears similarities with our Prop.~\ref{prop:kernelApproxUniform} in the case where $f=g$; in essence, keeping in mind the equivalence~\eqref{eq:equiv-kernel-geom-preserv}, up to a smaller covering radius scaling as $\epsilon^2<\epsilon < 1$ in~\eqref{eq:geompres-sample-complexity}, the constraint~\eqref{eq:geompres-sample-complexity} is similar to~\eqref{eq:unif-kern-approx-sample-complex} if we consider that the $T$-part Lipschitz continuity of $f$ replaces its mean smoothness.

However, the proof of Theorem~\ref{thm:emb_discontinuous} in~\cite[App. B]{boufounos2017representation} is incorrect. Let us see why by sketching their arguments in our system of notations and using $\ell_\sharp = \ell_2$ for the sake of simplicity. Given $\eta>0$, $\bs x \in \Sigma$, and $t \in [T]$, the authors first (implicitly) note that if the random variable $Z(\bs x)$ counts the number of components of $\bs z_f(\cdot)$ that are exactly $t$-part Lipschitz over $\cl S_{\bs x}(\frac{\eta}{2})$ with a given constant $\bar L_f$, then $\bb E Z \leq m p_t$. Therefore, given $c_0 > 0$ and invoking Hoeffding's inequality, they can upper bound the probability that $Z(\bs x) \geq m p_t (1 + c_0) \geq \bb E Z(\bs x) +  m p_t c_0$ with
$$
\bb P[ Z(\bs x) \geq m p_t (1 + c_0)] \leq \exp(-2 c_0^2 m). 
$$

From this bound (using a union bound over all $t\in [T]$), they then determine that $\cl S_{\bs x}(\frac{\eta}{2})$ is partitioned in at most $S := \exp( (1+ c_0)\,\rho(\eta, T) m)$ cells with probability greater than $1 - T e^{-2 c_0^2 m}$. Each cell of this partition of $\cl S_{\bs x}(\frac{\eta}{2})$ has thus a diameter of at most $\eta$. Moreover, by definition, $\bs z_f$ is guaranteed to be Lipschitz continuous with constant $\bar L_f$ over every such cell. 

The author then consider the possibility to pick one point per such cell, called cell center, and to gather them in a finite set of at most $S$ elements. One can then repeat this construction for all vectors $\bs x$ of a $\frac{\eta}{2}$-covering $\Sigma_{\eta/2}$ of $\Sigma$, and collect, for each such vector, all cell centers of its related neighborhood into a global set $\cl G$ of centers of at most $S \times \cl C_{\frac{\eta}{2}}(\Sigma)$ elements. By definition, $\cl G$ is thus a $\eta/2$-covering of $\Sigma$ with the additional property that $\bs z_f$ is $\bar L_f$-Lipschitz continuous over each cell.  

The authors then leverage this local continuity as follows. Since, by hypothesis, the mapping $\bs z_f$ defined in~\eqref{eq:def_periodic_features} satisfies the embedding relation~\eqref{eq:gen-embed} (for $\delta = 0$) with probability exceeding $1 - C e^{-c m  \epsilon^2}$ over any \emph{fixed} pair of vectors $\bs x, \bs y \in \Sigma$, they first expand this property over all pairs of vectors taken in $\cl G \times \cl G$. This is ensured with probability exceeding $1 - C S^2 \cl C^2_{\eta/2}(\Sigma) e^{-c m  \epsilon^2}$, by union bound, since $|\cl G \times \cl G| \leq S^2 \cl C^2_{\eta/2}(\Sigma)$. Next, they extend this property to all $\bs x, \bs y \in \Sigma$ by continuity, exploiting the (local) Lipschitz continuity of $f$ over each cell.   

The flaw, which happens in the first step above, is analogous to how we cannot show the wrong statement $\bb P[ \|\bs g\|^2 < 0] = 1/2$ for a Gaussian vector $\bs g \sim \cl N^d(0,1)$ by assigning another vector $\bs u$ to $\bs g$ in the correct equality $\bb P[ \scp{\bs u}{\bs g} < 0] = 1/2$, valid for $\bs u$ fixed. Indeed, the vectors of $\cl G$, the collection of all cell centers, are built from the random mapping $\bs z_f$ --- each center must be taken in a cell whose frontiers are controlled by the discontinuities of the components of $\bs z_f$. These vectors are thus dependent of both $\bs\Omega \sim \Lambda^m$ and the dither $\bs \xi \sim \cl U^m([0,2\pi))$, through their dependence in $\bs z_f$. Therefore, one cannot ensure that the probability that~\eqref{eq:gen-embed} holds (with $\delta = 0$) on two cell centers $\bs x = \bs x(\bs \Omega, \bs \xi), \bs y = \bs y(\bs \Omega, \bs \xi)$ exceeds $1 - C e^{-c m  \epsilon^2}$, since that probability is itself taken over $\bs \Omega, \bs \xi$. This flaw breaks the proof of~\cite[Thm~3.2]{boufounos2017representation}.

\medskip

\subsection{An alternative geometry-preserving embedding}
\label{sec:altern-geom-pres}

One can use Prop.~\ref{prop:kernelApproxUniform} to get a variant of Thm~\ref{thm:emb_discontinuous} relying on the equivalence between~\eqref{eq:gen-embed} and~\eqref{eq:equiv-kernel-geom-preserv}.
This variant achieves the same high-level goal (\ie a non-asymptotic guarantee on the approximation error achieved by the embedding $\bs z_f$ that holds infinite signal sets even for discontinuous $f$), but the assumptions it relies on differ in two aspects. First, the smoothness of the distance map $\gamma$ is not anymore characterized by its Lipschitz smoothness directly, but by the constant $C_{\Lambda}$, defined by the sampling scheme $\Lambda$ driving the random projections $\bs \Omega$. Second, we use the mean Lipschitz property instead of the $T$-part Lipschitz property as notion of ``generalized smoothness'' for the map $f$. It is not clear if this change is fundamentally necessary to be able to prove a variant of Thm~\ref{thm:emb_discontinuous}, but we leave an investigation of this issue for future work (the universal quantization $q$ satisfies both properties anyway).

In fact, the following corollary shows that one can define novel \emph{asymmetric embeddings} from $\Sigma$ into $\bb C^m$; we can map two vectors of $\Sigma$ with different random feature mappings $\bs z_f$ and $\bs z_g$ achieved with distinct periodic functions $f$ and $g$, respectively, and still show that, under certain conditions on $f$, $g$, and the frequency distribution $\Lambda$, $\|\bs z_f(\bs x) - \bs z_g(\bs y)\|^2$ approximates a distortion of the distance between any $\bs x, \bs y \in \Sigma$ provided $m$ is large compared to the complexity of $\Sigma$. Then, setting $f = g$ provides a specific embedding of $\Sigma$ into $\bb C^m$, in the sense described by~\cite{boufounos2017representation}.

\begin{corollary}[Asymmetric geometry-preserving embedding]
  \label{cor:geom-pres-our-approach}

  Let $\Sigma$ be a compact set with finite covering number,  $f,g \in \pf$ be two real $2\pi$-periodic functions and finite mean smoothness constants $L^{\mu}_f > 0$ and $L^{\mu}_g > 0$, respectively. We assume that the frequency distribution $\Lambda$ is such that $C_{\Lambda} < \infty$, and there exists a real, one-dimensional \pd kernel $\sik_0:\bb R_+ \to [0, 1]$ such that $[\cl F^{-1} \Lambda](\bs u) = \sik_0(\|\bs u\|_\sharp)$ for some norm~${\|\cdot\|_\sharp}$. 

  \noindent For all error level $\epsilon > 0$, provided the feature dimension is larger than
  \begin{equation}
    \label{eq:unif-kern-approx-sample-complex-cor2}
    \ts m \geq 128 \cdot \frac{1}{\epsilon^2} \cdot \cl H_{\epsilon/c}(\Sigma),
  \end{equation}
  with constant $c = 4 C_{\Lambda}(L^{\mu}_f + L^{\mu}_g + 2\min(L^{\mu}_f, L^{\mu}_g))$, we have, with probability exceeding $1- 9 \exp(- \frac{m\epsilon^2}{64})$, 
  \begin{equation}
    \label{eq: geom-pres-our-approach}
    \gamma_{f,g}(\|\bs x - \bs y\|_\sharp) - 4 \epsilon \leq \|\bs z_f(\bs x) - \bs z_g(\bs y)\|^2 \leq \gamma_{f,g}(\|\bs x - \bs y\|_\sharp) + 4 \epsilon,\ \forall \bs x, \bs y \in \Sigma,
  \end{equation}
  according to the distance map $\gamma_{f,g}$ defined by
  $$
  \ts \gamma_{f,g}: s \in \bb R_+ \to \gamma_{f,g}(s) = \|f\|^2 + \|g\|^2 - 2\sum_{k \in \Zbb} \, F_kG_k^* \, \sik_0(|k| s) \in \bb R_+.
  $$
  Therefore, if $f=g$ and if $\gamma_{f,f}$ is invertible, $\bs z_f$ is a $(\gamma_{f,f}, 0, 4\epsilon)$-embedding of $\Sigma$ (equipped with the norm ${\|\cdot\|_\sharp}$) into $\bb C^m$, with $\gamma_{f,f}(0) = 0$ and $\gamma_{f,f}(s) \in [0, 2]$. 
\end{corollary}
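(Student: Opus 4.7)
The natural starting point is the expansion
\begin{equation*}
\|\bs z_f(\bs x) - \bs z_g(\bs y)\|^2 = \|\bs z_f(\bs x)\|^2 + \|\bs z_g(\bs y)\|^2 - 2 \langle \bs z_f(\bs x), \bs z_g(\bs y)\rangle,
\end{equation*}
which is real-valued since $f$ and $g$ are. The three terms on the right are precisely the empirical kernels already analyzed: $\wh\kappa_{f,f}(\bs x,\bs x)$, $\wh\kappa_{g,g}(\bs y,\bs y)$, and $\wh\kappa_{f,g}(\bs x,\bs y)$. The plan is to apply Prop.~\ref{prop:kernelApproxUniform} once to each of these pairings. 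Under the stated sample-complexity condition, each uniform approximation on $\Sigma \times \Sigma$ holds with failure probability at most $3\exp(-m\epsilon^2/64)$, so a union bound gives a joint success probability of at least $1 - 9\exp(-m\epsilon^2/64)$. A triangle inequality then shows that $\|\bs z_f(\bs x) - \bs z_g(\bs y)\|^2$ is within $\epsilon + \epsilon + 2\epsilon = 4\epsilon$ of $\kappa_{f,f}(\bs x,\bs x) + \kappa_{g,g}(\bs y,\bs y) - 2\kappa_{f,g}(\bs x,\bs y)$, uniformly on $\Sigma \times \Sigma$.

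It remains to identify this expected quantity with $\gamma_{f,g}(\|\bs x - \bs y\|_\sharp)$. Invoking Prop.~\ref{prop:expected_kernel} together with $\sik(\bs 0) = 1$, Parseval's identity yields $\kappa_{f,f}(\bs x,\bs x) = \sum_k |F_k|^2 = \|f\|^2$ and likewise $\kappa_{g,g}(\bs y,\bs y) = \|g\|^2$. The assumed radial form $\sik(\bs u) = \sik_0(\|\bs u\|_\sharp)$, combined with the absolute homogeneity of ${\|\cdot\|_\sharp}$, gives $\sik(k(\bs x - \bs y)) = \sik_0(|k|\,\|\bs x - \bs y\|_\sharp)$, so that $\kappa_{f,g}(\bs x,\bs y) = \sum_{k \in \bb Z} F_k G_k^* \sik_0(|k|\,\|\bs x - \bs y\|_\sharp)$. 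Plugging these three expressions into the preceding bound recovers exactly $\gamma_{f,g}(\|\bs x - \bs y\|_\sharp)$, establishing~\eqref{eq: geom-pres-our-approach}. The specialization $f = g$ then yields the $(\gamma_{f,f}, 0, 4\epsilon)$-embedding statement; the identity $\gamma_{f,f}(0) = 2\|f\|^2 - 2\sum_k |F_k|^2 = 0$ follows from the same Parseval equality, and the range bound on $\gamma_{f,f}$ follows from positive-definiteness of $\kappa_{f,f}$ (hence $|\kappa_{f,f}(\bs x, \bs y)| \leq \|f\|^2$) together with $\|f\|_\infty \leq 1$.

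No deep obstacle is expected: the whole argument is essentially a triangle-inequality reassembly of three instances of Prop.~\ref{prop:kernelApproxUniform}. The mildest point of care is the bookkeeping of the constant $c$: Prop.~\ref{prop:kernelApproxUniform} applied to the pairing $(g,g)$ would naturally require $c_{gg} = 16 C_\Lambda L^\mu_g$, which can slightly exceed the stated $c = 4 C_\Lambda(L^\mu_f + L^\mu_g + 2\min(L^\mu_f, L^\mu_g))$ when $L^\mu_g > L^\mu_f$; this is absorbed by interpreting $c$ as the maximum of the three admissible constants, using that the covering entropy $\cl H_\eta(\Sigma)$ is non-increasing in its radius $\eta$.
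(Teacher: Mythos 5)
Your proposal is correct and follows essentially the same route as the paper's proof: expand $\|\bs z_f(\bs x)-\bs z_g(\bs y)\|^2$ into the three empirical kernels $\wh\kappa_{f,f}(\bs x,\bs x)$, $\wh\kappa_{g,g}(\bs y,\bs y)$, $\wh\kappa_{f,g}(\bs x,\bs y)$, apply Prop.~\ref{prop:kernelApproxUniform} to each pairing, union bound to get $1-9\exp(-m\epsilon^2/64)$, and identify the expectations via Prop.~\ref{prop:expected_kernel} and the radial form $\sik(\bs u)=\sik_0(\|\bs u\|_\sharp)$. Two small remarks: your bookkeeping of the constant $c$ is actually more careful than the paper, which invokes the proposition for $(f,f)$ and $(g,g)$ ``with the same probability'' without enlarging $c$ (your fix of taking the maximum of the three admissible constants is the right patch when $L^\mu_f\neq L^\mu_g$); and for the range $\gamma_{f,f}(s)\in[0,2]$, positive-definiteness alone only gives $[0,4\|f\|^2]$, so you should instead use directly that $\sik_0(s)\in[0,1]$, whence $0\leq\sum_k|F_k|^2\sik_0(|k|s)\leq\|f\|^2\leq 1$, as the paper does.
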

\begin{proof}
  Under the hypothesis of this corollary and remembering that $\wh{\kappa}_{f,g}(\Vec{x},\Vec{y}) = \langle \Vec{z}_{f}(\Vec{x}), \Vec{z}_{g}(\Vec{y}) \rangle$, Prop.~\ref{prop:kernelApproxUniform} tells us that the event 
  \begin{equation*}
    \ts |\wh{\kappa}_{f,g}(\Vec{x},\Vec{y}) - \kappa_{f,g}(\Vec{x},\Vec{y})| \leq \epsilon,\ \forall\, \Vec{x},\Vec{y} \in \Sigma,
  \end{equation*}
  holds with probability exceeding  $1 - 3 \exp( - \frac{m \epsilon^2}{64})$. Similarly, with the same probability,
  $|\wh{\kappa}_{f,f}(\Vec{x},\Vec{y}) - \kappa_{f,f}(\Vec{x},\Vec{y})| \leq \epsilon$ and $|\wh{\kappa}_{g,g}(\Vec{x},\Vec{y}) - \kappa_{g,g}(\Vec{x},\Vec{y})| \leq \epsilon$, for all $\bs x, \bs y \in \Sigma$. Therefore, by union bound, these three events jointly hold with probability larger than $1 - 9 \exp( - \frac{m \epsilon^2}{64})$. 

  Conditionally to this combined occurrence, since $\kappa_{f,f}(\bs x,\bs x) = \sik_{f,f}(\bs 0) = \|f\|^2$ and $\kappa_{g,g}(\bs y,\bs y) = \sik_{g,g}(\bs 0) = \|g\|^2$, we find
  $$
  \big|\scp{\bs z_f(\bs x)}{\bs z_f(\bs x)} - \|f\|^2 \big| \leq  \epsilon,\quad \big|\scp{\bs z_g(\bs y)}{\bs z_g(\bs y)} - \|g\|^2 \big| \leq  \epsilon, 
  $$
  and 
  $$
  \|\bs z_f(\bs x) - \bs z_g(\bs y)\|^2 \leq \|f\|^2 + \|g\|^2 - 2 \kappa_{f,g}(\Vec{x},\Vec{y}) + 4 \epsilon = \|f\|^2 + \|g\|^2 - 2\sik_{f,g}(\bs x - \bs y) + 4 \epsilon.
  $$
  Moreover, from Prop.~\ref{prop:expected_kernel}, since $\sik_{f,g}(\bs u) = \sum_{k \in \Zbb} \, F_kG_k^* \, \sik(k \bs u)$
  with  $\sik(\bs u) = (\cl F^{-1} \Lambda)(\Vec{u}) = \sik_0(\|\bs u\|_\sharp)$ and $\bs u \in \bb R^d$, the definition of $\gamma_{f,g}$ provides
  $$
  \gamma_{f,g}(\|\bs u\|_\sharp) = \|f\|^2 + \|g\|^2  - 2 \sik_{f,g}(\bs u),
  $$
  which proves the upper bound of~\eqref{eq: geom-pres-our-approach}, the lower bound being established similarly.

  Since $f,g, \sik_0\in \bb R$, $\sum_k F_k G^*_k \beta_k \in \bb R$ for any real coefficients $\beta_k$, we show easily that $\gamma_{f,g} \in \bb R$ with $\gamma_{f,g}(0) = \|f\|^2 + \|g\|^2 - 2\scp{f}{g} \geq \|f\|^2 + \|g\|^2 - 2\|f\|\,\|g\| \geq 0$. Moreover, if $f=g$, we get $\gamma_{f,f}(0) = 0$ since $\sik_0(0) = 0$, and $\gamma_{f,f}(s) \in [0, 2]$ since $0 \leq \sik_0(s) \leq 1$ for all $s \geq 0$ and $\sum_{k \in \Zbb} \, |F_k|^2 \sik_0(|k| s) \leq \|f\|^2$. 
\end{proof}

In this corollary, the existence of a norm ${\|\cdot\|_\sharp}$ controlling the  behavior of $\cl F^{-1} \Lambda$ is ensured, for instance, if $\Lambda$ is a centered Gaussian distribution, in which case the $\ell_\sharp$-norm is the $\ell_2$-norm. If $\Lambda$ is the Cartesian product of $d$ Cauchy distributions in $\bb R^d$ (with zero location parameter and scale parameter $\tau > 0$), \ie
\begin{equation}
  \label{eq:Cauchy}
  \ts \Lambda(\bs \omega) = \frac{1}{\pi^d \tau^d} \prod_{k=1}^d \frac{\tau^2}{\omega_k^2 + \tau^2},
\end{equation}
then $\cl F^{-1} \Lambda$ amounts to the Laplace distribution and ${\|\cdot\|_\sharp} = {\|\cdot\|_1}$~\cite[Sec.~4.2.2.]{boufounos2017representation}. Moreover, if $\Lambda$ is set to any $\alpha$-stable distribution with $\alpha \geq 1$, \ie a distribution with characteristic function $(\cl F^{-1}\Lambda)(\bs x) \propto \exp(- c \|\bs x\|_\alpha^\alpha)$ with the Gaussian and the Cauchy distributions as special cases, we can reach an (asymmetric) embedding associated with the norm ${\|\cdot\|_\alpha}$~\cite{otero2011generalized}.  

Regarding the distance map $\gamma_{f,g}$, we observe that it does not necessarily vanish at the origin, when $\bs x = \bs y$ in \eqref{eq: geom-pres-our-approach}. As soon as $f \neq g$, a bias exists since 
\begin{equation}
  \label{eq:gamma-fg-origin}
  \ts \gamma_{f,g}(0) = \|f\|^2 + \|g\|^2 - 2\sum_{k \in \Zbb} \, F_kG_k^* = \|f\|^2 + \|g\|^2 - 2\scp{f}{g} = \|f - g\|^2,  
\end{equation}
using $\sik_0(0) = \int_{\bb R^d} \Lambda(\bs \omega) \ud \bs \omega = 1$. For instance, if $f = q$ (with $q$ the universal quantizer defined in~\eqref{eq:universalquantizer}), and $g(\cdot) = \cos(\cdot)$, $\gamma_{q,\cos}(s) = \frac{3}{2} - 2 \Re(F_1)\,\sik_0(s) = \frac{3}{2} - \frac{4}{\pi} \sik_0(s)$ since $\|f\|^2 = 1$, $\|g\|^2 = 1/2$, $2G_k = \delta_{k,1} + \delta_{k,-1}$, and $F_1 = F_{-1} = \frac{2}{\pi}$ from~\eqref{eq:universalquantizer}. Therefore, if $\Lambda$ is a Gaussian distribution with unit standard deviation,
$$
\ts \|\bs z_f(\bs x) - \bs z_g(\bs y)\|^2 \approx \gamma_{q,\cos}(\|\bs x - \bs y\|) = \frac{3}{2} - \frac{4}{\pi}\exp(-\frac{1}{2}\|\bs x - \bs y\|^2).
$$
Compared to the case $f(\cdot)=g(\cdot)=\cos(\cdot)$ where
$$
\ts \gamma_{\cos,\cos}(\|\bs x - \bs y\|) = 1 - \exp(-\frac{1}{2}\|\bs x - \bs y\|^2),
$$
and $\gamma_{\cos,\cos}(0) = 0$, we thus observe a systematic bias $\gamma_{q,\cos}(0) = \frac{3}{2} - \frac{4}{\pi} \approx 0.2268$ at the origin.

This non-vanishing bias\footnote{This bias is here demonstrated when the feature space $\bb C^m$ is equipped with the squared $\ell_2$-distance; the question of its existence for other metrics, such as the $\ell_1$-distance, remains open.} in the case $f \neq g$ is not a drawback per se, since $\gamma_{f,g}$ can still be invertible. For $\gamma_{q,\cos}$ and a Gaussian $\Lambda$ with unit variance, we find
$$
\ts \gamma_{q,\cos}^{-1}(s') = \big(-2\ln( \frac{3\pi}{8} - \frac {\pi}{4} s')\big)^{1/2},\ \text{with}\ s' \in [\frac{3}{2} - \frac{4}{\pi}, \frac{3}{2}].
$$
This shows that, if $\epsilon$ is small enough, we can still reliably infer the distance between $\bs x$ and $\bs y$ from $\|\bs z_f(\bs x) - \bs z_g(\bs y)\|$ provided that $\bs x \approx \bs y$. Indeed, estimating $\gamma_{q,\cos}^{-1}(\|\bs z_f(\bs x) - \bs z_g(\bs y)\|^2) \approx \|\bs x - \bs y\|$ leads to a first order error~\cite{boufounos2017representation} proportional to
$$
\ts \big(\frac{\ud}{\ud s}\gamma_{q,\cos}(s)\big)^{-1} \epsilon = \frac{4}{\pi s}\, \epsilon\exp( \frac{s^2}{2}),  
$$
for $s = \|\bs x - \bs y\|$. As expected from the local nature of the embedding, this error quickly explodes when $s$ is large. 

\begin{remark}
When $f=g$, $\gamma(s) = 2\|f\|^2 - 2\sum_k |F_k|^2 \sik_0(|k| s)$ is invertible iff $\sum_k |F_k|^2 \sik_0(|k| s)$ is invertible. This occurs, for instance, if the one-dimensional kernel $\sik_0$ is differentiable with $\frac{\ud }{\ud s} \sik_0(s) < 0$ for all $s > 0$, which is the case of any symmetric $\alpha$-stable distribution $\Lambda$ for which $(\cl F^{-1} \Lambda)(\bs x) \propto \exp(-c \|\bs x\|^\alpha_\alpha)$. In this case, we easily verify that $\frac{\ud }{\ud s} \gamma(s) > 0$ for $s > 0$, and $\gamma$ is monotonically increasing when $s$ increases, starting from $\gamma(0)=0$. This ensures the injectivity of $\gamma$.     
\end{remark}

\section{Semi-quantized random Fourier features}
\label{sec:quantized}

In this section, we explore one practical application of our general results from Sec.~\ref{sec:nonasymptotic} by instantiating them on the semi-quantized scenario motivated in the Introduction (see Fig.~\ref{fig:intro}). More precisely, we consider the asymmetric RPF setting $\wh{\kappa}_{f,g}(\Vec{x},\Vec{y}) = \langle \Vec{z}_{f}(\Vec{x}), \Vec{z}_{g}(\Vec{y}) \rangle$ in the particular case where: \textit{(i)} one of the signals $\Vec{x}$ is available through its one-bit universal features $\Vec{z}_{q}(\Vec{x}) \in \{-\frac{1}{\sqrt{m}},+\frac{1}{\sqrt{m}}\}^m$ (that is, the first periodic map $f$ is the square wave $q$, alternating between $\pm 1$ with period $2\pi$, see Fig.~\ref{fig:univquant}); \textit{(ii)} the other signal $\Vec{y}$ is available through its classical (full-precision) random Fourier features $\Vec{z}_{\cos}(\Vec{y})$ (that is, the second map $g$ is a cosine). 

Concretely, we start by highlighting a striking general result: when the classical RFF (\ie for which $g(\cdot) = \exp(\im \cdot)$ or $\cos(\cdot)$) are combined with \emph{any} mean smooth function $f \in \pf$, then the asymmetric inner product $\wh{\kappa}_{f,g}(\Vec{x},\Vec{y})$ exactly recovers the initial kernel $\kappa$ that would be approached by symmetric usual RFF $\wh{\kappa}_{g,g}(\Vec{x},\Vec{y})$. Then, to combine this fact with the binary square wave $f = q$, we prove that $q$ is mean smooth (Def.~\ref{def:meanSmoothness}). This finally allows us to obtain a probabilistic uniform bound on the kernel approximation error for the semi-quantized scenario pair, demonstrating in the process how to deal with the scaling issues that appear in such schemes by using the normalization~\eqref{eq:other-normalisation}. 

\paragraph{Expected kernel with a single-frequency nonlinearity:}

Let us begin by noting an interesting consequence of Prop.~\ref{prop:expected_kernel}. From RPF $\Vec{z}_f(\Vec{x})$ captured on $\bs x$ with \textit{any} nonlinearity $f \in \pf$ whose fundamental period is exactly $2\pi$, one can recover in expectation, for a given vector $\bs y$,  the evaluation the shift-invariant kernel $\kappa(\Vec{x},\Vec{y}) = \sik(\Vec{x}-\Vec{y})$ associated with the sampling of the projections $\Vec{\omega}_j \sim \Lambda = \cl F \sik$. 

Indeed, using Prop.~\ref{prop:expected_kernel} in the complex field, and setting $g(\cdot) = \exp(\im \cdot)$ for the RPF of $\Vec{y}$---which in this case is the RFF (Def.~\ref{def:RFF})---ensures that $\kappa_{f,\exp(\im \cdot)}(\Vec{x},\Vec{y}) = F_1 \kappa(\Vec{x},\Vec{y})$. Intuitively, the dithering averages out all the high-frequency components in $f$, leaving only its fundamental frequency. When dealing with real-valued quantities $\kappa, f \in \bb R$, we can use the real RFF (where $g(\cdot) = \Re \exp(\im \cdot) = \cos(\cdot)$) instead, and using the normalized kernel \eqref{eq:other-normalisation} with \eqref{eq:expected_kernel_expression} gives
\begin{equation}
  \label{eq:cosineasymmetric}
  \ts  \dot{\kappa}_{f,\cos}(\bs x, \bs y) = \frac{1}{\Re F_1}  \kappa_{f,\cos}(\Vec{x},\Vec{y}) =
\frac{1}{\Re F_1} \expec{} \langle f(\bs\Omega^\top \Vec{x} + \Vec{\xi}),\cos(\bs\Omega^\top \Vec{y} + \Vec{\xi}) \rangle = \kappa(\Vec{x},\Vec{y}),
\end{equation}
since $\scp{f}{g} = \sum_k F_k G^*_k = \Re F_1$. We thus recover, through $\dot{\kappa}_{f,\cos}$ the initial kernel $\kappa$, thanks to \textit{a rescaling} by $1/\scp{f}{g} = 1/\Re\{F_1\}$ which must be taken into account for a fair comparison. 

\begin{remark}
  In theory, we can thus recover, from $\Vec{z}_f(\Vec{x})$, the kernel $\kappa$ at many different scales by ``probing'' it with $\Vec{z}_{\cos(k\,\cdot)}(\Vec{y})$ for any scale $k$ such that $F_k \neq 0$. However, we observe in practice that the kernel approximation error quickly increases with $k$. This can be understood in the light of Prop.~\ref{prop:kernelApproxUniform}, since one easily show that $L_{\cos(k \, \cdot)} = |k|$ and\footnote{First, $L^\mu_{\cos(k \cdot)} \leq L^\mu_{\exp(ik \cdot)} = |k|$ since $|\!\cos(\alpha) - \cos(\beta)|\leq |e^{\im \alpha} - e^{\im \beta}|$ for all $\alpha,\beta \in \bb R$. Second, from $|\!\cos(k(t+r)) - \cos(kt)| = 2 |\!\sin(k(t+\frac{r}{2}))\sin\frac{k r}{2}|$ for all $t,r \in \bb R$, we get, by fixing $r=\delta$ in (\ref{eq:meanSmooth}), $L^\mu_f \geq \sup_{\delta>0} \frac{2}{\delta}|\!\sin\!\frac{k\delta}{2}|\, \bb E_{t\sim \cl U([0,2\pi])} |\!\sin(k(t+\frac{\delta}{2}))| = \frac{2}{\pi} |k|.$} $\frac{2}{\pi} |k| \leq L^\mu_{\cos(k \, \cdot)} \leq |k|$. Moreover, if $\|f\|^2 = \sum_k |F_k|^2$ is bounded, each rescaling factor $({\Re\{F_k\}})^{-1}$ grows as $k$ increases; for instance, $({\Re\{F_k\}})^{-1} \propto |k|$ for $f = q$. 
\end{remark}      

The asymmetric scheme in~\eqref{eq:cosineasymmetric} is interesting because it allows the same level of control over the approximated kernel as the usual RFF (which is an improvement compared to the scale mixture of RFF kernels imposed by Prop.~\ref{prop:BoufounosKernel}) while still enjoying the freedom to use any type of features $\Vec{z}_f(\Vec{x})$ for one of the signals being compared---a particularly appealing choice being $f = q$, the one-bit universal quantization. However, in order to use Prop.~\ref{prop:kernelApproxUniform} to obtain uniform error bounds, we still need to prove the mean Lipschitz smoothness of this (discontinuous) map.

\paragraph{Mean Lipschitz smoothness of universal quantization:}
We now show that the one-bit universal quantization function $q$ (\ie the square wave) has the mean Lipschitz property---although it is discontinuous. The same strategy could be used to prove the mean Lipschitz smoothness of any function in $\pf$ with a finite number of discontinuities per period.

\begin{proposition}
  \label{prop:meanSmoothUniversalQuantization}
  The one-bit universal quantization function $q$, defined in~\eqref{eq:universalquantizer}, has the mean Lipschitz smoothness property (Def.~\ref{def:meanSmoothness}) with constant
  \begin{equation}
    L^{\mu}_q = \ts \frac{4}{\pi}\|q\|_{\infty} = \frac{4}{\pi}.
  \end{equation}
\end{proposition}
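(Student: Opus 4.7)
The plan is to exploit the explicit structure of $q$ to compute (or upper bound) the integrand $I_\delta(t) := \sup_{r \in [-\delta,\delta]}|q(t+r) - q(t)|$ directly, rather than trying to invoke any more abstract smoothness property. Since $q(\cdot) = \sign\circ\cos(\cdot)$ is $2\pi$-periodic, piecewise constant, taking only the values $\pm 1$, with exactly two discontinuities per period (at $t = \pi/2$ and $t = 3\pi/2$, each a jump of magnitude $2$), the integrand $I_\delta(t)$ admits a very simple characterization: $I_\delta(t) \in \{0, 2\}$ for almost every $t$.

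First I would observe that if the interval $[t-\delta, t+\delta]$ contains no discontinuity, then $q$ is constant on it, so $I_\delta(t) = 0$; otherwise $q$ takes both values $+1$ and $-1$ on $[t-\delta, t+\delta]$, and since $q(t) \in \{-1,+1\}$, we obtain $I_\delta(t) = 2$. Hence the set of $t \in [0, 2\pi)$ on which $I_\delta(t) = 2$ is precisely the union of the two $\delta$-neighborhoods of the discontinuities, namely $(\pi/2 - \delta,\, \pi/2 + \delta) \cup (3\pi/2 - \delta,\, 3\pi/2 + \delta)$ (interpreted modulo $2\pi$). See Fig.~\ref{fig:univquant} (left, dashed green curve) for an illustration of this integrand for a specific value of $\delta$.

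Next I would split the analysis into two regimes in $\delta \in (0, \pi]$. For $\delta \in (0, \pi/2]$, the two neighborhoods are disjoint, so the above set has Lebesgue measure exactly $4\delta$, giving
\begin{equation*}
\ts \frac{1}{2\pi}\int_0^{2\pi} I_\delta(t)\,\ud t \;=\; \frac{1}{2\pi}\cdot 2 \cdot 4\delta \;=\; \frac{4}{\pi}\,\delta.
\end{equation*}
For $\delta \in (\pi/2, \pi]$, the two neighborhoods cover all of $[0, 2\pi)$, so $I_\delta(t) = 2$ almost everywhere and the mean equals $2$; but $2 \leq \frac{4}{\pi}\delta$ in this regime precisely because $\delta \geq \pi/2$. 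Both regimes therefore yield the bound~\eqref{eq:meanSmooth} with constant $L^\mu_q = 4/\pi$, and the first regime shows this constant is sharp.

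There is no real obstacle here: the only subtlety is to be careful about the case $\delta > \pi/2$, where the naive ``measure equals $4\delta$'' computation breaks down due to overlap, but the inequality still holds (with slack). The same strategy --- localizing the support of $I_\delta$ to $\delta$-neighborhoods of the jump set --- applies verbatim to any bounded $f \in \pf$ with finitely many discontinuities per period and locally Lipschitz continuous pieces, and would yield $L^\mu_f$ proportional to the total jump variation plus the sum of piecewise Lipschitz constants.
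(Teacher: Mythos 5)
Your proof is correct and follows essentially the same route as the paper: both characterize the integrand $I_\delta(t)$ as a two-valued function equal to $2\|q\|_\infty$ exactly on the $\delta$-neighborhoods of the jumps at $\tfrac{\pi}{2}+\pi\Zbb$ and zero elsewhere, then integrate (the paper compresses your two-regime case split into the single expression $\min(4\delta,2\pi)\cdot 2\|q\|_\infty \leq 8\|q\|_\infty\,\delta$). Your explicit handling of the overlapping regime $\delta>\pi/2$ and the sharpness remark are fine additions but not substantive departures.
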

\begin{proof}
  By definition of the mean smoothness property, we must find $L^{\mu}_q$ such that
  \begin{equation*}
    \ts \frac{1}{2\pi} \int_{0}^{2\pi} \max_{r \in [-\delta, \delta]} \{ |q(t + r) - q(t)| \} \: \mathrm{d}t \leq L^{\mu}_q \cdot \delta.
  \end{equation*}
  We start by characterizing the integrand $I_{\delta}(t) := \max_{|r| \leq \delta} \{ |q(t + r) - q(t)| \} $ . Since $q(t)$ is constant (in particular, $q(t) = \pm \|q\|_{\infty}$) everywhere except on discontinuities at $t \in \frac{\pi}{2} +  \pi \bb Z$ where its height changes by an absolute step of $2\|q\|_{\infty}$, we have for $k \in \bb Z$ that (see Fig.~\ref{fig:univquant})
  $$ I_{\delta}(t) = \begin{cases} 
    0 & \frac{\pi}{2} + k\pi + \delta < t < \frac{\pi}{2} + (k+1)\pi - \delta \\
    2 \|q\|_{\infty} & \frac{\pi}{2} + k\pi - \delta \leq t \leq \frac{\pi}{2} + k\pi + \delta.
  \end{cases} $$
  Integrating this over one period gives $\int_0^{2\pi} I_{\delta}(t) \mathrm{d}t = \min(4\delta, 2\pi ) \cdot 2\|q\|_{\infty} \leq 8 \|q\|_{\infty} \cdot \delta$, \ie $L^{\mu}_q = \frac{4}{\pi}\|q\|_{\infty}$.
\end{proof}

\paragraph{Combining quantized and cosine features:}
We are interested in approximating a specific kernel $\kappa(\Vec{x},\Vec{y})$ by the asymmetric features product $\langle \Vec{z}_q(\Vec{x}), \Vec{z}_{\cos}(\Vec{y}) \rangle$.
This product gives on average $\kappa_{q,\cos} =  \frac{2}{\pi} \kappa $ (recall from~\eqref{eq:universalquantizer} that $Q_k = \frac{2}{k \pi}(-1)^{(k-1)/2}$ for $k$ odd and $0$ otherwise), and the re-scaled approximation~$\wt{\kappa}_{q,\cos}$ defined in \eqref{eq:other-normalisation} in this case is given by
\begin{equation}
  \label{eq:semiQuantizedEstimator}
  \ts \wt{\kappa}_{q,\cos}(\Vec{x},\Vec{y}) := \frac{\pi}{2} \cdot \wh{\kappa}_{q,\cos}(\Vec{x},\Vec{y}) = \frac{\pi}{2}\langle \Vec{z}_q(\Vec{x}), \Vec{z}_{\cos}(\Vec{y}) \rangle  \approx \kappa(\Vec{x},\Vec{y}).
\end{equation}

\noindent We bound the error of approximating the kernel over an infinite compact set $\Sigma$ thanks to Prop.~\ref{prop:kernelApproxUniform}.

\begin{corollary}[Uniform kernel approximation error from quantized-complex asymmetric features]
  \label{cor:main}
  Given $\epsilon > 0$, a compact set $\Sigma$, and the frequency distribution $\Lambda$ such that $C_{\Lambda} < \infty$, provided that
  \begin{equation}
    \ts m \geq 32 \pi^2 \cdot \frac{1}{\epsilon^2} \cdot \cl H_{\epsilon/((8 + 6\pi) C_{\Lambda})}(\Sigma),
  \end{equation}
  the following kernel approximation bound holds uniformly:
  \begin{equation}
    \ts \big| \wt{\kappa}_{q,\cos}(\Vec{x},\Vec{y}) - \kappa(\Vec{x},\Vec{y}) \big| \leq \epsilon,\quad  \forall \Vec{x},\Vec{y} \in \Sigma,
  \end{equation}
  with probability exceeding $1 - 3 \exp( - \frac{m \epsilon^2}{16 \pi^2})$
\end{corollary}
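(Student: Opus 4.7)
The plan is to recognize Corollary~\ref{cor:main} as a direct specialization of Proposition~\ref{prop:kernelApproxUniform} to the choice $f = q$ and $g(\cdot) = \cos(\cdot)$, combined with the rescaling that turns the expected kernel $\kappa_{q,\cos}$ into the target $\kappa$. First I would identify the ingredients needed to invoke Proposition~\ref{prop:kernelApproxUniform}: both $q$ and $\cos$ must belong to $\pf$ (both are $2\pi$-periodic, zero-mean, and bounded by~$1$), and both must be mean Lipschitz smooth. Proposition~\ref{prop:meanSmoothUniversalQuantization} gives $L^{\mu}_q = 4/\pi$, while $\cos$ is $1$-Lipschitz continuous in the classical sense so $L^{\mu}_{\cos} \le 1$. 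Since $4/\pi > 1$, we have $\min(L^{\mu}_q, L^{\mu}_{\cos}) = 1$, and the constant appearing in Prop.~\ref{prop:kernelApproxUniform} becomes
\begin{equation*}
c_0 \;=\; 4 C_{\Lambda}\bigl(L^{\mu}_q + L^{\mu}_{\cos} + 2\min(L^{\mu}_q, L^{\mu}_{\cos})\bigr) \;=\; 4 C_{\Lambda}\bigl(\tfrac{4}{\pi} + 3\bigr).
\end{equation*}

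Next I would apply the equality $\kappa_{q,\cos}(\Vec{x},\Vec{y}) = \frac{2}{\pi}\kappa(\Vec{x},\Vec{y})$ derived in~\eqref{eq:cosineasymmetric} (using $\Re F_1 = 2/\pi$ for the universal quantizer), so that the rescaling $\wt{\kappa}_{q,\cos} = \frac{\pi}{2}\wh{\kappa}_{q,\cos}$ precisely targets $\kappa$. The identity
\begin{equation*}
\bigl|\wt{\kappa}_{q,\cos}(\Vec{x},\Vec{y}) - \kappa(\Vec{x},\Vec{y})\bigr| \;=\; \tfrac{\pi}{2}\,\bigl|\wh{\kappa}_{q,\cos}(\Vec{x},\Vec{y}) - \kappa_{q,\cos}(\Vec{x},\Vec{y})\bigr|
\end{equation*}
shows that controlling the left-hand side by $\epsilon$ is equivalent to controlling the right-hand side by $\epsilon' := 2\epsilon/\pi$. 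I would therefore apply Prop.~\ref{prop:kernelApproxUniform} with error level $\epsilon'$ in place of $\epsilon$.

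Finally, I would substitute $\epsilon' = 2\epsilon/\pi$ everywhere in the conclusion of Prop.~\ref{prop:kernelApproxUniform}. The covering radius becomes
\begin{equation*}
\frac{\epsilon'}{c_0} \;=\; \frac{2\epsilon/\pi}{4C_{\Lambda}(4/\pi + 3)} \;=\; \frac{\epsilon}{2C_{\Lambda}(4 + 3\pi)} \;=\; \frac{\epsilon}{(8 + 6\pi)C_{\Lambda}},
\end{equation*}
which matches the entropy argument in the statement. The sample complexity threshold transforms as $128/(\epsilon')^2 = 128 \cdot \pi^2/(4\epsilon^2) = 32\pi^2/\epsilon^2$, and the failure probability $3\exp(-m(\epsilon')^2/64)$ becomes $3\exp(-m\epsilon^2/(16\pi^2))$, exactly as claimed.

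This is essentially a bookkeeping exercise; there is no real obstacle. The only item requiring mild care is tracking the multiplicative factor $\pi/2$ consistently through both the error level and the covering radius so that the constant $(8+6\pi)$ emerges cleanly. The substantive work (handling the discontinuity of $q$ through mean Lipschitz smoothness, and deriving the scale-mixture expression of $\kappa_{q,\cos}$) has already been done in Prop.~\ref{prop:meanSmoothUniversalQuantization}, Prop.~\ref{prop:expected_kernel}, and Prop.~\ref{prop:kernelApproxUniform}.
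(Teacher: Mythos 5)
Your proposal is correct and follows exactly the paper's own route: invoke Prop.~\ref{prop:kernelApproxUniform} with $f=q$, $g=\cos$, constants $L^{\mu}_q = \tfrac{4}{\pi}$ and $L^{\mu}_{\cos}=1$ giving $c = 4C_{\Lambda}(\tfrac{4}{\pi}+3)$, then rescale via $\epsilon' = \tfrac{2\epsilon}{\pi}$ (equivalently $\epsilon = \epsilon'/\Re Q_1$) to pass from $\wh{\kappa}_{q,\cos}$ versus $\tfrac{2}{\pi}\kappa$ to $\wt{\kappa}_{q,\cos}$ versus $\kappa$. The bookkeeping of the covering radius, sample complexity, and failure probability matches the paper's proof.
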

\begin{proof}
  Apply Prop.~\ref{prop:kernelApproxUniform} with $f = q$, $g = \cos$, using that $L^{\mu}_q = \frac{4}{\pi}$ (from Prop.~\ref{prop:meanSmoothUniversalQuantization}) and $L^{\mu}_{\cos} = 1$: for any given $\epsilon'>0$, if $m \geq 128 \cdot \frac{1}{\epsilon'^2} \cdot \cl H_{\epsilon'/c}(\Sigma)$ with $c = (12+\frac{16}{\pi}) C_{\Lambda}$, 
  \begin{equation*}
    \ts \bb P\left[ \exists \, \Vec{x},\Vec{y} \in \Sigma \: : \: \left| \wh{\kappa}_{q,\cos}(\Vec{x},\Vec{y}) -\frac{2}{\pi} \kappa(\Vec{x},\Vec{y}) \right| > \epsilon' \right] \leq 3 \exp \left( - \frac{m \epsilon'^2}{64} \right).
  \end{equation*}
  To take into account the scaling of the kernel, set $\epsilon = \frac{\epsilon'}{\Re Q_1} = \frac{\pi}{2}\epsilon'$.
\end{proof}

\begin{example}
	Consider a final time our example of a union of $S$ $s$-dimensional subspaces (see Ex.~\ref{ex:UoS-example}) combined with the Gaussian kernel with unit bandwidth (and $C_\Lambda = 1$). In this case, the kernel approximation error $| \wt{\kappa}_{q,\cos}(\Vec{x},\Vec{y}) - \kappa(\Vec{x},\Vec{y}) |$ is uniformly bounded over $\Sigma$, with high probability, provided that the number of features satisfies $m \geq  C \epsilon^{-2} ( s \log( \frac{8 + 6\pi}{\epsilon}) + \log S)$, which reduces to $m \geq  C \epsilon^{-2}  s \log( \frac{(8 + 6\pi) e d}{s \epsilon})$ for bounded $s$-sparse signals.
\end{example}

Corollary~\ref{cor:main} provides a theoretical guarantee justifying the semi-quantized scheme presented in the Introduction. In the next section, we further validate this approach from numerical simulations.

\section{Experiments}
\label{sec:experiments}

In all our experiments, we are interested in approximating a kernel $\kappa(\Vec{x},\Vec{y})$, associated with the RFF sampled with $\Lambda^m$, by the inner product of random periodic features. We focus on (combinations of) the two types of features discussed in the previous section: the ``real'' random Fourier features $\Vec{z}_{\cos}(\Vec{x}) = \frac{1}{\sqrt{m}} \cos(\bs \Omega^\top \Vec{x} + \Vec{\xi}) \in \bb R^m$, and the universal features $\Vec{z}_{q}(\Vec{x}) = \frac{1}{\sqrt{m}} q(\bs \Omega^\top \Vec{x} + \Vec{\xi}) \in \{-\frac{1}{\sqrt{m}},+\frac{1}{\sqrt{m}}\}^m$, where $m$ is the number of features (or dimension), and where we generate $\bs \Omega \sim \Lambda^m$ and $\Vec{\xi} \sim \cl U^m([0,2\pi))$.
Recalling the rescaling \eqref{eq:other-normalisation} for fair comparisons of the approximated kernels with $\kappa$, we thus consider three possible combinations: the classical (real) random Fourier features (with $\|f\|^2 = \|\cos(\cdot)\|^2= \frac{1}{2}$),
$$ \ts \wt{\kappa}_{\cos,\cos}(\Vec{x},\Vec{y}) = 2 \langle \Vec{z}_{\cos}(\Vec{x}), \Vec{z}_{\cos}(\Vec{y}) \rangle  \approx \kappa(\Vec{x},\Vec{y}) ,$$
our asymmetric ``semi-quantized'' scheme (with $\scp{f}{g}= 2/\pi$),
$$ \ts \wt{\kappa}_{q,\cos}(\Vec{x},\Vec{y}) = \frac{\pi}{2} \langle \Vec{z}_{q}(\Vec{x}), \Vec{z}_{\cos}(\Vec{y}) \rangle  \approx \kappa(\Vec{x},\Vec{y}) ,$$
and the fully quantized inner product from~\cite{Boufounos2013efficientCodingQuantized} (with $\|q\|^2 = \sum_k |Q_k|^2 = 1$),
$$
\ts \wt{\kappa}_{q,q}(\Vec{x},\Vec{y}) =  \langle \Vec{z}_{q}(\Vec{x}), \Vec{z}_{q}(\Vec{y}) \rangle  \approx \kappa_{q,q}(\Vec{x},\Vec{y}) = \sum_{k \in \bb Z} |Q_k|^2 \kappa(k\Vec{x},k\Vec{y}) \neq \kappa(\Vec{x},\Vec{y}).
$$

\subsection{Qualitative analysis of the expected kernel}
\label{sec:qual-analys-expect}

As a first experiment, we visually demonstrate that our asymmetric product $\wt{\kappa}_{q,\cos}$ indeed approaches a target kernel $\kappa$. As target, we use the Gaussian kernel $\kappa(\Vec{x},\Vec{y}) = \exp(-\frac{\|\Vec{x}-\Vec{y}\|_2^2}{2\sigma^2})$ (for which $\Lambda$ is the Gaussian distribution $\cl N(\Vec{0},\sigma^{-2}\bs I_d)$), as well as the Laplace kernel $\kappa(\Vec{x},\Vec{y}) = \exp(-\frac{\|\Vec{x}-\Vec{y}\|_1}{\tau})$ (where $\Lambda$ is the Cauchy distribution~\eqref{eq:Cauchy}), both in dimension $d = 5$.

We evaluate the three inner products $\wt{\kappa}_{\cos,\cos}$, $\wt{\kappa}_{q,\cos}$ and $\wt{\kappa}_{q,q}$ on $n = 2000$ pairs of vectors $\{(\Vec{x}_i,\Vec{y}_i)\}_{i = 1}^n$, that are generated as follows. We first sample $\Vec{x}_i \in \bb R^5$ according to a standard normal distribution, then pick $\Vec{y}_i = \Vec{x}_i + \lambda_i \Vec{u}_i$, where $\Vec{u}_i$ is a randomly chosen unit vector (\ie normalized such that $\|\Vec{u}_i\|_p = 1$ with $p = 2$ for the Gaussian kernel and $p=1$ for the Laplace one), and $\lambda_i = \frac{(i-1) \lambda_{\max}}{(n-1)}$ is a controlled distance which is incremented for each pair, linearly increasing from $0$ to $\lambda_{\max} = 5$. This ensures that we test the kernel approximations uniformly in the desired range of distances $\|\Vec{x}-\Vec{y}\|_p$.

We then generate one realization of $\bs \Omega, \bs \xi$ (with $m = 200$ for the Gaussian kernel, and $m=2000$ for the Laplace kernel, values which were arbitrarily chosen to get pleasing visualizations), which we use to compute the real RFF $\big\{\big(\Vec{z}_{\cos}(\Vec{x}_i), \Vec{z}_{\cos}(\Vec{y}_i)\big) \big\}_{i=1}^n$ as well as the universal quantization features $\big\{\big(\Vec{z}_q(\Vec{x}_i), \Vec{z}_{q}(\Vec{y}_i) \big) \big\}_{i=1}^n$, from which we get $n$ evaluations of the classical RFF inner product $\wt{\kappa}_{\cos,\cos}(\Vec{x}_i,\Vec{y}_i) = 2 \langle \Vec{z}_{\cos}(\Vec{x}_i), \Vec{z}_{\cos}(\Vec{y}_i)\rangle$, the asymmetric product $\wt{\kappa}_{q,\cos}(\Vec{x}_i,\Vec{y}_i) = \frac{\pi}{2} \langle \Vec{z}_{q}(\Vec{x}_i), \Vec{z}_{\cos}(\Vec{y}_i) \rangle$, and the fully quantized product $\wt{\kappa}_{q,q}(\Vec{x}_i,\Vec{y}_i) =  \langle \Vec{z}_{q}(\Vec{x}_i), \Vec{z}_{q}(\Vec{y}_i) \rangle$.

Those evaluations are shown as black dots in Fig.~\ref{fig:exp1_Gaussian_Laplacian} for the Gaussian and Laplace kernels in the top and bottom rows, respectively. As predicted by the theory, both the RFF product $\wt{\kappa}_{\cos,\cos}$ and our semi-quantized product $\wt{\kappa}_{q,\cos}$ concentrate around the target kernel $\kappa$ (in red). As expected from~\cite{boufounos2017representation}, in the fully quantized case the product $\wt{\kappa}_{q,q}$ rather concentrates around a different ``distorted'' kernel, $\kappa_{q,q}$. Note that we increased the feature space dimension $m$ tenfold for the Cauchy kernel, which reduced the variance of the approximation. However, it is difficult to notice a substantial difference of approximation quality between the plain RFF $\wt{\kappa}_{\cos,\cos}$ and the semi-quantized asymmetric scheme $\wt{\kappa}_{q,\cos}$. We thus perform a more quantitative exploration of the error $|\wt{\kappa}_{q,\cos} - \kappa|$ in the next experiment.

\begin{figure}[t]
  \centering
  \newcommand{\insertlabel}[1]
  {\hspace{-0.27\textwidth}\raisebox{1.7cm}{#1}\hspace{0.27\textwidth}}
  \subfloat{
    \includegraphics[width=0.3\textwidth]{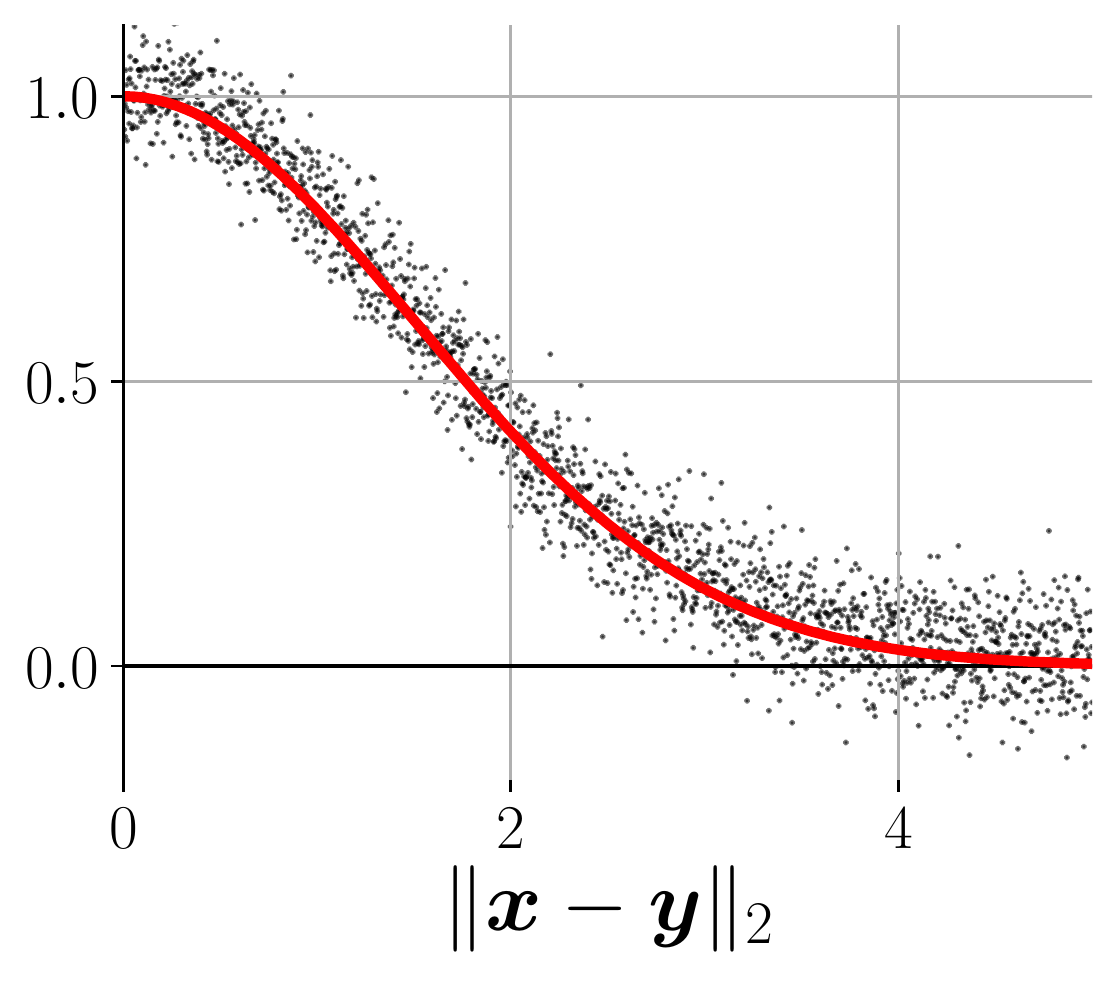}
    \insertlabel{(a)}
    \label{fig:exp1_Gaussian_cc}}
  \subfloat{
    \includegraphics[width=0.3\textwidth]{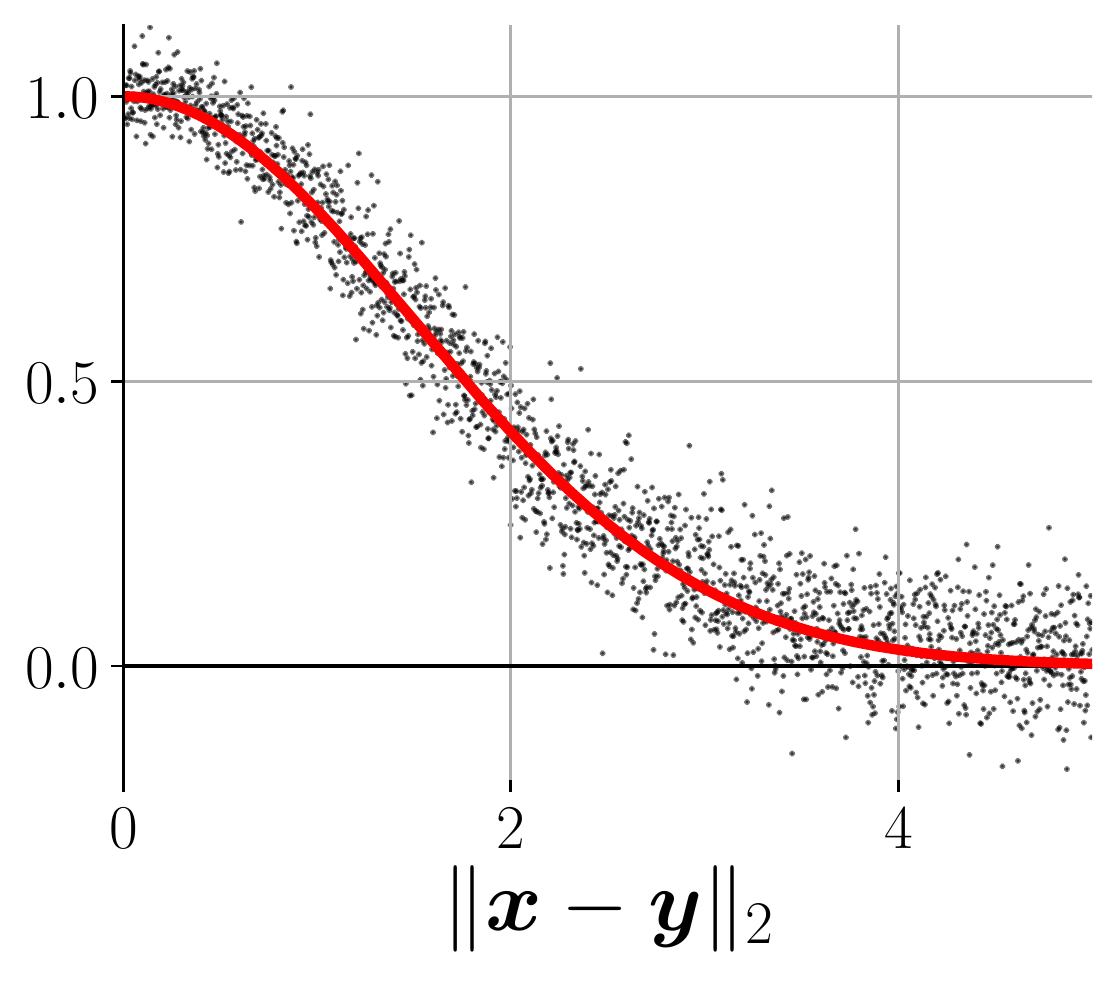}
    \insertlabel{(b)}
    \label{fig:exp1_Gaussian_cq}}
  \subfloat{
    \includegraphics[width=0.3\textwidth]{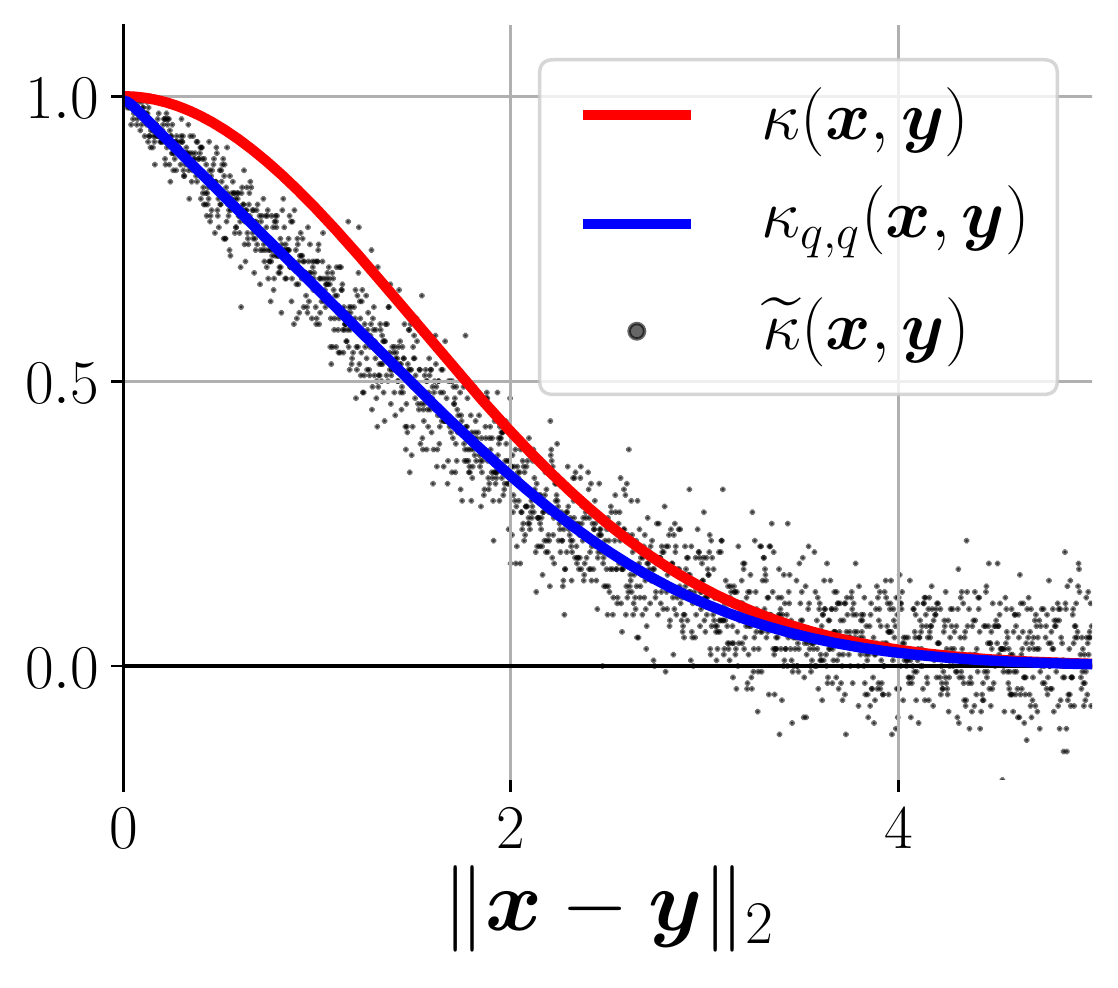}
    \insertlabel{(c)}
    \label{fig:exp1_Gaussian_qq}}\\
    \renewcommand{\insertlabel}[1]
  {\hspace{-0.27\textwidth}\raisebox{1.3cm}{#1}\hspace{0.27\textwidth}}
    \subfloat{
      \includegraphics[width=0.3\textwidth]{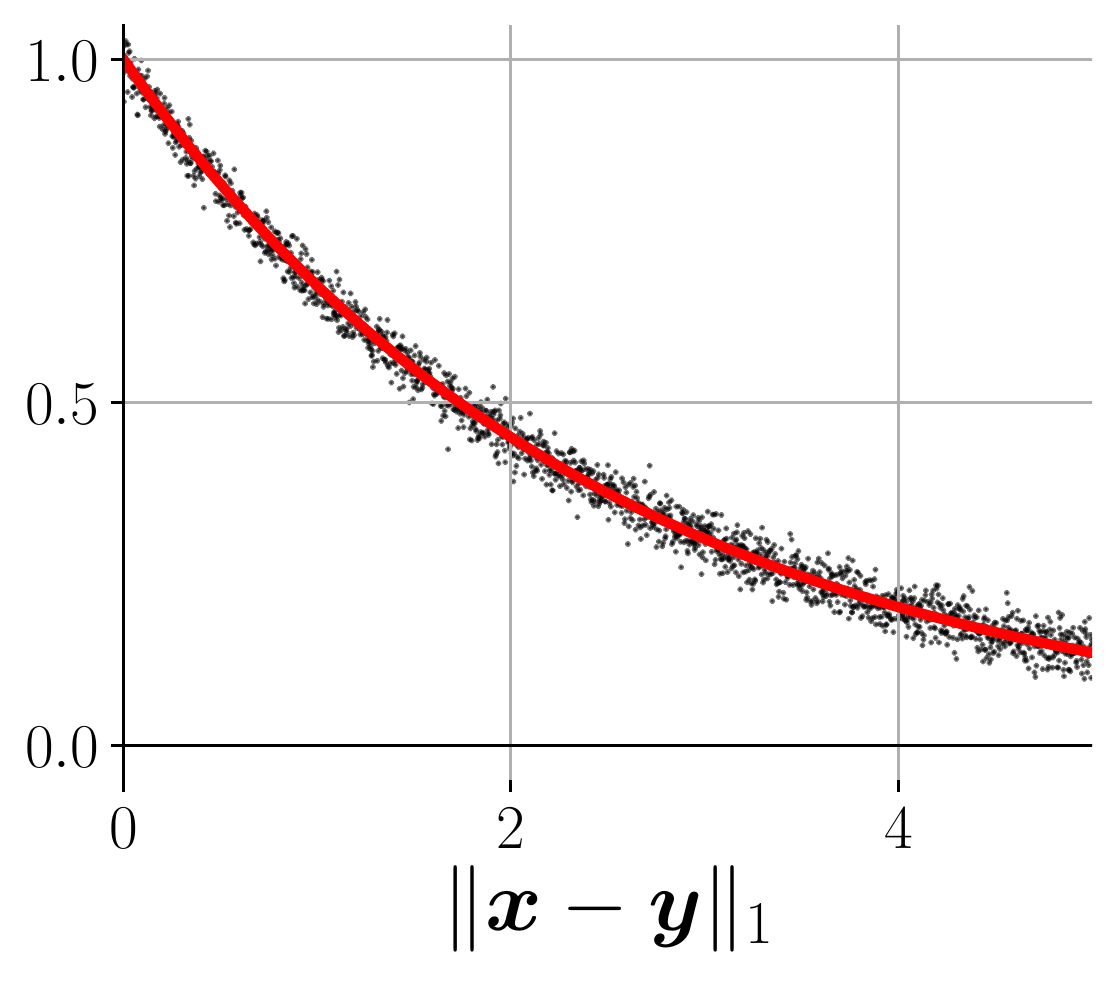}
      \insertlabel{(d)}
    \label{fig:exp1_Laplace_cc}}
  \subfloat{
    \includegraphics[width=0.3\textwidth]{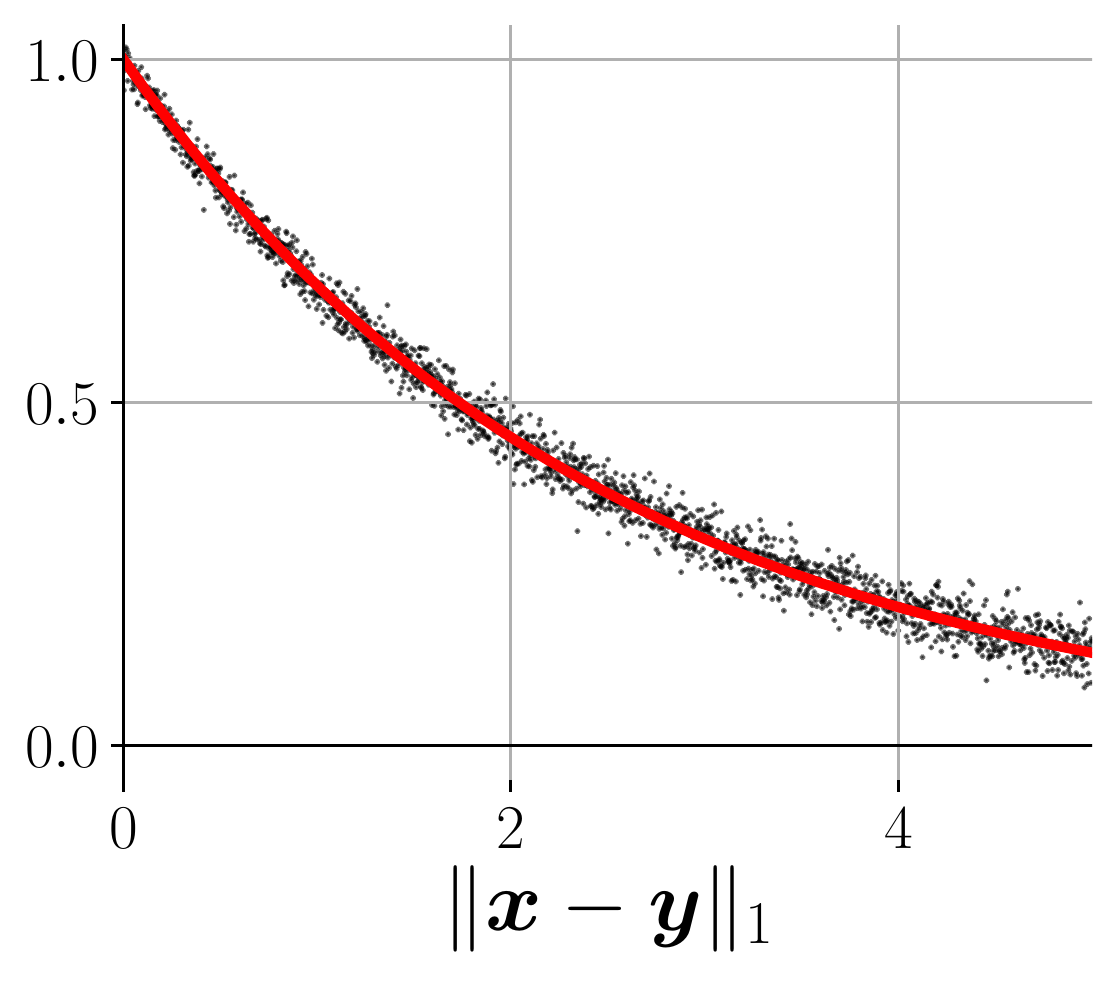}
    \insertlabel{(e)}
    \label{fig:exp1_Laplace_cq}}
  \subfloat{
    \includegraphics[width=0.3\textwidth]{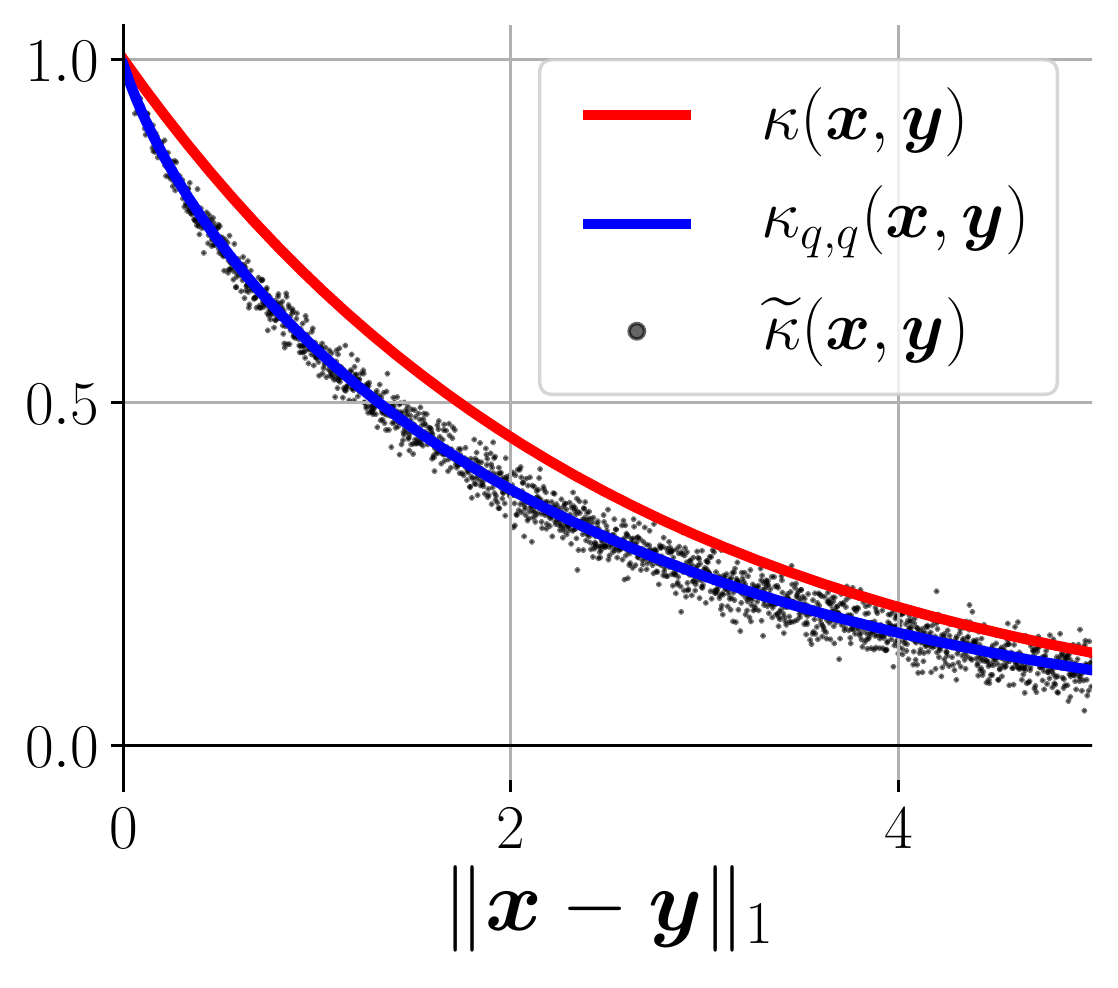}
    \insertlabel{(f)}
    \label{fig:exp1_Laplace_qq}}
  \caption{Comparison between the target kernel $\kappa(\Vec{x},\Vec{y})$ (red curves), and the approximations (the black scatter plots each evaluated over $n=200$ pairs $\{(\Vec{x}_i,\Vec{y}_i)\}_{i=1}^n$) using, for (a,d),  the plain random Fourier features $\wt{\kappa}_{\cos,\cos}(\Vec{x}_i,\Vec{y}_i) = 2 \langle \Vec{z}_{\cos}(\Vec{x}_i), \Vec{z}_{\cos}(\Vec{y}_i)\rangle$, for (b,e), our asymmetric cosine-quantized pair $\wt{\kappa}_{q,\cos}(\Vec{x}_i,\Vec{y}_i) = \frac{\pi}{2} \langle \Vec{z}_{q}(\Vec{x}_i), \Vec{z}_{\cos}(\Vec{y}_i) \rangle$, and for (c,f), only quantized features $\wt{\kappa}_{q,q}(\Vec{x}_i,\Vec{y}_i) =  \langle \Vec{z}_{q}(\Vec{x}_i), \Vec{z}_{q}(\Vec{y}_i) \rangle$. In the last case, the ``distorted'' expected kernel $\expec{}\wt{\kappa}_{q,q} = \kappa_{q,q}$ is shown in blue. For (a-c; top row), the comparison is made for the target Gaussian kernel $\kappa(\Vec{x},\Vec{y}) = \exp(-\frac{\|\Vec{x}-\Vec{y}\|_2^2}{2\sigma^2})$ with scale $\sigma = 1.5$, and the
approximated kernels use $m = 200$ random features evaluated. For (d-f; bottom row), the target kernel is the Laplace kernel $\kappa(\Vec{x},\Vec{y}) = \exp(-\frac{\|\Vec{x}-\Vec{y}\|_1}{\tau})$ with scale $\tau = 1.5$, and its different approximations are set with $m = 2000$.
  }
  \label{fig:exp1_Gaussian_Laplacian}
\end{figure}

\subsection{Quantitative analysis of the approximation error}
\label{sec:quant-analys-appr}

To perform a more quantitative analysis of the kernel approximation from Cor.~\ref{cor:main}, we perform another set of experiments that highlight the evolution of the worst-case error (associated with the hybrid estimation~\eqref{eq:semiQuantizedEstimator}),
$$
\ts \epsilon _{q,\cos}(\Sigma) := \sup_{\Vec{x},\Vec{y} \in \Sigma}|\wt{\kappa}_{q,\cos}(\Vec{x},\Vec{y}) - \kappa(\Vec{x},\Vec{y})|,
$$
as a function of $m$. In this synthetic experiment, we work with a finite set of signals $\Sigma = \{\Vec{x}_i \in \bb R^d\}_{i=1}^n$ obtained from a Gaussian distribution $\Vec{x}_i \distiid \cl N(\Vec{0},\wt{\sigma}^2\bs I_d)$ in with $\wt{\sigma} = 10$. We target a Gaussian kernel $\kappa$ of bandwidth $\sigma = 0.25$, and evaluate the absolute approximation error $\epsilon_{q,\cos}(\Sigma)$ over all vector pairs of $\Sigma$. We record the largest error encountered this way, and repeat this process for several feature dimensions~$m$.

First, we let $n = |\Sigma|$, the number of signals, vary between $10$ and $500$ (by sampling $21$ equally-spaced values for $\log_{10}(n)$), and generate a new dataset in dimension $d = 32$ each time. For each value of $m$ (varying uniformly between $100$ and $1300$), we repeat $50$ independent draws of $\bs \Omega$ and $\Vec{\xi}$ and report Fig.~\ref{fig:tmpexp21} the number of times that  $\epsilon_{q,\cos}(\Sigma) \leq \bar{\epsilon}$ for a fixed threshold $\bar{\epsilon} = 0.15$ (\ie we report the empirical ``success rate'' of the embedding). As expected, the feature space dimension $m$ needed to succeed (highlighted in red for $50\%$ success rate) scales as $O(\log n)$. We also show in dashed yellow the same transition for the worst-case error $\epsilon _{\cos,\cos}(\Sigma)$ (evaluating $|\wt{\kappa}_{\cos,\cos} - \kappa|$ over all vector pairs of $\Sigma$) committed by the plain RFF, which shows the price to pay for quantization. Roughly speaking, the same success rate is achieved for $\wt{\kappa}_{q,\cos}(\Vec{x}_i,\Vec{x}_j)$ as for $\wt{\kappa}_{\cos,\cos}(\Vec{x}_i,\Vec{x}_j)$ provided we take $\sim 33\%$ more random features, which still corresponds to a bitrate reduction for the features of the fist signal $\Vec{x}_i$. Finally, for the sake of comparison we also show in blue the same success rate but when measuring the proximity error between the two approximations (semi-quantized and usual RFF), \ie $|\wt{\kappa}_{q,\cos} - \kappa_{\cos,\cos}|$, which relates to our bound in Cor.~\ref{cor:proximity-approx-kernels}. 

Second, we fix one single dataset $\Sigma$ of $n = 200$ signals in $\bb R^5$, but record the precise value of the worst-case error $\epsilon := \epsilon _{q,\cos}(\Sigma)$ for each of the $50$ draw of $\bs \Omega, \Vec{\xi}$ at different values of $m$ (this time varying along a logarithmic scale). We display the various errors $\epsilon$ obtained as box-plots in Fig.~\ref{fig:tmpexp22}. As can be seen by comparison with the $-1/2$ slope in red, the error is controlled with high probability (discarding the outliers from the box-plots) provided that $m = O(\epsilon^{-2})$, as expected from Prop.~\ref{prop:kernelApproxUniform}. 

\begin{figure}
  \centering
  \subfloat[Empirical succes rate \emph{w.r.t.} $m$ and $n$.]{
    \includegraphics[width=0.46\textwidth]{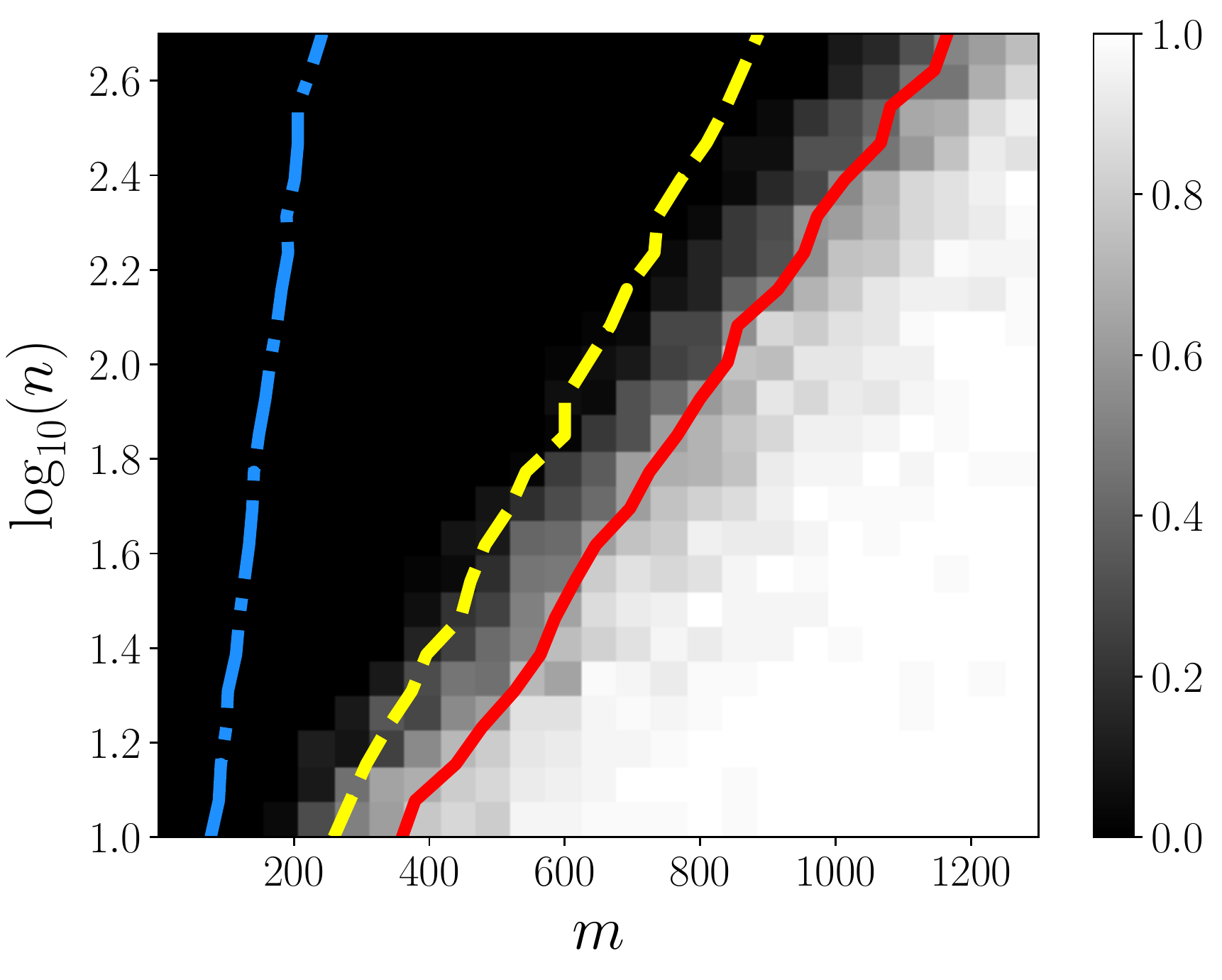}
    \label{fig:tmpexp21}}
  \hspace{12px}
  \subfloat[Largest detected error \emph{w.r.t.} $m$.]{	\raisebox{1mm}{\includegraphics[width=0.46\textwidth]{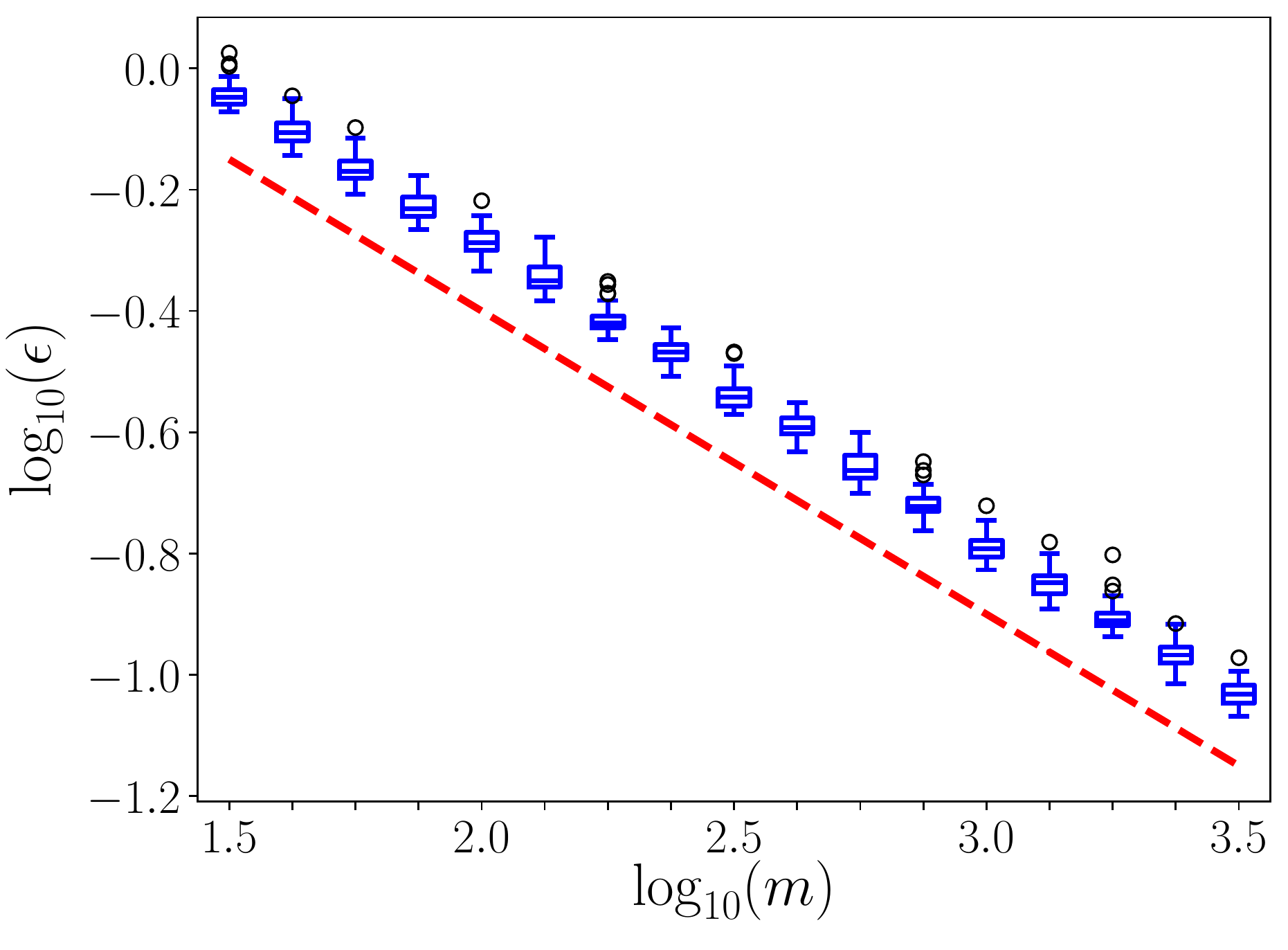}}
    \label{fig:tmpexp22}}
  \caption{(a) Empirical success rate (from $100\%$ success in white to $0\%$ in black) of the kernel approximation (defining success as $\epsilon _{q,\cos}(\Sigma) < \bar{\epsilon} = 0.15$), as a function of $m$ (log scale) for varying dataset size $n = |\Sigma|$. The transition to $50\%$ success or more is highlighted in solid red; the same curve is shown for the success rate of the classical RFF (when $\epsilon _{\cos,\cos}(\Sigma) < \bar{\epsilon}$) in dashed yellow. The blue line represents the success rate related to the proximity between those two kernel approximations, \ie when $\sup_{\Vec{x},\Vec{y} \in \Sigma}|\kappa_{q,\cos}(\Vec{x},\Vec{y}) - \kappa_{\cos,\cos}(\Vec{x},\Vec{y})| < \bar{\epsilon}$.
  (b) Largest kernel approximation error $\epsilon := \epsilon _{q,\cos}(\Sigma)$ as a function of $m$ for $50$ draws of $\bs\Omega$ and $\Vec{\xi}$ (the blue box-plots). The dashed red line shows the slope $\log_{10}(\epsilon) \sim -\frac{1}{2} \log_{10}(m) $, for reference.}
  \label{fig:tmpexp21-22}
\end{figure}

\subsection{Application: semi-quantized support vector machines}
\label{sec:appl-semi-quant}

As a last experiment, we demonstrate how the asymmetric features can be used in practice, for the particular case of Support Vector Machine classification~\cite{boser1992training,scholkopf2002learning}, where the goal is to assign a class label $y' \in \bb Z$ to new query vectors $\Vec{x}' \in \Sigma$ from labeled training data $\cl T := \{(\Vec{x}_i,y_i)\}_{i = 1}^n$. In the binary classification case (labels $y_i \in \{\pm 1\}$), given a kernel $\kappa^\sharp: \Sigma \times \Sigma \to \bb R$, the learned SVM classifier $\theta$ predicts the class of an incoming vector $\Vec{x}'$ as
\begin{equation}
  \label{eq:SVM}
\ts  \theta(\Vec{x}') = \sign \big( \sum_{i \in \cl S^*} \alpha_i y_i \, \kappa^\sharp(\Vec{x}',\Vec{x}_i) + b \big),
\end{equation}
where $\cl S^* \subset [n]$ is the index set of support vectors $\Vec{x}_i$, $0 < \alpha_i \leq R$ are the related weights, and $b$ is a bias (or ``intercept'') term. The quantities $\{\cl S^*, \alpha_i, b\}$ are the parameters to be learned during the training stage, while the kernel $\kappa^\sharp$ and regularization strength $R > 0$ (where a smaller $R$ corresponds to more regularization) are hyper-parameters to be set beforehand. In the multi-class case (where $y_i \in [N]$ for $N$ classes), we use the ``one-versus-rest\footnote{also known as ``one-versus-all''.}'' strategy where one binary classifier is trained to recognize each class.

In this experiment, given a simple classification task described below, we propose to train the SVM with a given kernel $\kappa^\sharp_{\rm L}$, and to test the classification of new samples with another kernel $\kappa^\sharp_{\rm T}$ that approximates $\kappa^\sharp_{\rm L}$, hence assessing how the classifier $\theta$ is impacted by this modification. We consider two options. In the first we train a kernel SVM on the raw data $\cl T$ with a ``true'' kernel $\kappa^\sharp_{\rm L} = \kappa$  and use the approximated kernels provided by random periodic features (setting $\kappa^\sharp_{\rm T}$  to the kernels $\wt{\kappa}_{\cos,\cos}$, $\wt{\kappa}_{q,\cos}$ and $\wt{\kappa}_{q,q}$ defined from \eqref{eq:other-normalisation}) \emph{only at the inference stage}. In this mode, which is the viewpoint we adopted in most of this work (\eg in Prop.~\ref{prop:kernelApproxUniform}), we thus interpret the RPF inner products as a means to approximate as well as possible the given kernel $\kappa$.

In a second case, we directly train a linear SVM on the cosine random Fourier features of the training set $\cl T' := \{(\Vec{z}_{\cos}(\Vec{x}_i), y_i)\}_{i = 1}^{n}$, which amounts to using $\kappa^\sharp_{\rm L} = \wt{\kappa}_{\cos,\cos}$ in \eqref{eq:SVM} as the reference kernel during training. At the testing stage,  we still set  $\kappa^\sharp_{\rm T}$ to $\wt{\kappa}_{\cos,\cos}$, $\wt{\kappa}_{q,\cos}$ and $\wt{\kappa}_{q,q}$. In this scenario, the random periodic features are rather (implicitly) used to define a specific kernel $\wt{\kappa}_{\cos,\cos}$ that generalizes as well as possible without caring about the approximation $\wt{\kappa}_{\cos,\cos} \approx \kappa$; this view is more faithful to recent research on the generalization capabilities of learning from RFF~\cite{zhang2018lowPrecisionRFF,yang2012nystrom,rudi2017generalization,gerace2020generalisation}. Our other RPF products used at the test ($\wt{\kappa}_{q,\cos}$ and $\wt{\kappa}_{q,q}$) are then to be understood as approximations to $\kappa^\sharp_{\rm L} = \wt{\kappa}_{\cos,\cos}$ rather than to the original $\kappa$, as explained in Cor.~\ref{cor:proximity-approx-kernels}, and as measured by the blue curve in Fig.~\ref{fig:tmpexp21}. 

\subsubsection*{Synthetic data}

Specifically, for both contexts, we generate a synthetic dataset of $10\,000$ samples in $\bb R^2$ by generating a mixture of $4$ Gaussians for each of the $N = 5$ different classes, separated into $n = 8000$ training and $2000$ testing samples (see Fig.~\ref{fig:exp3}, left). Regarding the true kernel, we set it to a Gaussian kernel $\kappa(\Vec{x},\Vec{x}') = \sik(\bs x - \bs x') = \exp(- \frac{\|\Vec{x}-\Vec{x}'\|_2^2}{2 \sigma^2})$ with bandwidth $\sigma = 2$, and fixed the regularization $R$ either to $5.0$ (mild regularization) or $0.25$ (strong regularization). For various feature space dimensions $m$, we generate the projections $\bs \Omega \sim \Lambda^m$ (with $\Lambda = \cl F \sik$, see Sec.~\ref{sec:background}) and dithering $\Vec{\xi} \sim \cl U^m([0,2\pi))$, and thus train (with Scikit-learn \cite{scikit-learn}) both an SVM classifier from the raw data with the Gaussian kernel, and another linear SVM from the associated cosine random Fourier features. We then evaluate these classifiers (in ``inference mode'') on the separate test set, using the different random features inner products and report the median accuracy (out of $25$ draws) as a function of $m$ in Fig.~\ref{fig:exp3}, right. In the four plots of this figure, we use a specific color coding of the RPF for both the incoming query vector $\bs x'$ and the learned SVM support vectors $\{\bs x_i\}_{i\in\cl S^*}$, as summarized in Table~\ref{tab:coding-col-fig-exp3} for convenience. Note that, according to this table, the green and the blue curves are associated with the scenarios represented in the Introduction in Fig.~\ref{fig:intro}a and Fig.~\ref{fig:intro}b, respectively, and the red curves relate to the (symmetric) quantized approach of~\cite{boufounos2015quantization}. 

\begin{table}[!t]
  \centering
  \caption{Scenarios and color coding for Fig.~\ref{fig:exp3}.}
  \scalebox{0.9}{\begin{tabular}{|c||c|c|c|}
    \hline
    Color&\multicolumn{2}{|c|}{RPF for}&Kernel at the test: $\kappa^\sharp_{\rm T}(\Vec{x}',\Vec{x}_i)$\\
    \cline{2-3}
    &Query vector $\bs x'$&Support vectors $\{\bs x_i\}_{i\in\cl S^*}$&\\
    \hline
    black&$\Vec{z}_{\cos}(\Vec{x}')$&$\Vec{z}_{\cos}(\Vec{x}_i)$&$\wt{\kappa}_{\cos,\cos}(\Vec{x}',\Vec{x}_i) = 2 \langle \Vec{z}_{\cos}(\Vec{x}'), \Vec{z}_{\cos}(\Vec{x}_i)  \rangle \approx \kappa(\Vec{x}',\Vec{x}_i)$\\
    green&$\Vec{z}_{q}(\Vec{x}')$&$\Vec{z}_{\cos}(\Vec{x}_i)$&$\wt{\kappa}_{q,\cos}(\Vec{x}',\Vec{x}_i) = \frac{\pi}{2}\,\langle \Vec{z}_{q}(\Vec{x}'), \Vec{z}_{\cos}(\Vec{x}_i)  \rangle \approx \kappa(\Vec{x}',\Vec{x}_i)$\\
    blue&$\Vec{z}_{\cos}(\Vec{x}')$&$\Vec{z}_{q}(\Vec{x}_i)$&$\wt{\kappa}_{\cos,q}(\Vec{x}',\Vec{x}_i) = \frac{\pi}{2}\,\langle \Vec{z}_{q}(\Vec{x}'), \Vec{z}_{\cos}(\Vec{x}_i)  \rangle \approx \kappa(\Vec{x}',\Vec{x}_i)$\\
    red&$\Vec{z}_{q}(\Vec{x}')$&$\Vec{z}_{q}(\Vec{x}_i)$&$\wt{\kappa}_{q,q}(\Vec{x}',\Vec{x}_i) = \langle \Vec{z}_{q}(\Vec{x}'), \Vec{z}_{\cos}(\Vec{x}_i)  \rangle$\\
    \hline
  \end{tabular}}
  \label{tab:coding-col-fig-exp3}
\end{table}

\begin{figure}[!t]
  \centering
  \includegraphics[width=0.98\linewidth]{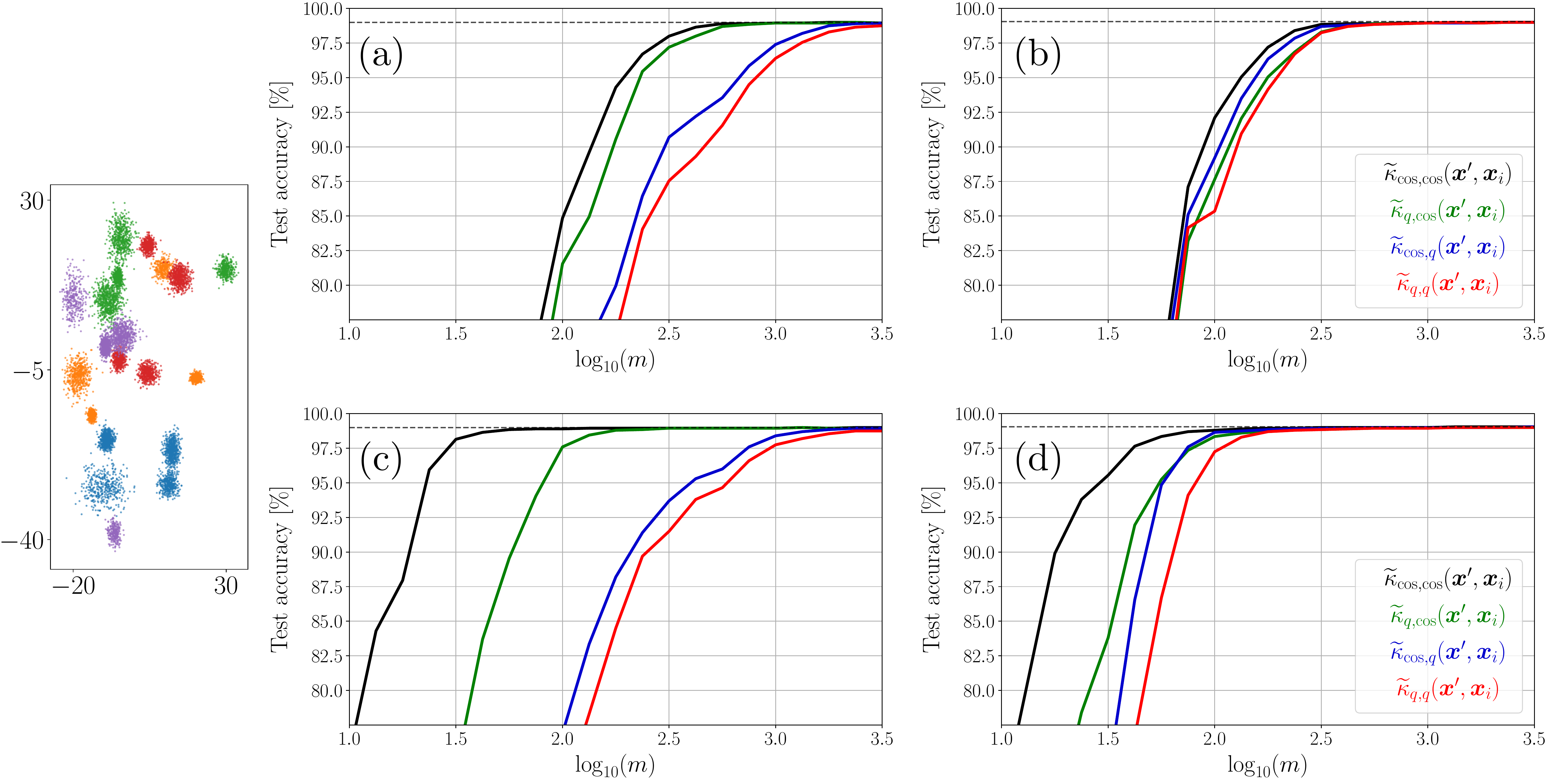}
  \caption{Left: Considered dataset (each arbitrary color corresponds to one of the 5 classes).
  Right: Test accuracy of an SVM classifier learned with a Gaussian kernel on the raw training set $\cl T$ (top (a-b)) or on their cosine RFF (bottom (c-d)), for regularization parameters $R = 5$ (weak regularization; left (a-c)) and $R = 0.25$ (strong regularization; right (b-d)), and evaluated with several RPF combinations as a function of their dimension $m$. Classification performance is measured on the test set according to the scenarios and curve colors described in Table~\ref{tab:coding-col-fig-exp3}.
  The curves are the median out of 25 independent draws of $\bs \Omega$ and $\Vec{\xi}$. The dashed line indicates the test accuracy of the exact SVM classifier (no random features are used).}
  \label{fig:exp3}
\end{figure}

When approximating the ``exact'' SVM classifier (where $\kappa^\sharp_{\rm L} = \kappa$; top row in Fig.~\ref{fig:exp3}), a substantial number of features is required to reach the same performances ($m \approx 300$ to obtain accuracy $>97.5\%$). As could be intuitively expected, the drop of accuracy is larger when more quantization of the features is being performed (the price to pay is particularly high when the support vectors are quantized, in blue and red). This difference is probably related to the fact that the SVM decision function is relatively sensitive to the position of the support vectors (in the feature space), because they directly lie on the decision boundary. Notice also the importance of regularization: although changing $R$ does not incur a noticeable change on the exact SVM accuracy (the horizontal dashed lines), it appears in Fig.~\ref{fig:exp3}b that stronger regularization improves the RPF-based classifiers. 

When the SVM is directly trained on the random Fourier features (\ie $\kappa^\sharp_{\rm L} = \tilde \kappa_{\cos,\cos}$; bottom row in Fig.~\ref{fig:exp3}) the accuracy of $\tilde \kappa_{\cos,\cos}(\Vec{x}',\Vec{x}_i)$ (black) is strongly boosted (reaching $97.5\%$ accuracy or higher with less than $m = 50$ features). The role of regularization is here exacerbated: reducing $R$ hurts the performances of the plain RFF classifier, but is necessary to maintain good accuracy with the semi-quantized schemes for a reasonable feature dimension $m$. It appears that proper regularization is needed to account for the quantization noise.

Finally, note that in all cases, the fully quantized scheme (in red) is not much worse than our semi-quantized solution with dataset quantization (the kernel mismatch shown Fig.~\ref{fig:exp1_Gaussian_qq} is apparently not too harmful in this case), but still suffers from strictly more classification errors. Therefore, it can be replaced by one of the asymmetric schemes for a negligible cost, it should be done. However, because we focus on approximating an imposed kernel without considerations for the underlying machine learning model, we did not compare to the fully-quantized case where the quantized features are already used during the training \cite{boufounos2015quantization}, \ie where the distorted kernel $\kappa_{q,q} = \bb E \wt{\kappa}_{q,q}$ is directly embraced to train the SVM rather than being used as an approximation for $\kappa$.\\
	
\subsubsection*{Real data: remote classification of hyperspectral pixels}

In a last experiment, to prove the concept of using semi-quantized features in a concrete and practical setting, we consider the problem of hyperspectral pixel classification, \ie determine the class of a spatial pixel given its electromagnetic spectral response across $d$ wavelengths. Kernel SVMs have been a quite popular solution to this challenge: our approach in particular is inspired by~\cite{gualtieri1999support,melgani2004classification} as well as~\cite{haridas2015gurls} for the use of RFF, but many more references can be found in the extensive review~\cite{mountrakis2011support}.

To have a concrete and quantifiable measure of the computational gains allowed by the quantization of RFF, we focus on the quantized query context (illustrated Fig.~\ref{fig:intro}a). More precisely, we consider the scenario of an aircraft (or satellite) equipped with a hyperspectral sensor that must send its readings $\Vec{x}'$ for remote classification of the pixels it observes. We assume this task is entrusted to a kernel SVM, involving a weighted sum of $\kappa^\sharp_{\rm T}(\Vec{x}',\Vec{x}_i)$ terms. Since the communication link between the satellite and the remote server is presumably costly, it is important that the number of bits used to encode this query, noted $b$, is as small as possible. 

We compare three strategies. First, the baseline strategy is to send the ``raw'' measurements $\Vec{x}' \in \bb R^d$, which requires $b = Bd$ bits, where $B$ designates the bit-depth of full-precision readings (we consider $B = 64$ bits in our experiments). In this strategy, the kernel in the learning and testing stages is the ``true'' Gaussian kernel, \ie $\kappa^\sharp_{\rm L} = \kappa^\sharp_{\rm T} = \kappa$. Second, the usual RFF strategy is to send the full-precision RFF $\Vec{z}_{\cos}(\Vec{x}') \in \bb R^m$, which requires $b = B m$ bits. Following the observations from the previous experiments, we both learn and test on the RFF kernel, \ie $\kappa^\sharp_{\rm L} = \kappa^\sharp_{\rm T} = \wt{\kappa}_{\cos,\cos}$. Finally, the quantized RFF query strategy is to send the quantized RFF $\Vec{z}_{q}(\Vec{x}') \in \{ - \frac{1}{\sqrt{m}}, + \frac{1}{\sqrt{m}} \}^m$, which takes up $b = m$ bits\footnote{Strictly speaking, we would actually need to transmit $\frac{1}{2}(\sqrt m \Vec{z}_{q}(\Vec{x}') +1) \in \{0,1\}^m$ and to remotely recover $\Vec{z}_{q}(\Vec{x}')$ from this binary stream, assuming $m$ is known to the receiver.}. Although the kernel at test time is now the hybrid product $\kappa^\sharp_{\rm T} = \wt{\kappa}_{q,\cos}$, we still learn using the usual RFF kernel $\kappa^\sharp_{\rm L} = \wt{\kappa}_{\cos,\cos}$. Note that to simplify the comparison, we thus consider only a naive encoding of the query, neglecting the use of \eg entropy coding strategies.

We use the standard Indian Pines dataset~\cite{IndianPinesDataset}, a hyperspectral volume which contains $10\,249$ labeled pixels\footnote{The size of the full volume is $145 \times 145$ which gives $21025$ pixels in total, but many of them are unlabeled, which we discard for this experiment.}, measured across $d=200$ wavelengths\footnote{The initial volume contains $220$ wavelengths, but following the workflow commonly adopted with this dataset, we removed the water absorption bands (\ie the spectral indices [104-108], [150-163], and 220 from the initial dataset).}, separated into $N = 16$ classes (see Fig.~\ref{fig:indian_pines}). We first separated $20\%$ of those pixels into testing set, which left $n = 8204$ pixels for training. In order to select the hyper-parameters (kernel bandwidth $\sigma$ and regularization strength $R$), following their sensitivity observed in the previous experiment, we performed a separate cross-validation (with 5 folds from the training set) for each of the three individual strategies.
We then evaluated the test set accuracy reached by each strategy, while letting $m$ vary for the two RFF-based strategies, and report the results Fig.~\ref{fig:exp4_subfig}. In this figure, the baseline strategy is represented by the red dot, and the usual (resp. quantized) RFF strategies are represented by the black (resp. green) solid curves, which are obtained by varying $m$.

\begin{figure}[!t]
	
	\centering
	\subfloat[Indian Pines dataset.]{
		\includegraphics[width=0.42\textwidth]{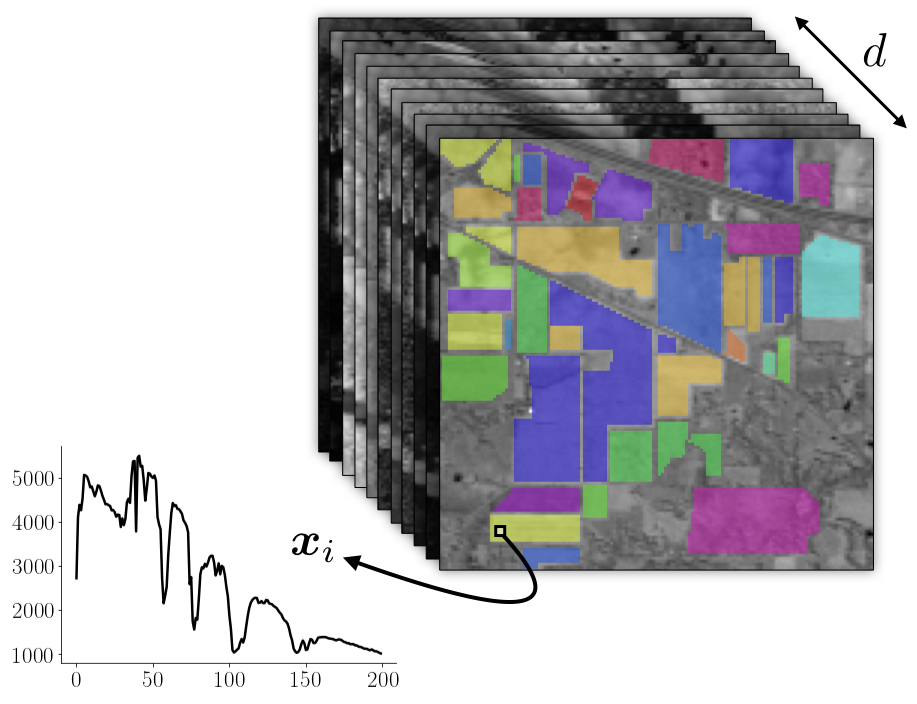}
		\label{fig:indian_pines}}
	\hspace{12px}
	\subfloat[Test accuracy \emph{w.r.t.} bit-rate $b$.]{	\raisebox{1mm}{\includegraphics[width=0.50\textwidth]{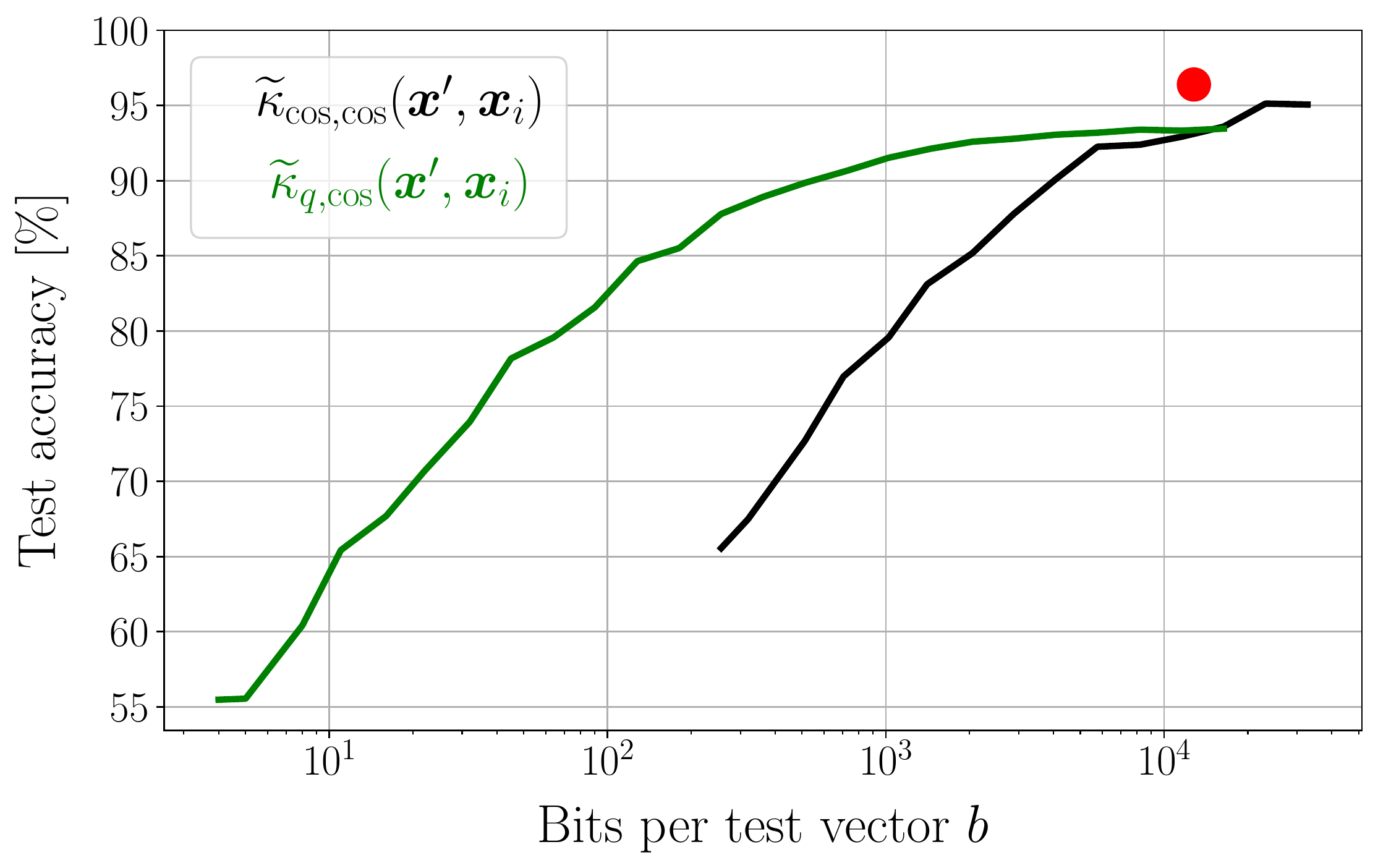}}
		\label{fig:exp4_subfig}}
	\caption{(a) Schematic representation of the Indian Pines hyperspectral volume, containing $10\,249$ labeled pixels $\Vec{x}_i \in \bb R^d$, which feature $d = 200$ wavelengths.
	(b) Test accuracy as a function of the number of bits $b$ to be transmitted to the remote server, for each of the three considered strategies: the baseline strategy (red dot) sends the raw test vector $\Vec{x}'$; the RFF strategy (solid black) sends the full-precision RFF $\Vec{z}_{\cos}(\Vec{x}')$ of varying size $m$; and the quantized RFF strategy (solid green) sends the one-bit equivalent $\Vec{z}_{q}(\Vec{x}')$.
	The curves are the median out of 30 independent draws of $\bs \Omega$ and $\Vec{\xi}$.}
	\label{fig:exp4}
\end{figure}

The baseline (red) achieves the best accuracy overall, but at the price of a quite substantial bandwidth usage. When using full-precision random features (black), the accuracy is only slightly reduced, but only if a relatively large number of random features $m$ is used, which does bring substantial bitrate reduction. Indeed, to reduce the bandwidth $b$ by, say, an order of magnitude, the full-precision RFF strategy must sacrifice more than $10\%$ accuracy, which is probably not acceptable in practice. On the other hand, with the quantized query RFF strategy, we are able to achieve this same bitrate reduction by an order of magnitude at the cost of only about $4\%$, which sounds more reasonable. Overall, (keeping in mind that more involved compression methods could be applied to transmit the raw measurement $\bs x'$ in our scenario above, and hence still apply the first classification strategy after decompression) the quantized RFF strategy performs better whenever the bitrate $b$ is significantly smaller than the baseline bitrate, hence showing the potential of the approach.

\section{Conclusion}
\label{sec:conclusion}

We introduced the framework of asymmetric random periodic features, where random projections are passed through two different periodic maps, and whose inner products are used to approximate a kernel. We provided an expression of this kernel, with a uniform error bound holding on infinite compact signal sets, provided the periodic maps satisfy a property we called the \emph{mean smoothness}. The mean smoothness holds for some discontinuous maps such as the one-bit universal quantization (a square wave). As a first (theoretical) application of those developments, we generalized the local geometry-preserving embeddings from~\cite{boufounos2017representation}, and corrected an error in their main result in the process. For a second more practical application, we studied (theoretically and empirically) semi-quantized kernel approximations, and showed how the impact of quantization can be controlled.

As highlighted by our last experiments, these theoretical guarantees do not necessarily ensure an accurate control over the generalization performances in a machine learning context. Indeed, it seems crucial to incorporate the random periodic features directly into the training stage, and to anticipate that some features might be quantized later (for example, when picking the regularization strength). We leave for future work the question of efficiently incorporating the asymmetric random periodic features strategy during the training stage. 

In the two applicative scenarii we proposed in the Introduction (Fig.~\ref{fig:intro}), the ultimate objective is to improve the bitrate-kernel accuracy trade-off. Allowing a finer (but still coarse) quantization of the RFF (\ie coding each entry on $b > 1$ bits instead of $1$ as we proposed, for example with the $b$-bit universal quantization~\cite{boufounos2017representation}) is a promising idea to reach a better trade-off. Since the mean smoothness of these finer quantization functions can also be verified, our results would carry over easily to this multi-bit quantization of the RFF, but a detailed analysis of this approach is also left for future work. More generally, the RPF with other periodic maps might also be worth studying. 
Finally, let us mention that our results could be used to obtain formal guarantees in the context of compressive learning \cite{gribonval2017compressiveStatisticalLearning,keriven2016compressive}, where the dataset sketch (which pools the RPF of each dataset sample; see Sec.~\ref{sec:related-work}) and the algorithm estimating the dataset distribution (such as CLOMPR\cite{keriven2016compressive}) relies on asymmetric periodic functions, \eg the universal quantizer and the cosine function \cite{schellekens2018quantized}.  

\bibliographystyle{unsrt}


\begin{thebibliography}{10}

\bibitem{foucart2017mathematical}
Simon Foucart and Holger Rauhut.
\newblock A mathematical introduction to compressive sensing.
\newblock {\em Bull. Am. Math}, 54:151--165, 2017.

\bibitem{Rahimi2008RFF}
Ali Rahimi and Benjamin Recht.
\newblock {Random Features for Large-Scale Kernel Machines}.
\newblock In J.~C. Platt, D.~Koller, Y.~Singer, and S.~T. Roweis, editors, {\em
  Advances in Neural Information Processing Systems 20}, pages 1177--1184.
  Curran Associates, Inc., 2008.

\bibitem{boufounos2015quantization}
Petros~T Boufounos, Laurent Jacques, Felix Krahmer, and Rayan Saab.
\newblock Quantization and compressive sensing.
\newblock In {\em Compressed sensing and its applications}, pages 193--237.
  Springer, 2015.

\bibitem{gunturk2013sobolev}
C~Sinan G{\"u}nt{\"u}rk, Mark Lammers, Alexander~M Powell, Rayan Saab, and
  {\"O}~Y{\i}lmaz.
\newblock Sobolev duals for random frames and $\sigma$$\delta$ quantization of
  compressed sensing measurements.
\newblock {\em Foundations of Computational mathematics}, 13(1):1--36, 2013.

\bibitem{dirksen2019quantized}
Sjoerd Dirksen.
\newblock Quantized compressed sensing: a survey.
\newblock In {\em Compressed Sensing and Its Applications}, pages 67--95.
  Springer, 2019.

\bibitem{boufounos2015universal}
Petros~T Boufounos and Hassan Mansour.
\newblock Universal embeddings for kernel machine classification.
\newblock In {\em 2015 International Conference on Sampling Theory and
  Applications (SampTA)}, pages 307--311. IEEE, 2015.

\bibitem{Boufounos2013efficientCodingQuantized}
Petros~T. Boufounos and Shantanu Rane.
\newblock {Efficient Coding of Signal Distances Using Universal Quantized
  Embeddings}.
\newblock In {\em 2013 Data Compression Conference}, pages 251--260, March
  2013.

\bibitem{scholkopf2002learning}
Bernhard Sch{\"o}lkopf, Alexander~J Smola, Francis Bach, et~al.
\newblock {\em Learning with kernels: support vector machines, regularization,
  optimization, and beyond}.
\newblock MIT press, 2002.

\bibitem{boser1992training}
Bernhard~E Boser, Isabelle~M Guyon, and Vladimir~N Vapnik.
\newblock A training algorithm for optimal margin classifiers.
\newblock In {\em Proceedings of the fifth annual workshop on Computational
  learning theory}, pages 144--152, 1992.

\bibitem{yoon2008time}
Young-Gyu Yoon, Jaewook Kim, Tae-Kwang Jang, and SeongHwan Cho.
\newblock A time-based bandpass {ADC} using time-interleaved voltage-controlled
  oscillators.
\newblock {\em IEEE Transactions on Circuits and Systems I: Regular Papers},
  55(11):3571--3581, 2008.

\bibitem{jegou2010product}
Herve Jegou, Matthijs Douze, and Cordelia Schmid.
\newblock Product quantization for nearest neighbor search.
\newblock {\em IEEE transactions on pattern analysis and machine intelligence},
  33(1):117--128, 2010.

\bibitem{raginsky2009locality}
Maxim Raginsky and Svetlana Lazebnik.
\newblock Locality-sensitive binary codes from shift-invariant kernels.
\newblock In {\em Advances in neural information processing systems}, pages
  1509--1517, 2009.

\bibitem{boufounos2017representation}
Petros~T. Boufounos, Shantanu Rane, and Hassan Mansour.
\newblock Representation and coding of signal geometry.
\newblock {\em Information and Inference: A Journal of the IMA}, 6(4):349--388,
  2017.

\bibitem{zhang2018lowPrecisionRFF}
Jian Zhang, Avner May, Tri Dao, and Christopher R{\'e}.
\newblock {Low-Precision Random Fourier Features for Memory-Constrained Kernel
  Approximation}.
\newblock {\em arXiv preprint arXiv:1811.00155}, 2018.

\bibitem{ryder2019asymmetric}
Nick Ryder, Zohar Karnin, and Edo Liberty.
\newblock Asymmetric random projections.
\newblock {\em arXiv preprint arXiv:1906.09489}, 2019.

\bibitem{dong2008asymmetric}
Wei Dong, Moses Charikar, and Kai Li.
\newblock Asymmetric distance estimation with sketches for similarity search in
  high-dimensional spaces.
\newblock In {\em Proceedings of the 31st annual international ACM SIGIR
  conference on Research and development in information retrieval}, pages
  123--130, 2008.

\bibitem{gordo2013asymmetric}
Albert Gordo, Florent Perronnin, Yunchao Gong, and Svetlana Lazebnik.
\newblock Asymmetric distances for binary embeddings.
\newblock {\em IEEE transactions on pattern analysis and machine intelligence},
  36(1):33--47, 2013.

\bibitem{li2019random}
Xiaoyun Li and Ping Li.
\newblock Random projections with asymmetric quantization.
\newblock In {\em Advances in Neural Information Processing Systems}, pages
  10857--10866, 2019.

\bibitem{li2019sign}
Ping Li.
\newblock Sign-full random projections.
\newblock In {\em Proceedings of the AAAI Conference on Artificial
  Intelligence}, volume~33, pages 4205--4212, 2019.

\bibitem{gribonval2017compressiveStatisticalLearning}
R.~{Gribonval}, G.~{Blanchard}, N.~{Keriven}, and Y.~{Traonmilin}.
\newblock {Compressive Statistical Learning with Random Feature Moments}.
\newblock {\em ArXiv e-prints}, June 2017.

\bibitem{keriven2016compressive}
Nicolas Keriven et~al.
\newblock Compressive {K}-means.
\newblock In {\em 2017 IEEE International Conference on Acoustics, Speech and
  Signal Processing (ICASSP)}, pages 6369--6373. IEEE, 2017.

\bibitem{schellekens2018quantized}
Vincent Schellekens and Laurent Jacques.
\newblock Quantized compressive {K}-means.
\newblock {\em IEEE Signal Processing Letters}, 25(8):1211--1215, 2018.

\bibitem{kolmogorov1961}
AN~Kolmogorov and VM~Tihomirov.
\newblock {$\epsilon$-entropy and $\epsilon$-capacity of sets in functional
  space}.
\newblock {\em Amer. Math. Soc. Transl.(2)}, 17:277--364, 1961.

\bibitem{pisier1999volume}
Gilles Pisier.
\newblock {\em The volume of convex bodies and Banach space geometry},
  volume~94.
\newblock Cambridge University Press, 1999.

\bibitem{baraniuk2008simple}
Richard Baraniuk, Mark Davenport, Ronald DeVore, and Michael Wakin.
\newblock A simple proof of the restricted isometry property for random
  matrices.
\newblock {\em Constructive Approximation}, 28(3):253--263, 2008.

\bibitem{ayaz2016uniform}
Ula{\c{s}} Ayaz, Sjoerd Dirksen, and Holger Rauhut.
\newblock Uniform recovery of fusion frame structured sparse signals.
\newblock {\em Applied and Computational Harmonic Analysis}, 41(2):341--361,
  2016.

\bibitem{candes2011tight}
Emmanuel~J Candes and Yaniv Plan.
\newblock Tight oracle inequalities for low-rank matrix recovery from a minimal
  number of noisy random measurements.
\newblock {\em IEEE Transactions on Information Theory}, 57(4):2342--2359,
  2011.

\bibitem{eftekhari2015new}
Armin Eftekhari and Michael~B Wakin.
\newblock New analysis of manifold embeddings and signal recovery from
  compressive measurements.
\newblock {\em Applied and Computational Harmonic Analysis}, 39(1):67--109,
  2015.

\bibitem{Rudin1962bochnerBook}
Walter Rudin.
\newblock {\em Fourier Analysis on Groups}.
\newblock Interscience Publishers, 1962.

\bibitem{sutherland2015error}
Dougal~J Sutherland and Jeff Schneider.
\newblock On the error of random {F}ourier features.
\newblock In {\em Proceedings of the Thirty-First Conference on Uncertainty in
  Artificial Intelligence (UAI}, pages 862--871, 2015.

\bibitem{sriperumbudur2015optimal}
Bharath Sriperumbudur and Zolt{\'a}n Szab{\'o}.
\newblock Optimal rates for random {F}ourier features.
\newblock In {\em Advances in Neural Information Processing Systems}, pages
  1144--1152, 2015.

\bibitem{Vershynin12}
Roman Vershynin.
\newblock Introduction to the non-asymptotic analysis of random matrices.
\newblock In Yonina~C. Eldar and Gitta Kutyniok, editors, {\em Compressed
  Sensing}, pages 210--268. Cambridge University Press, 2012.

\bibitem{BCD10}
Richard~G Baraniuk, Volkan Cevher, Marco~F Duarte, and Chinmay Hegde.
\newblock Model-based compressive sensing.
\newblock {\em IEEE Transactions on information theory}, 56(4):1982--2001,
  2010.

\bibitem{JC17timefordithering}
Laurent Jacques and Valerio Cambareri.
\newblock Time for dithering: fast and quantized random embeddings via the
  restricted isometry property.
\newblock {\em Information and Inference: A Journal of the IMA}, 6(4):441--476,
  2017.

\bibitem{puy2017recipes2}
Gilles Puy, Mike~E Davies, and R{\'e}mi Gribonval.
\newblock Recipes for stable linear embeddings from {H}ilbert spaces to
  {$\mathbb{R}^m$}.
\newblock {\em IEEE Transactions on Information Theory}, 63(4):2171--2187,
  2017.

\bibitem{szasz1937fourier}
Otto Sz{\'a}sz.
\newblock Fourier series and mean moduli of continuity.
\newblock {\em Transactions of the American Mathematical Society},
  42(3):366--395, 1937.

\bibitem{wik1972criteria}
Ingemar Wik.
\newblock Criteria for absolute convegence of {F}ourier series of functions of
  bounded variation.
\newblock {\em Transactions of the American Mathematical Society}, 163:1--24,
  1972.

\bibitem{chandrasekaran2012convex}
Venkat Chandrasekaran, Benjamin Recht, Pablo~A Parrilo, and Alan~S Willsky.
\newblock The convex geometry of linear inverse problems.
\newblock {\em Foundations of Computational mathematics}, 12(6):805--849, 2012.

\bibitem{Jacques_2013}
Laurent Jacques, Jason~N. Laska, Petros~T. Boufounos, and Richard~G. Baraniuk.
\newblock Robust 1-bit compressive sensing via binary stable embeddings of
  sparse vectors.
\newblock {\em {IEEE} Transactions on Information Theory}, 59(4):2082--2102,
  apr 2013.

\bibitem{otero2011generalized}
Daniel Otero and Gonzalo~R Arce.
\newblock Generalized restricted isometry property for alpha-stable random
  projections.
\newblock In {\em 2011 IEEE International Conference on Acoustics, Speech and
  Signal Processing (ICASSP)}, pages 3676--3679. IEEE, 2011.

\bibitem{yang2012nystrom}
Tianbao Yang, Yu-Feng Li, Mehrdad Mahdavi, Rong Jin, and Zhi-Hua Zhou.
\newblock Nystr{\"o}m method vs random {F}ourier features: A theoretical and
  empirical comparison.
\newblock In {\em Advances in neural information processing systems}, pages
  476--484, 2012.

\bibitem{rudi2017generalization}
Alessandro Rudi and Lorenzo Rosasco.
\newblock Generalization properties of learning with random features.
\newblock In {\em Advances in Neural Information Processing Systems}, pages
  3215--3225, 2017.

\bibitem{gerace2020generalisation}
Federica Gerace, Bruno Loureiro, Florent Krzakala, Marc M{\'e}zard, and Lenka
  Zdeborov{\'a}.
\newblock Generalisation error in learning with random features and the hidden
  manifold model.
\newblock {\em arXiv preprint arXiv:2002.09339}, 2020.

\bibitem{scikit-learn}
F.~Pedregosa, G.~Varoquaux, A.~Gramfort, V.~Michel, B.~Thirion, O.~Grisel,
  M.~Blondel, P.~Prettenhofer, R.~Weiss, V.~Dubourg, J.~Vanderplas, A.~Passos,
  D.~Cournapeau, M.~Brucher, M.~Perrot, and E.~Duchesnay.
\newblock Scikit-learn: Machine learning in {P}ython.
\newblock {\em Journal of Machine Learning Research}, 12:2825--2830, 2011.

\bibitem{gualtieri1999support}
JA~Gualtieri, Samir~R Chettri, RF~Cromp, and LF~Johnson.
\newblock Support vector machine classifiers as applied to {AVIRIS} data.
\newblock In {\em Proc. Eighth JPL Airborne Geoscience Workshop}, 1999.

\bibitem{melgani2004classification}
Farid Melgani and Lorenzo Bruzzone.
\newblock Classification of hyperspectral remote sensing images with support
  vector machines.
\newblock {\em IEEE Transactions on geoscience and remote sensing},
  42(8):1778--1790, 2004.

\bibitem{haridas2015gurls}
Nikhila Haridas, V~Sowmya, and KP~Soman.
\newblock Gurls vs libsvm: Performance comparison of kernel methods for
  hyperspectral image classification.
\newblock {\em Indian Journal of Science and Technology}, 8(24):1, 2015.

\bibitem{mountrakis2011support}
Giorgos Mountrakis, Jungho Im, and Caesar Ogole.
\newblock Support vector machines in remote sensing: A review.
\newblock {\em ISPRS Journal of Photogrammetry and Remote Sensing},
  66(3):247--259, 2011.

\bibitem{IndianPinesDataset}
Marion~F. Baumgardner, Larry~L. Biehl, and David~A. Landgrebe.
\newblock {220 Band AVIRIS Hyperspectral Image Data Set: June 12, 1992 Indian
  Pine Test Site 3}, Sep 2015.

\end{thebibliography}

\end{document}